\documentclass[11pt,a4paper]{article}
\usepackage{calligra}

\usepackage[T1]{fontenc}
\usepackage[normalem]{ulem}
\usepackage{placeins}
\usepackage{tikz}
\usepackage{amsmath,amsthm}
\usepackage{amssymb}
\usepackage{graphicx}
\usepackage{dsfont}
\usepackage{fullpage}
\usepackage{multirow}
\usepackage[affil-it]{authblk}
\usepackage{braket}

\usepackage{enumerate}
\usepackage[shortlabels]{enumitem}

\usepackage{mathtools}
\usepackage{nicefrac}

\usepackage[backend=biber,doi=false,url=false,maxbibnames=5,style=alphabetic]{biblatex}
\renewbibmacro{in:}{}
\addbibresource{references.bib}

\usepackage[toc,page]{appendix}
\usepackage{tocloft}
\definecolor{linkblue}{HTML}{001487}

\usepackage[colorlinks, allcolors=linkblue, breaklinks=true]{hyperref}
\urlstyle{same} 
\pdfstringdefDisableCommands{\def\eqref#1{(\ref{#1})}}

\newtheorem{theorem}{Theorem}[section]

\newtheorem{proposition}[theorem]{Proposition}
\newtheorem{corollary}[theorem]{Corollary}
\theoremstyle{remark}

\newtheorem*{remark}{Remark}

\usepackage{cleveref}

\usepackage{accents}

\newcommand{\cH}{\mathcal{H}}
\newcommand{\cD}{\mathcal{D}}
\newcommand{\D}{\mathcal{D}}
\newcommand{\Dpure}{\D_\text{pure}}

\DeclareMathAlphabet{\mathpzc}{OT1}{pzc}{m}{it}
 \newcommand{\cs}{\mathpzc{s}}

\newcommand{\cP}{\mathcal{P}}

\newcommand{\cL}{\mathcal{L}}

\newcommand{\cB}{\mathcal{B}}
\newcommand{\Bsa}{\mathcal{B}_\text{sa}}

\newcommand{\bE}{\mathbb{E}}

\newcommand{\one}{\mathds{1}}
\newcommand{\eps}{\varepsilon}

\newcommand{\RR}{\mathbb{R}}
\newcommand{\R}{\mathbb{R}}

\newcommand{\gen}{\mathcal{L}}

\DeclareMathOperator{\tr}{Tr}

\DeclareMathOperator{\id}{id}

\DeclareMathOperator{\bin}{bin}

\DeclareMathOperator{\Eig}{Eig}
\DeclareMathOperator{\diag}{diag}

\renewcommand{\epsilon}{\varepsilon}
\newcommand{\Be}{B_\varepsilon}

\renewcommand{\one}{\mathds{1}}

\DeclareMathOperator{\spec}{spec}

\newcommand{\inv}{^{-1}}

\renewcommand{\d}{\operatorname{d}\!}

\newcommand{\mmm}{\mathcal{M}} 

\newcommand{\e}{\mathrm{e}}
\DeclareMathOperator{\TV}{TV}
\newcommand{\typeone}{Concave-Type}
\newcommand{\typetwo}{Convex-Type}

\newcommand{\unorm}[1]{{\left\vert\kern-0.25ex\left\vert\kern-0.25ex\left\vert #1 
    \right\vert\kern-0.25ex\right\vert\kern-0.25ex\right\vert}}
\begin{document}
\title{Universal proofs of entropic continuity bounds\\ via majorization flow}

\author{Eric P. Hanson\thanks{Email: \texttt{ericpatrickhanson@gmail.com}} }
\author{Nilanjana Datta\thanks{Email: \texttt{n.datta@damtp.cam.ac.uk}}}
\affil{\small Department of Applied Mathematics and Theoretical Physics\\University of Cambridge}
\date{July 22, 2021}

\maketitle

\begin{abstract}
We introduce a notion of \emph{majorization flow}, and demonstrate it to be a powerful tool for deriving simple and universal proofs of continuity bounds for Schur concave functions with a particular emphasis on entropic functions relevant in information theory. In particular, for the case of the $\alpha$-R\'enyi entropy, whose connections to thermodynamics are discussed in this article, majorization flow yields a Lipschitz continuity bound for the case $\alpha >1$, thus resolving an open problem and providing a substantial improvement over previously known bounds. 

\end{abstract}

\section{Introduction}

Majorization is an ordering of vectors\footnote{Technically, it is a \emph{pre-order}.} which provides a means to describe one vector as being more disordered, less disordered, or incomparable with another vector. Majorization has natural connections to entropies, which quantify disorder or randomness of probability distributions or quantum states, and thus provide a measure of information encoded in a random variable or a quantum state. In fact, almost every entropy considered in the information-theoretic literature is Schur concave, meaning if one vector is more disordered than another, according to the majorization order, then it has higher entropy than the other. Perhaps as a consequence, majorization has proven to be a very useful tool in quantum and classical information theory.

In quantum information theory, majorization plays a key role in the theory of bipartite pure state entanglement due to Nielsen's theorem \cite{Nielsen-LOCC}. This result states that given two pure states $\psi_{AB}$ and $\phi_{AB}$ of a bipartite system $AB$, the state $\psi_{AB}$ can be transformed into $\phi_{AB}$ via local operations and classical communication if and only if the reduced state of $\psi_{AB}$ for a subsystem ($A$ or $B$) is majorized by the corresponding reduced state of $\phi_{AB}$.

Entropies provide a way to quantify uncertainty, and hence play a vital role in information theory. In classical information theory, the Shannon entropy gives the data compression limit of a source, while the analogous von Neumann entropy gives the data compression limit of a quantum information source. The classical (resp. quantum) R\'enyi entropies \cite{Ren61} and Tsallis entropies \cite{Tsallis1988} constitute families of entropies which generalize the Shannon entropy (resp. von Neumann entropy). In quantum information theory, entropies are also used to quantify entanglement, e.g.~the \emph{entanglement entropy} of a bipartite pure state is the entropy of one of its marginals.

Recently, it has been shown that the majorization order on probability vectors behaves well with respect to the total variation distance, in the sense that given any probability vector $p$ and $\eps > 0$, there exists a {\em{minimal}} and {\em{maximal probability vector}} in majorization order in the $\eps$-ball around $p$ with respect to the total variation distance; both these probability vectors were independently found in \cite{HD17,HOS18}, and in fact the minimal vector was already found in \cite{HY10}). In the quantum setting, this provides a natural connection between majorization and trace distance (and between majorization and the total variation distance in the classical setting). This connection was exploited to establish local entropic continuity bounds in \cite{HD17}. In this work, we develop a notion of \emph{majorization flow}, which traces out the path of the minimizer in majorization order over the $\eps$-ball as $\eps$ is changed infinitesimally.

Majorization flow turns out to be a very useful tool for obtaining remarkably simple and universal proofs of continuity bounds for numerous well-known families of entropies, including R\'enyi entropies \cite{Ren61}, Tsallis entropies \cite{Tsallis1988}, the so-called unified entropies \cite{RT91}, entropies induced by $f$-divergences \cite{Pet85}, and the concurrence \cite{Woo01}. In particular, it allows us to establish {\em{Lipschitz continuity bounds}} for the $\alpha$-R\'enyi entropy with $\alpha > 1$. This resolves a longstanding open problem, which was also presented at the Open Problem Session of the workshop \emph{2017 Beyond I.I.D. in Information Theory} (held in Singapore), and provides a substantial improvement over previously known bounds.

\paragraph{Overview and summary of contributions}
\begin{itemize}
	\item In \Cref{sec:majflow}, we introduce the notion of \emph{majorization flow}, and show how it can yield short, simple and universal proofs of continuity bounds for entropic functions. In \Cref{thm:integral-formulas} we also introduce two important relations \eqref{eq:mmm-integral-formula} and \eqref{eq:H-integral-formula} satisfied by the majorization flow. The first describes the path traced out by the majorization flow in terms of its generator, while the second gives an integral formula for the change of an entropic function in terms of a directional derivative along the flow. We also provide a necessary and sufficient condition for a Schur concave function which is continuously differentiable on the interior of the probability simplex to be Lipschitz continuous on the whole probability simplex in \Cref{cor:opt-Lip}.
	\item In \Cref{sec:q-to-c}, we show the equivalence between continuity bounds for quantum states and for classical probability distributions for single-partite entropies.
	\item In \Cref{sec:gen-and-proofs}, we describe the generator of the majorization flow with respect to total variation distance, and prove \Cref{thm:integral-formulas}.
	\item In \Cref{sec:cty-from-majflow}, we recall the definition of a large family of entropies called the $(h,\phi)$-entropies \cite{SMMP93a}, which include most of the known entropic quantities which are of relevance in information theory. We  consider two subclasses of $(h,\phi)$-entropies, namely \typeone{} $(h,\phi)$-entropies and \typetwo{} $(h,\phi)$-entropies. In \Cref{thm:hphi-GCB}, we establish tight uniform continuity bounds for all \typeone{} $(h,\phi)$-entropies by using majorization flow, and provide necessary and sufficient conditions for such an entropy to be Lipschitz continuous in \Cref{prop:typ1-Lip}. The latter proposition also establishes that all {\em{smoothed}} \typeone{} $(h,\phi)$-entropies are Lipschitz continuous. In \Cref{thm:Convex-type-Lip}, we state necessary and sufficient conditions for \typetwo{} $(h,\phi)$-entropies to be Lipschitz continuous, and  provide bounds on their optimal Lipschitz constant.
	
	In \Cref{sec:tsallis-renyi}, we apply these results to Tsallis and R\'enyi entropies. In \Cref{cor:Tsallis-Lipschitz}, we prove that $\alpha$-Tsallis entropies are Lipschitz continuous if and only if $\alpha>1$, and determine their optimal Lipschitz constant in the latter case. In \Cref{cor:Renyi-lipschitz}, we prove that the $\alpha$-R\'enyi entropy is Lipschitz continuous if and only if $\alpha > 1$, and establish bounds on its optimal Lipschitz constant in the latter case. This provides a continuity bound on the $\alpha$-R\'enyi entropy for $\alpha > 1$ which is much tighter than previously known bounds in the literature. In \Cref{sec:discuss-prev-bounds}, we discuss why the previous techniques fail to provide good continuity bounds on the $\alpha$-R\'enyi entropies for large $\alpha$ or  large dimension $d$, and how majorization flow gives insight into the previously-used techniques. In \Cref{sec:connect-thermo}, we discuss the connection between $\alpha$-R\'enyi entropies and thermodynamic free energies (which was introduced by \cite{Baez11}), and interpret our continuity bounds for R\'enyi entropies through the lens of free energies.
	
	In \Cref{sec:more-cty-bounds}, we apply the results of \Cref{sec:cty-from-majflow} to the von Neumann entropy (or, equivalently in this case, to the Shannon entropy), the so-called $(s,\alpha)$-unified entropies, entropies induced by $f$-divergences, and the concurrence.

	\item In \Cref{sec:butterflies} we show that the cumulative probability distribution for the number of distinct elements obtained from $N$ i.i.d. samples of $\{1,\dotsc,M\}$ is Lipschitz continuous in the underlying probability distribution, and obtain a tight uniform continuity bound on the expected number of distinct elements. In particular, we show the optimal Lipschitz constant for the expected number of distinct elements is given simply by $N$, and in particular is independent of $M$.
\end{itemize}

\section{Notation and definitions} \label{sec:notation}
In this article, we consider both probability vectors $p\in \cP \subset\R^d$,
\[
\cP := \left\{ p = (p_1,\dotsc,p_d)\in\R^d : p_i \geq 0 \text{ for }i=1,\dotsc,d,\, \sum_{i=1}^d p_i = 1 \right\}
\]
as well as quantum states $\rho \in \cD(\cH)$ where $\cH$ is a $d$-dimensional Hilbert space, and
\[
\cD(\cH) = \left\{ \rho \in \cB(\cH): \rho \geq 0, \,\tr(\rho) = 1 \right\},
\]
where $\cB(\cH)$ is the set of operators on $\cH$. As discussed in \Cref{sec:q-to-c}, for the continuity bounds discussed in this article, the two frameworks are equivalent. The extremal elements of $\cP$ are permutations of the probability vector $\psi := (1,0,\dotsc,0)$. The extremal elements of $\cD(\cH)$ are rank-1 projections, and are called \emph{pure states}.

We denote the completely mixed state by $\tau := \frac{\one}{d} \in \cD(\cH)$, and the analogous uniform distribution by $u := (\frac{1}{d},\dotsc, \frac{1}{d})\in \cP$. A pure state is a rank-1 density matrix; we denote the set of pure states by $\Dpure(\cH)$. For two quantum states $\rho,\sigma \in \D(\cH)$, the \emph{trace distance} between them 
is given by
\[
T(\rho,\sigma) = \frac{1}{2}\|\rho-\sigma\|_1
\]
where $\|A\|_1 = \tr|A|$ for $A \in \cB(\cH)$.
The trace distance $T(\rho,\sigma)$ has an operational interpretation in terms of the optimal success probability $p$ in distinguishing between two quantum states $\rho$ and $\sigma$ by a 2-outcome POVM:
\[
p = \frac{1}{2}(1 + T(\rho, \sigma)).
\]
Hence, the trace distance can be seen as a measure of indistinguishability between $\rho$ and $\sigma$. Analogously, the total variation distance between $p, q\in \cP$ is defined as
\[
\TV(p,q) = \frac{1}{2}\|p-q\|_1
\]
and is also endowed with an interpretation in terms of distinguishability.

Recall that a function $F: \cD(\cH)\to \mathbb{R}$ is $k$-\emph{Lipschitz} (with respect to the trace distance) if for all $\rho,\sigma \in \cD(\cH)$,
\[
|F(\rho) - F(\sigma)| \leq k\, T(\rho,\sigma).
\]
The smallest $k>0$ such that $F$ is $k$-Lipschitz is called the \emph{optimal Lipschitz constant} for $F$. The function $F$ is said to be \emph{Lipschitz} continuous if it is $k$-Lipschitz for some $k>0$.

For $\eps > 0$, we define the $\eps$-ball (in trace distance) around $\sigma\in \D(\cH)$ as the set
\begin{equation}\label{eq:eps-ball}
\Be(\sigma) = \{ \omega\in \D(\cH): T(\omega,\sigma) \leq \eps \}, 
\end{equation}
and likewise the $\eps$-ball (in total variation distance) around a probability vector $p\in \cP$ as the set
\begin{equation} \label{eq:eps-ball-TV}
\Be(p) = \{ q\in \cP: \TV(p,q) \leq \eps \}.
\end{equation}
For any $A\in \Bsa(\cH)$, let $\lambda_+ (A)$ and $\lambda_-(A)$ denote the maximum and minimum eigenvalue of $A$, respectively, and $k_+(A)$ and $k_-(A)$ denote their multiplicities. Let $\lambda_j(A)$ denote the $j$th largest eigenvalue, counting multiplicity; that is, the $j^{th}$ element of the ordering
\[
\lambda_1(A)\geq \lambda_2(A) \geq \dotsm \geq \lambda_d(A).
\]
We set $\vec \lambda (A) := (\lambda_i(A))_{i=1}^d \in\R^d$ and denote the spectrum of $A\in \Bsa(\cH)$ (i.e.\@~its set of eigenvalues) by $\spec A \subset \R$.

The set of probability vectors with strictly positive entries is denoted $\cP_+$. For a vector $r\in \R^d$, $r_+$ denotes its largest entry, and $r_-$ denotes its smallest entry.
We use $\log x$ for the base-2 logarithm of $x$ and $\ln x$ for the natural logarithm of $x$.
\paragraph{Majorization of vectors}
Given $x\in \R^d$, write $x^\downarrow = (x^\downarrow_j)_{j=1}^d$ for the permutation of $x$ such that $x^\downarrow_1 \geq x^\downarrow_2 \geq \dotsm \geq x^\downarrow_d$. For $x,y\in \R^d$, we say $x$ \emph{majorizes} $y$, written $x \succ y$, if 
	\begin{equation} \label{def:majorize}
	 \sum_{j=1}^k x^\downarrow_j \geq \sum_{j=1}^k y^\downarrow_j \quad \forall k=1,\dotsc,d-1, \quad \text{and}\quad \sum_{j=1}^d x^\downarrow_j = \sum_{j=1}^d y^\downarrow_j.
	 \end{equation}

We say a function $\varphi: \cP \to \R$ is Schur convex on a set $S\subset \cP$ if  for $p,q\in S$, $p\prec q\implies \varphi(p) \leq \varphi(q)$. If $S = \cP$, we simply say $\varphi$ is Schur convex. We say $\varphi$ is \emph{Schur concave} on $S$ if $-\varphi$ is Schur convex on $S$, and likewise, $\varphi$ is Schur concave if $-\varphi$ is Schur convex. One useful characterization of Schur convex functions is if $\varphi : \cP \to \R$ is differentiable and symmetric, then it is Schur convex if and only if
\begin{equation} \label{eq:S-convex-condition}
(p_i - p_j) \left[ \partial_{p_i}\varphi(p) - \partial_{p_j} \varphi(p) \right] \geq 0 \qquad \forall i,j
\end{equation}
for each $p \in \cP$ \cite[Section 3.A, Equation (10)]{marshall2011inequalities}.

\paragraph{Majorization of quantum states}
Given two quantum states $\rho,\sigma\in \cD(\cH)$, we say $\sigma$ majorizes $\rho$, written $\rho\prec \sigma$ if $\vec\lambda(\rho) \prec \vec\lambda(\sigma)$. We say that $\varphi: \D(\cH) \to \R$ is \emph{Schur convex} if $\varphi(\rho)\leq \varphi(\sigma)$ for any $\rho,\sigma\in \cD(\cH)$ with $\rho \prec \sigma$. If $\varphi(\rho) < \varphi(\sigma)$ for any $\rho,\sigma\in \cD(\cH)$ such that  $\rho \prec \sigma$, and $\rho$ is not unitarily equivalent to $\sigma$, then $\varphi$ is \emph{strictly Schur convex}. We say $\varphi$ is Schur concave (resp.~strictly Schur concave) if $(-\varphi)$ is Schur convex (resp.~strictly Schur convex).

\section{Majorization flow} \label{sec:majflow}

Majorization, as defined in \eqref{def:majorize} above, is a pre-order, meaning it is reflexive ($p\prec p$) and transitive ($p\prec q \prec r \implies p \prec r$), and on the set of sorted probability vectors, $\cP^\downarrow = \left\{ p \in \R^d : p_1 \geq \dotsm \geq p_d \geq 0, \sum_{i=1}^d p_i = 1 \right\}$, it is a partial order, meaning it is also antisymmetric ($p\prec q$ and $q\prec p$ implies $p=q$). It also satisfies the so-called {\em{lattice}} property \cite{CV02}, meaning for any pair $p, q \in \cP^\downarrow$, there is a unique greatest lower bound $\inf(p,q)\in \cP^\downarrow$, which satisfies
\begin{itemize}
    \item $\inf(p,q) \prec p$ and $\inf(p,q) \prec q$
    \item if $r \in \cP^\downarrow$ is any other lower bound, meaning $r \prec p$ and $r\prec q$, then $r \prec \inf(p,q)$.
\end{itemize}
 Likewise, there is a unique least upper bound $\sup(p,q) \in \cP^\downarrow$ such that $p \prec \sup(p,q)$, $q \prec \sup(p,q)$, and  $\sup(p,q) \prec r$ for any $r\in \cP^\downarrow$ satisfying the relations $p \prec r$ and $q\prec r$.

In fact, majorization satisfies the stronger \emph{complete lattice} property, meaning the infimum and supremum of an arbitrary subset $S$ of $\cP^\downarrow$ exist and are unique, and can be obtained by an explicit algorithm \cite{YG19} (see also \cite{BBH+19}). For a possibly non-sorted set $S\subseteq \cP$, the supremum and infimum exist but are non-unique, since any permutation of a supremum (resp. infimum) is another supremum (resp. infimum).
We will be particularly interested in the case
\[
S = B_\eps(r) := \left\{ q \in \cP : \TV(q, r) \leq \eps \right\}
\]
for $\eps \geq 0$, where $\TV(q,r) := \frac{1}{2}\|q-r\|_1 = \frac{1}{2}\sum_{i=1}^d |q_i-r_i|$ is the total variation distance.
Although the vectors in $S$ may not be sorted, we can define $\inf B_\eps(r) := \pi\inv(\inf(B_\eps(r)^\downarrow))$ where for a set $S\subset \cP$,
\[
S^\downarrow := \left\{ p^\downarrow : p \in S \right\}
\]
where $p^\downarrow$ is a permutation of $p$ sorted in decreasing order, and $\pi$ is the permutation which sorts the elements of $r$ in decreasing order. This choice of ordering of the infimum minimizes the total variation distance between the infimum and the center $r$ of the ball $B_\eps(r)$.

The choice to use the total variation distance as the metric to define $B_\eps(r)$ is not arbitrary; in fact, the total variation ball admits the property that the majorization infimum $\inf(B_\eps(r))$ is attained on $B_\eps(r)$:
\[
\inf(B_\eps(r)) \in B_\eps(r).
\]

This is a very useful fact, which was first used by \cite{HY10} in the proof of their Theorem 2, in order to maximize the Shannon entropy over the total variation ball. This fact was independently rediscovered twice more, in \cite{HOS18} and by the present authors in \cite{HD17}, and was subsequently connected more directly to the notions of infima and suprema in \cite{BBH+19}. Henceforth, we write $\min(B_\eps(r))\equiv \inf(B_\eps(r))$ to emphasize the fact that the infimum is attained in the set itself.

In this work, we make a refined analysis which allows us to derive the path traced by the majorization minimizer as $\varepsilon$ is increased infinitesimally, and hence introduce the notion of \emph{majorization flow}. The latter notion is then used to provide simple proofs of uniform continuity bounds as well as novel Lipschitz continuity bounds for large classes of Schur concave functions, in both the classical and quantum setting (see \Cref{sec:q-to-c}).

\paragraph{The majorization flow}
Consider the map for $\eps \geq 0$
\[
\begin{aligned}
\mmm_\eps : \qquad \cP &\to \cP\\
 r &\mapsto \min(B_\eps(r)).
\end{aligned}
\]
We call $(\mmm_s)_{s\geq 0}$  the \emph{majorization flow}.
In \cite{HD17}, the present authors have shown that $(\mmm_s)_{s\geq 0}$ is a non-linear semigroup on $\cP$, in the sense that for $s,t \geq 0$,
\begin{equation}\label{eq:intro-semigroup}
\begin{aligned}
\mmm_0 &= \id\\
\mmm_{s+t} &= \mmm_s \circ \mmm_t.
\end{aligned}
\end{equation}
\cite{HOS18} established the additional property that $\mmm_s$ is \emph{majorization--preserving}, meaning that if $p \prec q$, then $\mmm_s(p) \prec \mmm_s(q)$, for any $s \geq 0$.
In this work, we show that this semigroup can be characterized by a (nonlinear) generator $\gen : \cP \to \R^d$ such that
\begin{equation}
\left.\partial_s^+ \mmm_s\right|_{s=0} = \gen
\end{equation}
where $\partial_s^+$ indicates the one-sided derivative from above. 
This leads to the following formulas.

\begin{theorem}\label{thm:integral-formulas}
Let $\eps > 0$ and $(\mmm_s)_{s\geq 0}$ be as defined above. Then
\begin{equation} \label{eq:mmm-integral-formula}
\mmm_\eps(r) = r + \int_0^\eps \gen(\mmm_s(r)) \d s.
\end{equation}
Additionally, for $H: \cP \to \R$ which is continuously differentiable on $\cP_+$, the quantity $\Gamma_H(r) := \partial_s^+ H(\mmm_s(r))|_{s=0}$ exists and satisfies
\begin{equation} \label{eq:H-integral-formula}
H(\mmm_\eps(r))  = H(r) + \int_0^\eps \Gamma_H(\mmm_s(r)) \d s.
\end{equation}
\end{theorem}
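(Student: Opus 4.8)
The plan is to reduce both identities to the fundamental theorem of calculus for \emph{one-sided} derivatives, using the semigroup property \eqref{eq:intro-semigroup} to transport the generator from the base point to an arbitrary point along the flow. Fix $r\in\cP$ and write $\gamma(s):=\mmm_s(r)$ for the trajectory. The starting observation is that the semigroup law $\mmm_{s+h}=\mmm_h\circ\mmm_s$ gives, for every $s\geq 0$,
\[
\partial_s^+\gamma(s)=\lim_{h\to 0^+}\frac{\gamma(s+h)-\gamma(s)}{h}=\lim_{h\to 0^+}\frac{\mmm_h(\gamma(s))-\gamma(s)}{h}=\gen(\gamma(s)),
\]
so that the right-derivative of the trajectory at time $s$ is exactly the generator evaluated at the current position. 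The identical manipulation applied to $s\mapsto H(\gamma(s))$ shows its right-derivative at time $s$ equals $\Gamma_H(\gamma(s))$, once the latter is known to exist. Thus the whole content of the theorem is (i) existence of these right-derivatives along the flow, and (ii) a version of the fundamental theorem of calculus valid for merely right-differentiable functions.

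For the fundamental-theorem step I would invoke the standard lemma that a continuous function on an interval whose right-derivative vanishes identically is constant; applied componentwise to $\gamma(s)-\int_0^s\gen(\gamma(t))\,\d t$ this yields \eqref{eq:mmm-integral-formula}, provided the integrand is continuous. Here I would use the explicit description of the minimizer and of $\gen$ developed in \Cref{sec:gen-and-proofs}: the minimizer $\mmm_s(r)$ is built by a water-filling construction with finitely many ``plateaus'', so $s\mapsto\gamma(s)$ is continuous and piecewise smooth and $\gen$ is \emph{bounded} and piecewise continuous along the trajectory, the finitely many breakpoints being the times at which the active plateaus merge. One applies the lemma on each interval of continuity and patches the pieces together using continuity of $\gamma$; since $\gen$ stays bounded this is the easy half.

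The delicate half is \eqref{eq:H-integral-formula}, and the main obstacle is that $H$ is differentiable only on the interior $\cP_+$, while $\nabla H$ may blow up at the boundary. I would first note that the construction raises all of the smallest entries of $r$ simultaneously, so $\gamma(s)\in\cP_+$ for every $s>0$ (no entry returns to zero before the flow reaches $u$); hence on $(0,\eps]$ the chain rule for one-sided derivatives gives $\Gamma_H(\gamma(s))=\nabla H(\gamma(s))\cdot\gen(\gamma(s))$, a finite, piecewise-continuous function of $s$. The argument of the previous paragraph then yields $H(\gamma(\eps))-H(\gamma(\delta))=\int_\delta^\eps\Gamma_H(\gamma(s))\,\d s$ for every $\delta>0$. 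Finally I would let $\delta\to 0^+$: the left side tends to $H(\gamma(\eps))-H(r)$ by continuity of $H$ and of the flow, while (for Schur concave $H$ the integrand is nonnegative, since $H(\gamma(s))$ is nondecreasing) monotone convergence forces the right side to converge to $\int_0^\eps\Gamma_H(\gamma(s))\,\d s$ and to equal the same limit. This simultaneously proves the integral formula and, via the limit-of-derivatives criterion, shows $\Gamma_H(r)=\lim_{s\to0^+}\Gamma_H(\gamma(s))$ exists in $(0,+\infty]$; any potential infinity is confined to the single endpoint $s=0$ and is harmless for the integral.
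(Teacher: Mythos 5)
Your overall strategy coincides with the paper's: use the semigroup law to identify the right-derivative of the trajectory $\gamma(s)=\mmm_s(r)$ with $\gen(\gamma(s))$, observe that the trajectory is piecewise affine with finitely many pieces (your ``plateau'' picture is exactly the content of \eqref{eq:linear-perturbation} combined with the semigroup property), and then integrate. Your proof of \eqref{eq:mmm-integral-formula} is correct and is essentially the paper's argument with the one-sided fundamental-theorem lemma made explicit; the only point you leave tacit is that $t\mapsto\gen(\gamma(t))$ is \emph{right}-continuous at the breakpoints (it is in fact piecewise constant, again by \eqref{eq:linear-perturbation}), which is what lets the ``zero right-derivative implies constant'' lemma run across the whole interval.

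The gap is in your treatment of \eqref{eq:H-integral-formula}. The theorem is stated for \emph{every} $H:\cP\to\R$ that is continuously differentiable on $\cP_+$, but your limiting step $\delta\to 0^+$ invokes Schur concavity of $H$ (to get a nonnegative integrand and apply monotone convergence), which is not among the hypotheses; as written, you have proved the formula only for Schur concave $H$. The restriction is also unnecessary: since $H(\gamma(\delta))\to H(r)$ as $\delta\to 0^+$ by continuity of $H$ along the path (continuity of $H$ at boundary points is implicitly needed by the theorem, by the paper, and by you alike), the limit $\lim_{\delta\to 0^+}\int_\delta^\eps\Gamma_H(\gamma(s))\,\d s$ exists automatically and equals $H(\gamma(\eps))-H(r)$, with no sign condition — this is how the integral should be read (as an improper integral) when $r$ lies on the boundary of $\cP$. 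Separately, your closing parenthetical — that existence of $\Gamma_H(r)$ follows ``via the limit-of-derivatives criterion'' — is not justified: that criterion presupposes that $\lim_{s\to 0^+}\Gamma_H(\gamma(s))$ exists (possibly $+\infty$), and nothing in your argument rules out oscillation of $\Gamma_H(\gamma(s))$ as $s\to 0^+$; the integral identity alone does not force the difference quotient $\eps^{-1}\int_0^\eps\Gamma_H(\gamma(s))\,\d s$ to converge. For $r\in\cP_+$ existence is immediate from \eqref{eq:linear-perturbation} and the chain rule, which is all the paper uses on its affine pieces; at boundary points $r$ the existence assertion requires an argument (e.g.\ monotonicity of $\Gamma_H$ along the flow, as holds for the $(h,\phi)$-entropies treated later) that you have not supplied.
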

This result is proven in \Cref{sec:gen-and-proofs}.

\begin{remark}
The relations \eqref{eq:mmm-integral-formula} and \eqref{eq:H-integral-formula} are of particular importance. Equation \eqref{eq:mmm-integral-formula} allows one to determine properties of $\mmm_s(r)$ by simply analyzing $\gen$, while \eqref{eq:H-integral-formula} allows one to analyze continuity properties of $H$ using $\Gamma_H$.
\end{remark}

For $r\in \cP$, define the path $\gamma(t) = \mmm_t(r)$. We call the path $(\gamma(t))_{t=0}^1 \subseteq \cP$ as the \emph{path of majorization flow} starting from $r$, with respect to the total variation distance. Intuitively speaking, along this path the probability vector decreases in majorization order as quickly as possible, while changing at constant speed in $1$-norm (as we will see, $\frac{1}{2}\|\gen(r)\|_1=1$ for all $r\in \cP$). 

Majorization flow provides a simple and powerful tool for understanding how the majorization order changes with respect to total variation distance. Using this tool, we prove \Cref{thm:Delta-eps-Schur-convex} which provides a {\em{universal proof}} for many known entropic continuity bounds, and \Cref{thm:Convex-type-Lip} which establishes novel Lipschitz continuity bounds, including bounds on the $\alpha$-R\'enyi entropy (for $\alpha >1$) with an exponentially-improved dependence on $\alpha$ over previously known results.

\paragraph{Continuity bounds from majorization flow}

For $\eps >0$ and Schur concave $H$ (meaning $H(p) \geq H(q)$ if $q \prec p$), we have $H(\mmm_\eps(r)) = \max_{p \in B_\eps(r)} H(p)$ since $\mmm_\eps(r) = \min( B_\eps(r))$. In this case, using \eqref{eq:H-integral-formula} we obtain
\begin{equation} \label{eq:Delta-eps-by-Gamma}
\Delta_\eps^H(r):= \max_{p \in B_\eps(r)} H(p) - H(r) = \int_0^\eps \Gamma_H(\mmm_s(r)) \d s.
\end{equation}
In other words, the amount $H$ can locally increase near $r$ (quantified by $\Delta_\eps^H(r)$) is determined by $\Gamma_H$. Moreover, the global continuity properties of $H$ are determined by $\sup_{r\in \cP}\Delta_\eps^H(r)$. To see this, note that if $p,q \in \cP$ satisfy $\TV(p,q)\leq \eps$, then
\begin{equation}
|H(p) - H(q)| \leq \max\{\Delta_\eps(p), \Delta_\eps(q)\} \leq \sup_{r\in \cP} \Delta_\eps(r) \label{eq:uniform-bound-from-Delta-eps}
\end{equation}
This fact and \eqref{eq:Delta-eps-by-Gamma}  have two immediate consequences for the continuity properties of $H$:
\begin{enumerate}
	\item If $\Gamma_H$ is Schur convex, then by \eqref{eq:Delta-eps-by-Gamma}, $\Delta_\eps^H$ is Schur convex too, as $\mmm_s$ is majorization preserving for all $s\in [0,1]$. This provides the upper bound $\Delta_\eps^H(r) \leq \Delta_\eps(\psi)$ for $\psi = (1,0,\dotsc,0)$, (since $\psi$ majorizes every $r \prec \psi$ for every $r \in \cP$) which yields a tight uniform continuity bound by \eqref{eq:uniform-bound-from-Delta-eps}.
	\item If $\Gamma_H$ can be upper bounded by $k>0$ on $\cP_+$, then by \eqref{eq:Delta-eps-by-Gamma},  $\Delta_\eps^H(r)\leq \eps k$. This immediately yields a Lipschitz continuity bound for $H$ by \eqref{eq:uniform-bound-from-Delta-eps}.
\end{enumerate}

The second point can be rephrased as the following corollary to \Cref{thm:integral-formulas}.
\begin{corollary}\label{cor:opt-Lip}
Let $H: \cP\to \R$ be a Schur concave function which is continuously differentiable on $\cP_+$. We write $H(r_1,\dotsc,r_d) \equiv H(r)$ for $r\in \cP$. Next, for $r \in \cP$, let $i_+ \in \{1,\dotsc,d\}$ be an index such that $r_+ = r_{i_+}$, and similarly $i_- \in \{1,\dotsc,d\}$ such that $r_- = r_{i_-}$.
Define
\begin{equation}
\begin{aligned}
\Gamma_H : \quad \cP_+ &\to \R\\
 r &\mapsto (\partial_{r_{i_+}} - \partial_{r_{i_-}})H(r_1,\dotsc, r_d).
\end{aligned}
\end{equation}
Note that this definition does not depend on the choice of $i_\pm$ since $H$ is permutation invariant. Then $H$ is Lipschitz continuous if and only if
\[
k := \sup_{r \in \cP_+} \Gamma_H(r)
\]
satisfies $k < \infty$. Moreover, in the latter case $k$ is the optimal Lipschitz constant for $H$.
\end{corollary}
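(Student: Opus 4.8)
The statement pins down the optimal Lipschitz constant exactly, so my plan is to prove two sandwiching inequalities that together give the ``if and only if'' and the identification of the constant: that $H$ is $k$-Lipschitz whenever $k<\infty$ (hence $k_{\mathrm{opt}}\le k$), and that $\Gamma_H(r)\le k'$ at every $r\in\cP_+$ whenever $H$ is $k'$-Lipschitz (hence $k\le k_{\mathrm{opt}}$). A preliminary step is to reconcile the two descriptions of $\Gamma_H$ in play: the expression $(\partial_{r_{i_+}}-\partial_{r_{i_-}})H$ defined here must be matched with the directional derivative $\partial_s^+H(\mmm_s(r))|_{s=0}$ of \Cref{thm:integral-formulas} (the sign being fixed so that $\Gamma_H\ge0$, consistent with Schur concavity and with the flow decreasing the majorization order). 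This identification follows from the explicit form of the generator $\gen$, which at a point $r$ transports probability mass out of a maximal coordinate into a minimal one; I would import it from the computation of $\gen$ in \Cref{sec:gen-and-proofs}.

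For the sufficiency direction and the bound $k_{\mathrm{opt}}\le k$, I would reuse the two facts already assembled before the corollary. Assume $k=\sup_{\cP_+}\Gamma_H<\infty$. The key observation is that the flow instantaneously raises the smallest coordinate(s), so $\mmm_s(r)\in\cP_+$ for every $s>0$ and every $r\in\cP$, even when $r$ lies on the boundary; hence $\Gamma_H(\mmm_s(r))\le k$ for all $s>0$. Substituting into \eqref{eq:Delta-eps-by-Gamma} gives $\Delta_\eps^H(r)=\int_0^\eps\Gamma_H(\mmm_s(r))\,\d s\le k\eps$, and then \eqref{eq:uniform-bound-from-Delta-eps} yields $|H(p)-H(q)|\le k\,\TV(p,q)$ for every $p,q$. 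Thus $H$ is $k$-Lipschitz and $k_{\mathrm{opt}}\le k$.

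The genuinely new content is the reverse inequality $k\le k_{\mathrm{opt}}$, which also supplies the ``only if'' direction. Suppose $H$ is $k'$-Lipschitz for some finite $k'$. Fix $r\in\cP_+$ and use the directional-derivative form $\Gamma_H(r)=\lim_{s\to0^+}\frac{1}{s}\big(H(\mmm_s(r))-H(r)\big)$, which exists by \Cref{thm:integral-formulas}. Since $\mmm_s(r)=\min(B_s(r))\in B_s(r)$, we have $\TV(\mmm_s(r),r)\le s$, so the Lipschitz hypothesis controls the difference quotient: $\frac{1}{s}\big(H(\mmm_s(r))-H(r)\big)\le \frac{k'}{s}\TV(\mmm_s(r),r)\le k'$. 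Letting $s\to0^+$ gives $\Gamma_H(r)\le k'$, and taking the supremum over $r\in\cP_+$ gives $k\le k'$. In particular $k<\infty$; choosing $k'=k_{\mathrm{opt}}$ then gives $k\le k_{\mathrm{opt}}$, and combined with the previous paragraph $k_{\mathrm{opt}}=k$, while $H$ is Lipschitz exactly when $k<\infty$.

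I expect the main obstacle to be the behaviour at the boundary $\partial\cP$ rather than the interior estimate. Two points need care. First, verifying that $\mmm_s(r)\in\cP_+$ for $s>0$, which I would obtain directly from the structure of the minimizer $\min(B_\eps(r))$ established in \cite{HD17,HOS18}: the minimal coordinate is non-decreasing along the flow and becomes strictly positive at once. Second, justifying that \eqref{eq:Delta-eps-by-Gamma} remains valid up to the boundary, where $\Gamma_H(r)$ itself may be infinite. For this I would apply \eqref{eq:H-integral-formula} on the interval $[\delta,\eps]$ using the semigroup identity $\mmm_s=\mmm_{s-\delta}\circ\mmm_\delta$ with interior base point $\mmm_\delta(r)\in\cP_+$, and then send $\delta\to0^+$, using continuity of $s\mapsto\mmm_s(r)$ and of $H$ on $\cP$ to recover $H(\mmm_\delta(r))\to H(r)$. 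All remaining steps are direct consequences of \Cref{thm:integral-formulas} together with \eqref{eq:Delta-eps-by-Gamma} and \eqref{eq:uniform-bound-from-Delta-eps}.
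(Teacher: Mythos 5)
Your proposal is correct and follows essentially the same route as the paper's own proof: sufficiency via the flow integral formula \eqref{eq:Delta-eps-by-Gamma} combined with \eqref{eq:uniform-bound-from-Delta-eps}, and necessity/optimality by bounding the difference quotient $\tfrac{1}{s}\left(H(\mmm_s(r))-H(r)\right)\le k'$ along the flow (using $\TV(\mmm_s(r),r)\le s$) and letting $s\to 0^+$, after identifying the partial-derivative expression for $\Gamma_H$ with the derivative along the flow exactly as in \eqref{eq:Gamma-for-symmetric-H}. The only differences are organizational rather than substantive: you merge the paper's two separate contradiction arguments (the case $k=\infty$ and the optimality of $k$ when finite) into the single direct inequality $k\le k'$ for any Lipschitz constant $k'$, and you make explicit the boundary care ($\mmm_s(r)\in\cP_+$ for $s>0$, and the $\delta\to0^+$ limiting argument for the integral formula at boundary points) that the paper leaves implicit.
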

This is proven in \Cref{sec:gen-and-proofs}.

\paragraph{Comparison to other flows}
The notion of a flow arises naturally in various branches of physics, mathematics and engineering.  It is interesting to compare
and contrast the notion of majorization flow that we have introduced to the notion of gradient flow that arises in optimal transport and differential geometry, and has been applied to study open quantum systems.

The gradient flow induced by a function $F$ in a metric space can be loosely interpreted as the flow that decreases $F$ as quickly as possible \cite[p.~645]{Vil09}; in a similar sense, the majorization flow decreases the majorization order as quickly as possible (while following a unit speed path in total variation distance). However, there are several complications that prevent making the connection between majorization flow and gradient flow more precise:
\begin{itemize}
	\item Perhaps the most obvious one: decreasing in majorization order requires non-increasingness of all the partial sums given in \eqref{def:majorize}, while decreasing a function only requires decreasing a scalar value.
	\item The theory of gradient flow is well-developed on continuous spaces, such as probability measures on $\R^d$ equipped with a Wasserstein metrc (see, e.g. \cite{Vil09}), but much less well-developed in the discrete case considered here (probability measures on $\{1,\dotsc,d\}$).
	\item Here we consider the total variation distance, which can be seen as the $1$-Wasserstein distance induced by the Hamming distance on the set $\{1,\dotsc,d\}$. Almost all of the literature in discrete or continuous space takes the metric to be the $p$-Wasserstein distance for $p>1$ (for smoothness reasons).
\end{itemize}

\section{Quantum to classical reduction} \label{sec:q-to-c}
So far, we have only discussed majorization flow in the classical context of probability vectors $r \in \cP \subset \R^d$. Majorization of quantum states has also proven to be a useful concept (see e.g. \cite{Nielsen-LOCC}). In fact, every result in this article carries over to the framework of quantum states on a finite-dimensional Hilbert space $\cH$, by mapping quantum states $\rho \in \cD(\cH)$ to the corresponding probability vectors $\vec \lambda^\downarrow(\rho) \in \cP$ (consisting of eigenvalues of $\rho$ arranged in non-increasing order), due to \eqref{eq:unorm-q-to-c} below.

Let $A\in \Bsa(\cH)$. Then let $\Eig^\downarrow(A) = \diag (\vec \lambda^\downarrow(A))$ be the diagonal matrix with diagonal entries
\[
\lambda^\downarrow_1(A)\geq \lambda^\downarrow_2(A) \geq \dotsm \geq \lambda^\downarrow_d(A)
\]
are the sorted eigenvalues of $A$, counted with multiplicity, and we denote $\vec \lambda^\downarrow(A) := (\lambda^\downarrow_i(A))_{i=1}^d)$. By \cite[IV.62]{bhatia97}, we have for $A,B\in \Bsa(\cH)$
\begin{equation}
\unorm{\Eig^\downarrow(A)- \Eig^\downarrow(B)} \leq \unorm{A-B} \label{eq:unorm-q-to-c}
\end{equation}
for any unitarily invariant norm $\unorm{\cdot}$. Moreover, in the case of the trace distance,
\[
\frac{1}{2}\|\Eig^\downarrow(A)- \Eig^\downarrow(B)\|_1 = \frac{1}{2}\|\vec \lambda^\downarrow(A) - \vec \lambda^\downarrow(B)\|_1
\]
is the total variation distance between the sorted vectors of eigenvalues of $A$ and $B$.

This inequality justifies a natural definition of $\mmm_\eps$ for quantum states $\rho \in \cD(\cH)$: if $\rho = U \Eig^\downarrow(\rho) U^*$ for unitary $U$, then define
\[
\mmm_\eps(\rho) := U \diag(\mmm_\eps(\vec \lambda(\rho)))U^*.
\]
In other words, in the sorted eigenbasis of $\rho$, the majorization flow $(\mmm_t)_{0\leq t\leq \eps}$ simply updates the eigenvalues.
Then $\mmm_\eps(\rho) \in B_\eps(\rho)$ for all $\eps\geq 0$ by \eqref{eq:unorm-q-to-c}, and $\mmm_\eps(\rho)$ is the majorization minimizer of $B_\eps(\rho)$. H

Next, let $H: \cD(\cH)\subset \Bsa(\cH) \to \R$ be a unitarily invariant function. Then $H(A) = H(\Eig^\downarrow(A))$ is a function of the eigenvalues of $A$ alone: $H(A) = H_{\operatorname{cl}}(\vec \lambda^\downarrow(A))$ for some function $H_{\operatorname{cl}} : \cP \to \R$.

Hence, if $\frac{1}{2}\|\rho-\sigma\|\leq \eps$, then $\frac{1}{2}\|\vec \lambda^\downarrow(\rho) - \vec \lambda^\downarrow(\sigma)\|_1\leq \eps$ as well, and
\[
|H(\rho) - H(\sigma)| = |H_{\operatorname{cl}}(\vec \lambda^\downarrow(\rho)) - H_{\operatorname{cl}}(\vec \lambda^\downarrow(\sigma))|
\]
and it remains to bound this difference in terms of $\eps$, using $\frac{1}{2}\|\vec \lambda^\downarrow(\rho) - \vec \lambda^\downarrow(\sigma)\|_1\leq \eps$. This is precisely the task of establishing a classical continuity bound. Hence, the continuity bounds discussed in this article immediately extend to the quantum case.

\section{Generator of majorization flow and proof of  \texorpdfstring{\Cref{thm:integral-formulas} and Corollary~\ref{cor:opt-Lip}}{Theorem 3.1 and Corollary 3.2}} \label{sec:gen-and-proofs}

In \cite{HY10,HD17,HOS18}, it was shown that for any $r\in \cP$ and $\eps \geq 0$, there exists a probability vector $r_\eps^* \in \cP$ such that:
\begin{enumerate}
	\item $r_\eps^* \in B_\eps(r)$
	\item $r_\eps^* \prec q$ for any $q\in B_\eps(r)$.
\end{enumerate}
In \cite{HD17,HOS18}, a complementary probability vector $r_{*,\eps} \in B_\eps(r)$ was also constructed such that $q \prec r_{*,\eps}$ for any $q\in B_\eps(r)$. We will, however, only focus on $r^*_\eps$ in this article, however. The form of $r_\eps^*$ can be derived by maximizing the Shannon entropy over $B_\eps(r)$ using a waterfilling argument (as was done in \cite{HY10}), or as a special case of the form of maximizers of concave functions over $B_\eps(r)$ (as was shown in \cite{HD17}), or by constructing the majorization infimum of $B_\eps(r)$ (as suggested in \cite{BBH+19}). Quantum states corresponding to such probability vectors were also constructed in the more general thermal-majorization setting in~\cite{Remco,vNW17}.

The explicit construction of $r_\eps^*$ was given in \cite{HY10,HD17,HOS18}, and is somewhat involved. However, the semigroup property \eqref{eq:intro-semigroup} of $\mmm_\eps$ shows that only the generator $\cL$ of majorization flow is needed to construct and analyze the behavior of $r^*_\eps$ for any $\eps \in [0,1]$. Hence, in this article, instead of recalling the explicit form of $r_\eps^*$ for all $\eps \in [0,1]$, we simply use the semigroup property \eqref{eq:intro-semigroup}, and the form of $r_\eps^*$ for small $\eps$ (small compared to a certain quantity $\delta(r)$ defined below).

First, setting $\mmm_\eps(r) = r_\eps^*$, we have the crucial semigroup property mentioned in \eqref{eq:intro-semigroup} that
\begin{equation*} 
\mmm_{\eps_1 + \eps_2} = \mmm_{\eps_1}\circ \mmm_{\eps_2}
\end{equation*}
for all $\eps_1,\eps_2 \in [0,1]$, where we define $\mmm_{\eps}(r) = u$ for $\eps > 1$, where $u = (\frac{1}{d},\dotsc, \frac{1}{d})$ is the uniform distribution. This property was established in \cite{HD17}.

The second property is as follows: let $r\in \cP$. Then $r_\eps^*$ is a linear perturbation of $r$ for $\eps$ small enough, in the sense that
\begin{equation} \label{eq:linear-perturbation}
r_\eps^* = r + \eps \gen(r) \qquad  \forall \eps \leq \delta(r)
\end{equation}
where $\gen(r) \in \R^d$ and $\delta(r) \geq 0$ are defined as follows. If $r = u$ is the uniform distribution, set $\cL(r) = 0$. For $r\neq u$, denote
\begin{itemize}
	\item the largest entry of $r$ as $r_+$, and its multiplicity by $k_+$
	\item the second largest (distinct) entry of $r$ as $\mu_+$
	\item the smallest entry of $r$ as $r_-$, and its multiplicity by $k_-$
	\item and the second smallest (distinct) entry of $r$ as $\mu_-$.
\end{itemize}
Then, $\gen(r) \in \R^n$ is a vector defined by
\begin{equation} \label{eq:def_X}
\gen(r)_i = \begin{cases}
-\frac{1}{k_+} & r_i = r_+\\
\frac{1}{k_-} & r_i = r_-\\
0 & \text{else,}
\end{cases}
\end{equation}
for $i=1,\dotsc, d$, and $\delta(r)$ is defined as
\[
\delta(r) := \begin{cases}
0 &r = u\\
\min \left\{ k_+ (r_+ - \mu_+), k_- (\mu_- - r_-) \right\} & \text{else.}
\end{cases}
\]
Note that $\delta(r) > 0$ if $r\neq u$.
With these definitions, equation~\eqref{eq:linear-perturbation} follows directly from the constructions presented in \cite{HY10,HD17,HOS18}.
Together, \eqref{eq:intro-semigroup} and \eqref{eq:linear-perturbation} provide all the information needed to construct $r_\eps^*$ for any $\eps \in [0,1]$; moreover, $\eps \mapsto r_\eps^*$ is piecewise affine, with at most $d$ pieces.

Now, we may turn to the proof of \Cref{thm:integral-formulas}.
\paragraph{Proof of \Cref{thm:integral-formulas}}

To establish \eqref{eq:mmm-integral-formula}, note that equation \eqref{eq:linear-perturbation} immediately yields
\[
\left.\partial_\eps^+ \mmm_\eps(r)\right|_{\eps=0} = \gen(r)
\]
where $\partial_\eps^+$ indicates the one-sided derivative in $\eps$ from above. Then
\[
\left.\partial s^+ \mmm_s(r)\right|_{s=t} = \left.\partial_t^+ \mmm_{s+t}(r)\right|_{t= 0}= \left.\partial_t^+ \mmm_{t}(\mmm_s(r))\right|_{t= 0} = \gen(\mmm_s(r))
\]
follows using the semigroup property. As the path $s\mapsto \mmm_s(r)$ is piecewise affine with at most $d$ pieces, for each fixed $r\in \cP$, the two-sided derivative $\partial_s \mmm_s(r)$ exists for all but at most $d$ elements $s\in [0,1]$. Then, using the fundamental theorem of calculus,
\[
\mmm_\eps(r) = r + \int_0^\eps \gen(\mmm_s(r)) \d s.
\]
This establishes \eqref{eq:mmm-integral-formula}. \Cref{eq:H-integral-formula} follows in the same manner by considering $H \circ \mmm_\eps(r)$ instead of $\mmm_\eps(r)$, and using that $H$ is continuously differentiable on $\cP_+$. \qed

\bigskip

Since $\frac{1}{2}\|\gen(r)\|_1 = 1$ for all $r\in \cP\setminus\{u\}$ as shown by \eqref{eq:def_X}, and $\gen(u) = 0$, the triangle inequality 
\[
\left\|\int_0^\eps \gen(\mmm_s(r)) \d s \right\|_1 \leq \int_0^\eps \|\gen(\mmm_s(r))\|_1 \d s 
\]
immediately yields
\[
\frac{1}{2}\|\mmm_\eps(r) - r\|_1 \leq \eps.
\]

We summarize the properties of $\mmm_\eps$ in the following proposition.

\begin{proposition}[Properties of $\mmm_\eps$] \label{prop:properties_of_Lambda_eps} Let $\sigma\in \D(\cH)$. We have the following properties of $\mmm_\eps$, for any $\eps\in (0,1]$.
\begin{enumerate}[label*=\alph*.,ref=(\alph*)]
	\item \label{item:Lambda_epsstates-to-states} Maps probability vectors to probability vectors: $\mmm_\eps : \cP \to \cP$. 
	\item \label{item:Lambda_eps_min_maj_order} Minimal in majorization order:   $\mmm_\eps(r)\in \Be(r)$ and for any $p\in \Be(r)$, we have $\mmm_\eps(r) \prec p$.
\item  \label{item:Lambda_eps-semigroup} Semigroup property:
if $\eps_1,\eps_2\in (0,1]$ with $\eps_1+\eps_2 \leq 1$, we have $\mmm_{\eps_1+\eps_2}(\sigma) = \mmm_{\eps_1} \circ \mmm_{\eps_2}(\sigma)$.
\item \label{item:Lambda_eps-maj-preserving} Majorization-preserving: let $p,q\in \cP$ such that $p \prec q$. Then $\mmm_\eps(p) \prec \mmm_\eps(q)$.
\item \label{item:Lambda_eps_fixed_point} $u = (1/d,\dotsc,1/d)\in\cP$ is the unique fixed point of $\mmm_\eps$, i.e.\@~the unique solution to $r = \mmm_\eps(r)$ for $r \in \cP$.
\item \label{item:Lambda-eps-near-tau}For any state $r \in \Be(u)$, we have $\mmm_\eps(r) = u$.
\item \label{item:Lambda-eps-on-psi}For any extremal probability vector $q$, i.e.  $q=\pi(1,0,\dotsc,0)$ for some permutation $\pi$, the probability vector $\mmm_\eps(q)$ has the form
\begin{equation}
\mmm_\eps(q) = \begin{cases} 
\pi(1- \eps, \frac{\eps}{d-1},\dotsc \frac{\eps}{d-1}) & \eps < 1 - \frac{1}{d}\\
u = (1/d,\dotsc,1/d) & \eps \geq 1 - \frac{1}{d}.
\end{cases} \label{eq:Lambda-eps-pure}
\end{equation}
\end{enumerate}
 
\end{proposition}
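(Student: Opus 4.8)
The plan is to read off properties \ref{item:Lambda_epsstates-to-states}--\ref{item:Lambda_eps-maj-preserving} directly from facts already recorded in this section, and to prove \ref{item:Lambda_eps_fixed_point}--\ref{item:Lambda-eps-on-psi} by short arguments built on the generator $\gen$ of \eqref{eq:def_X}, the linear-perturbation formula \eqref{eq:linear-perturbation}, and the semigroup property \eqref{eq:intro-semigroup}. Properties \ref{item:Lambda_epsstates-to-states} and \ref{item:Lambda_eps_min_maj_order} are immediate from the defining relation $\mmm_\eps(r)=\min(B_\eps(r))$ together with the fact, recalled from \cite{HY10,HD17,HOS18}, that this majorization infimum is attained inside $B_\eps(r)\subseteq\cP$; property \ref{item:Lambda_eps-semigroup} is exactly \eqref{eq:intro-semigroup}, proved in \cite{HD17}; and \ref{item:Lambda_eps-maj-preserving} is the majorization-preservation established in \cite{HOS18}.

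For \ref{item:Lambda_eps_fixed_point}, I would first note that $u$ is a fixed point: since $u\prec p$ for every $p\in\cP$, it is the global minimum in majorization order, and as $u\in B_\eps(u)$ it is trivially the minimizer, so $\mmm_\eps(u)=u$. For uniqueness, suppose $r\neq u$; then $\gen(r)\neq 0$, so for all sufficiently small $s\in(0,\eps]$ the formula \eqref{eq:linear-perturbation} gives $\mmm_s(r)=r+s\gen(r)$, whose largest entry $r_+-s/k_+$ is strictly smaller than $r_+$. Because the largest entry (i.e.\@~the first majorization partial sum) is non-increasing along the flow --- by \ref{item:Lambda_eps_min_maj_order} together with the semigroup property, since $\mmm_{t'}(r)=\mmm_{t'-t}(\mmm_t(r))\prec\mmm_t(r)$ for $t'\geq t$ --- we obtain $(\mmm_\eps(r))_+\leq(\mmm_s(r))_+<r_+$, hence $\mmm_\eps(r)\neq r$. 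Thus $u$ is the unique fixed point. Property \ref{item:Lambda-eps-near-tau} then follows from the symmetry of the total variation distance: if $r\in B_\eps(u)$ then $\TV(u,r)=\TV(r,u)\leq\eps$, so $u\in B_\eps(r)$, and being the global minimum it equals $\min(B_\eps(r))=\mmm_\eps(r)$.

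The substantive computation is \ref{item:Lambda-eps-on-psi}. By the permutation symmetry of the construction it suffices to take $q=\psi=(1,0,\dotsc,0)$, the general case following by applying the sorting permutation $\pi$ to the result. I would split according to whether $u\in B_\eps(\psi)$. A direct computation gives $\TV(\psi,u)=1-\tfrac1d$, so for $\eps\geq 1-\tfrac1d$ we have $u\in B_\eps(\psi)$ and, $u$ being the global minimum, $\mmm_\eps(\psi)=u$, the second line of \eqref{eq:Lambda-eps-pure}. For $\eps<1-\tfrac1d$ the generator at $\psi$ is $\gen(\psi)=(-1,\tfrac{1}{d-1},\dotsc,\tfrac{1}{d-1})$, and \eqref{eq:linear-perturbation} yields $\mmm_\eps(\psi)=\psi+\eps\gen(\psi)=(1-\eps,\tfrac{\eps}{d-1},\dotsc,\tfrac{\eps}{d-1})$, the first line. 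The step I expect to require the most care is the cutoff between the two cases: because $\psi$ takes only two distinct values, its largest and smallest entries coalesce to $u$ precisely at $\eps=1-\tfrac1d$, whereas the generic bound $\delta(\psi)$ coming from the displayed formula is larger (namely $1$); hence the linear regime \eqref{eq:linear-perturbation} is valid only on $[0,1-\tfrac1d]$, and the transition point must be pinned down independently. The identity $\TV(\psi,u)=1-\tfrac1d$ together with the global minimality of $u$ does exactly this, and confirms that the two cases join continuously at $\eps=1-\tfrac1d$.
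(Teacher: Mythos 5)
Your proof is correct and takes essentially the same route as the paper's: properties (a)--(d) are dispatched by the same citations to the prior literature, (f) by the global minimality of $u$ in majorization order, and (e), (g) via the linear-perturbation formula \eqref{eq:linear-perturbation}. The difference is one of rigor rather than of method, and it is worth recording. Where the paper declares (e) ``immediate'' from \eqref{eq:linear-perturbation}, you correctly note that this formula only controls the flow for times $s \leq \delta(r)$, and you extend the conclusion to arbitrary $\eps \in (0,1]$ by combining the semigroup property with the minimality property \ref{item:Lambda_eps_min_maj_order} to get that the largest entry is non-increasing along the flow; this extra step is genuinely needed, since $\eps$ may well exceed $\delta(r)$. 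Similarly, for (g) the paper says the claim ``follows from \eqref{eq:linear-perturbation},'' but you observe that $\delta(\psi) = 1$ while the linear regime actually terminates at $\eps = \TV(\psi,u) = 1 - \tfrac{1}{d}$, where the extreme entries of $\psi + \eps\gen(\psi)$ coalesce; taken at face value, \eqref{eq:linear-perturbation} with the stated $\delta$ would give the wrong minimizer for $\eps \in (1-\tfrac{1}{d},1]$ (the correct answer being $u$). Your two-case argument --- linear formula for $\eps < 1 - \tfrac{1}{d}$, global minimality of $u$ once $u \in B_\eps(\psi)$ --- pins down the transition point in \eqref{eq:Lambda-eps-pure} independently, and thereby repairs a gap that the paper's one-line justification glosses over (and, implicitly, flags that the range of validity claimed in \eqref{eq:linear-perturbation} is too generous for vectors with only two distinct entries).
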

\begin{proof}	

The proof of properties \ref{item:Lambda_epsstates-to-states} and \ref{item:Lambda_eps_min_maj_order} can be found in \cite{HD17,HOS18}; the property \ref{item:Lambda_eps-semigroup} was proved in in \cite{HD17}, property \ref{item:Lambda_eps-maj-preserving} can be found in Lemma 2 of \cite{HOS18}. The property \ref{item:Lambda_eps_fixed_point} is immediate from \eqref{eq:linear-perturbation}.  Lastly, the properties \ref{item:Lambda-eps-near-tau} follows from the majorization minimizer property, and \ref{item:Lambda-eps-on-psi} follows from \eqref{eq:linear-perturbation}.
\end{proof}

\paragraph{Calculation of $\Gamma_H$ for symmetric $H$}
Letting $r_\pm$ be the largest and smallest elements of $r$, note that
\begin{align*}	
\Gamma_H(r) &=  \left.\frac{\d}{\d y} H(\mmm_y(r))\right|_{y=0}\\
&= \sum_{i=1}^d \gen(r)_i H_i(r)= \sum_{i : r_i = r_+} \frac{-1}{k_+} H_i(r) + \sum_{i : r_i = r_-} \frac{1}{k_-} H_i(r)
\end{align*}
where $H_i(r) = \partial_{r_i}H(r)$. Assume $H$ is symmetric\footnote{Note that symmetry is implied by Schur concavity.} (invariant under permutations of $r$). Then $H_i(r) = H_j(r)$ if $r_i = r_j$. In that case,
\begin{equation} \label{eq:Gamma-for-symmetric-H}
\Gamma_H(r) = H_-(r) - H_+(r)
\end{equation}
where $H_\pm(r) = H_i(r)$ for any $i$ such that $r_i= r_\pm$. That is, $\Gamma_H(r)$ is simply the difference between two partial derivatives of $H$, evaluated at $r$. 

Moreover, if $H$ is Schur concave, then \eqref{eq:Gamma-for-symmetric-H} shows that 
\[
 \Gamma_H(r) \geq 0 \qquad \forall r \in \cP
\]
using \eqref{eq:S-convex-condition}, and likewise if $H$ is Schur convex, then $\Gamma_H(r) \leq 0$ for all $r \in \cP$.

With \eqref{eq:Gamma-for-symmetric-H} established, we now prove \Cref{cor:opt-Lip}.

\begin{proof}[Proof of \Cref{cor:opt-Lip}]
In the case that $\sup_{r\in \cP_+}\Gamma_H(r) = \infty$, for each $n \in \mathbb{N}$, there exists $r^{(n)} \in \cP_+$ such that
\[
 \lim_{s\downarrow 0}\frac{H(\mmm_s(r^{(n)})) - H(r^{(n)})}{s} > n.
\]
Additionally, $r^{(n)}\neq u$, otherwise the numerator would be zero. Hence, for any $\eps > 0$ there exists $0 < s_n < \TV(r^{(n)}, u)$ such that
\[
\frac{H(\mmm_{s_n}(r^{(n)})) - H(r^{(n)})}{s_n} > n - \eps.
\]
Since $s_n < \TV(r^{(n)}, u)$, we have $s_n = \TV(\mmm_{s_n}(r^{(n)}), r^{(n)} )$, and hence
\[
\frac{H(\mmm_{s_n}(r^{(n)})) - H(r^{(n)})}{ \TV(\mmm_{s_n}(r^{(n)}), r^{(n)} )} > n - \eps.
\]
Any Lipschitz constant $k$ must be larger than the left-hand side, for any $n$, and hence must be $\infty$.

Next, consider the case in which $\sup_{r\in \cP_+}\Gamma_H(r)$ is finite. \Cref{eq:Gamma-for-symmetric-H} shows that the quantity defined in \Cref{cor:opt-Lip} is the $\Gamma_H$ of \Cref{thm:integral-formulas}. Hence, in the case that $\sup_{r\in \cP_+}\Gamma_H(r) < \infty$ holds, \eqref{eq:Delta-eps-by-Gamma} and \eqref{eq:uniform-bound-from-Delta-eps} show that $\sup_{r\in \cP_+}\Gamma_H(r)$ is indeed a Lipschitz constant for $H$. It remains to show this constant is optimal.

Assume there is some Lipschitz constant $k' < \sup_{r\in \cP_+}\Gamma_H(r)$.
For each $\eps > 0$, let $r^{(\eps)} \in \cP_+$ satisfy $\Gamma_H(r^{(\eps)}) > \sup_{r\in \cP_+}\Gamma_H(r) - \eps$. Then
\[
k' \geq \frac{H(\mmm_s(r^{(\eps)})) - H(r^{(\eps)})}{\frac{1}{2}\|\mmm_s(r^{(\eps)}) -r^{(\eps)}\|_1 } \geq \frac{H(\mmm_s(r^{(\eps)})) - H(r^{(\eps)})}{s}
\]
for each $s \in (0,1]$. Taking the limit $s\to 0$ yields
\[
k' \geq \Gamma_H(r^{(\eps)}) > \sup_{r\in \cP_+}\Gamma_H(r) - \eps.
\]
Taking $\eps \to 0$ shows $k' \geq \sup_{r\in \cP_+}\Gamma_H(r)$, which contradicts $k' < \sup_{r\in \cP_+}\Gamma_H(r)$. Hence, the quantity $\sup_{r\in \cP_+}\Gamma_H(r)$ is indeed the optimal Lipschitz constant.
\end{proof}

\section{Establishing entropic continuity bounds via majorization flow}\label{sec:cty-from-majflow}

There are several families of single-partite entropies: $\alpha$-R\'enyi entropies, Tsallis entropies, unified entropies, and so forth (definitions and references for these can be found in \Cref{sec:tsallis-renyi,sec:more-cty-bounds}). These various entropies have a lot of similarities, in particular sharing a functional form: for $r \in \cP_d$,
\begin{equation}\label{eq:functional-form-h-phi}
H(r) = h\left( \sum_{i=1}^d \phi(r_i)\right)
\end{equation}
for a pair of scalar functions $h$ and $\phi$. However, they exhibit different mathematical properties, most notably concavity or lack thereof. For example, the $\alpha$-R\'enyi entropies, defined by
\[
 H_\alpha(r)= \frac{1}{1-\alpha}\log \left( \sum_{i=1}^d r_i^\alpha \right)
 \] are concave for $\alpha \in (0,1)$, but are neither concave nor convex for $\alpha > 1$. Certain properties of the latter have proven particularly difficult to study, perhaps as a consequence of the lack of concavity.

While Audenaert proved a tight uniform continuity bound for the $\alpha$-R\'enyi entropies for $\alpha \in (0,1)$ in 2007 (\cite{Audenaert07}; see \eqref{eq:uniform_Renyi_bound} below), a uniform continuity bound on the $\alpha$-R\'enyi entropies for $\alpha > 1$ was not established until 2011. At that time, Rastegin \cite{Ras2011} proved the bound
\begin{equation}
|H_\alpha(p) - H_\alpha(q)| \leq  \frac{d^{2(\alpha-1)}}{\alpha-1}[  1  - (1-\eps)^\alpha  -\eps^\alpha (d-1)^{1-\alpha} ] \label{eq:bound-Rastegin}
\end{equation}
where $\eps = \TV(p,q)$, for $\alpha > 1$.
This bound, however, suffers from an exponential dependence on $\alpha$ (and for fixed $\alpha$, polynomial dependence on $d$), while scaling linearly with $\eps$, as $1  - (1 - \eps)^\alpha \approx \alpha \eps - \frac{1}{2}\alpha(\alpha-1) \eps^2 + O(\eps^3)$. Since the inequality $|H_\alpha(p) - H_\alpha(q)| \leq 2 \log d$ holds trivially, for even moderately large $\alpha$, \eqref{eq:bound-Rastegin} provides a non-trivial bound for a very small range of $\eps$.

In 2017, Chen et al (\cite{Renyi-CMNF}) improved upon this bound, showing that for $\alpha > 1$,
\begin{equation} \label{eq:Chen-renyi-bound}
|H_\alpha(p) - H_\alpha(q)| \leq \frac{d^{\alpha-1}}{\alpha-1}[  1  - (1-\eps)^\alpha  -\eps^\alpha (d-1)^{1-\alpha} ] 
\end{equation}
However, this bound still suffers from exponential dependence on $\alpha$.
The proof of both bounds proceeds by reducing to the case of the Tsallis entropy,
\[
T_\alpha(p) = \frac{1}{1-\alpha}\left(\Big(\sum_{i=1}^d p^\alpha\Big) - 1\right),
 \]
picking up an exponential prefactor along the way.

In this article, we prove that the dimensional dependence is at most linear\footnote{This is also established in-effect by \cite[Theorem 7, (2)]{WH19}, which was developed independently and posted slightly later than the first preprint of this article in which we established the linear bound discussed here.}, and in fact
\begin{equation} \label{eq:intro-Renyi-Lip}
|H_\alpha(p) - H_\alpha(q)| \leq  \frac{d\alpha}{\alpha-1} \frac{1}{\ln(2)} \eps
\end{equation}
for $\TV(p,q) \leq \eps$ and $\alpha > 1$. In fact, we prove that the Tsallis entropy increases the fastest near the corners of the probability simplex (i.e. the extremal points which are permutations of $(1,0,\dotsc,0)$) and the slowest near the center of the simplex (see \Cref{thm:Delta-eps-Schur-convex}), while the $\alpha$-R\'enyi entropy increases the fastest close to the center of the probability simplex. This mismatch shows why bounding the difference of R\'enyi entropies of two probability vectors by the difference of Tsallis entropies of the same two distributions does not work well: a large prefactor is needed to bound the rapidly-changing R\'enyi entropies near the center of the simplex by the Tsallis entropies which change the slowest there. With the benefit of hindsight (and our proof techniques), we can find that indeed, a linear prefactor suffices to compare the maximum differences in R\'enyi entropies between two probability vectors which are at a fixed total variation distance apart, and the maximum difference in Tsallis entropies between two distributions at the same distance apart; however, these two maximum differences occur at very different parts of the probability simplex. These two effects can be seen quantitatively in \Cref{prop:compare-Renyi-Tsallis}.

To prove the bound \eqref{eq:intro-Renyi-Lip}, as well as determine where each entropy increases the fastest, we take a unified approach to establishing  entropic continuity bounds. While concavity only holds for certain entropies, we exploit the fact that all the above entropies are \emph{Schur concave}, meaning they are decreasing in the majorization order: $H(q) \leq H(p)$ if $p \prec q$.  Majorization thus provides a tool for understanding properties of a wide class of entropies in a unified manner.

We consider a class of entropic functionals called $(h,\phi)$-entropies, which were introduced by \cite{SMMP93a}. These are defined by the formula
\[
 H_{(h,\phi)}(r) := h\left(\sum_{i=1}^d \phi(r_i)\right)
\]
for $r \in \cP$, and likewise by $H_{(h,\phi)}(\rho) = h(\tr(\phi(\rho)))$ for $\rho \in \cD(\cH)$, using the functional calculus. The concept of $(h,\phi)$-entropies for quantum states was introduced by \cite{Bosyk2016}. In other words, the $(h,\phi)$-entropy of a quantum state is defined by the $(h,\phi)$-entropy of the probability vector given by its eigenvalues.

We will consider two classes of $(h,\phi)$-entropies which capture almost all single-partite entropies considered in the literature.
\begin{itemize}
	\item We say a function $S: \cP\to \R$ is a  \emph{\typeone{} $(h,\phi)$-entropy} if $S = H_{(h,\phi)}$ for some  $\phi: [0,1] \to \R$ which is continuously differentiable on $(0,1]$ and $h: [\phi(1), \phi(\tfrac{1}{d})d] \subset \R \to \R$ which is continuously differentiable on $(\phi(1), \phi(\tfrac{1}{d})d]$, such that $\phi(0) = 0$ and $h(\phi(1)) = 0$, with $h$  strictly increasing and (not necessarily strictly) concave, and $\phi$  strictly concave.
	\item We say a function $S: \cP\to \R$ is a  \emph{\typetwo{}  $(h,\phi)$-entropy} if $S = H_{(h,\phi)}$ for some  $\phi: [0,1] \to \R$ which is continuously differentiable on $(0,1]$ and $h:  [\phi(\tfrac{1}{d})d, \phi(1)] \subset \R \to \R$ which is continuously differentiable on $ [\phi(\tfrac{1}{d})d, \phi(1))$, such that $\phi(0) = 0$ and $h(\phi(1)) = 0$, with $h$  strictly decreasing and (not necessarily strictly) convex, and $\phi$  strictly convex.
\end{itemize}
Both classes of $(h,\phi)$-entropies are strictly Schur concave; this follows immediately from the fact that symmetric and strictly convex (resp.~strictly concave) functions are strictly Schur convex (resp.~strictly Schur concave), and that the composition with a strictly increasing function preserves strict Schur convexity and strict Schur concavity. In contrast, composition with a strictly decreasing function swaps strict Schur convexity and strict Schur concavity. Likewise, both classes of $(h,\phi)$-entropies are continuously differentiable on $\cP_+$.

\paragraph{Examples of $(h,\phi)$-entropies}
\begin{itemize}
    \item The $\alpha$-R\'enyi entropy for $\alpha <1$, the $\alpha$-Tsallis entropy for $\alpha>0$, the von Neumann entropy (or Shannon entropy in the classical case), the $(s,\alpha)$-unified entropies for $\alpha\in(0,1)$ and $s\leq 1$, and entropies induced by $f$ divergences with strictly convex $f$ are all \typeone{} $(h,\phi)$-entropies.
    \item The $\alpha$-R\'enyi entropy for $\alpha >1$ and the $(s,\alpha)$-unified entropies for $\alpha > 1$ and $s\leq 1$ are  \typetwo{} $(h,\phi)$-entropies.
\end{itemize}
The R\'enyi and Tsallis entropies are discussed in more detail in \Cref{sec:tsallis-renyi}, and the other entropies are discussed in \Cref{sec:more-cty-bounds}.

While both \typeone{} and \typetwo{} $(h,\phi)$-entropies are Schur concave, \typeone{} $(h,\phi)$-entropies are additionally concave, as the composition of a concave increasing function with a concave function. On the other hand, in general \typetwo{} $(h,\phi)$-entropies are neither convex nor concave. In this article, we investigate the continuity properties of these two classes of entropies.

Let $H=H_{(h,\phi)}$ be an $(h,\phi)$-entropy (of either type). By \eqref{eq:Gamma-for-symmetric-H}, we have immediately that
\begin{equation}
\Gamma_H(r) = \left. \frac{\d}{\d t}  H_{(h,\phi)}(\mmm_{t}(r))\right|_{t=0} = h'( {\textstyle \sum_i} \phi(r_i) ) (  \phi'(r_-) - \phi'(r_+) ) \label{eq:Gamma_hphi}
\end{equation}
where $r_+$ denotes the largest element of $r$, and $r_-$ the smallest. Note that $r \mapsto r_-$ is Schur concave, while $r \mapsto r_+$ is Schur convex. 

\paragraph{\typeone{} $(h,\phi)$-entropies}
We see that for a \typeone{} $(h,\phi)$-entropy $H \equiv H_{(h,\phi)}$,
\[
r \mapsto (  \phi'(r_-) - \phi'(r_+) )
\]
is Schur convex and strictly positive for $r$ non-uniform, and likewise
\[
r \mapsto h'( {\textstyle \sum_i} \phi(r_i) )
\]
is strictly positive and strictly Schur convex. Hence, $\Gamma_H$ is strictly Schur convex on $\cP_+$. Then, following the discussion at the start of \Cref{sec:cty-from-majflow}, we obtain the following results.

\begin{theorem} \label{thm:Delta-eps-Schur-convex}
Let $H_{(h,\phi)}$ be a \typeone{} $(h,\phi)$-entropy. Then $\Gamma_H$ is strictly Schur convex on $\cP_+$, and
\[
r \mapsto \sup_{p \in B_\eps(r)} H_{(h,\phi)}(p) - H_{(h,\phi)}(r)
\]
is strictly Schur convex on $\cP$.
\end{theorem}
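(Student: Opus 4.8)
The plan is to establish the two assertions in turn: the strict Schur convexity of the generator $\Gamma_H$ on $\cP_+$ is the local statement, and strict Schur convexity of the map $r \mapsto \Delta_\eps^H(r) := \sup_{p\in B_\eps(r)} H_{(h,\phi)}(p) - H_{(h,\phi)}(r)$ is its global, integrated counterpart.

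For the first assertion I would start from the product form \eqref{eq:Gamma_hphi}, writing
\[
\Gamma_H(r) = \underbrace{h'\!\left(\sum_{i=1}^d \phi(r_i)\right)}_{A(r)}\,\underbrace{\bigl(\phi'(r_-)-\phi'(r_+)\bigr)}_{B(r)}.
\]
The factor $A$ is strictly positive, since $h$ is strictly increasing, and Schur convex: the map $r\mapsto \sum_i\phi(r_i)$ is strictly Schur concave because $\phi$ is strictly concave, while $h'$ is nonincreasing because $h$ is concave, so the composition reverses the order. The factor $B$ is nonnegative and vanishes exactly at $u$, because $\phi'$ is strictly decreasing (strict concavity of $\phi$) and $r_-\le r_+$; it is Schur convex since $r\mapsto r_+$ is Schur convex and $r\mapsto r_-$ is Schur concave, as recorded just after \eqref{eq:Gamma_hphi}. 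I would then invoke the elementary fact that the product of a strictly positive Schur convex function with a strictly positive, strictly Schur convex function is strictly Schur convex, applied to $A\cdot B$, to conclude.

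For the second assertion I would avoid splitting $H(\mmm_\eps(r)) - H(r)$ directly, since $r\mapsto H(\mmm_\eps(r))$ is Schur concave (the flow is majorization preserving and $H$ is Schur concave) while $-H$ is Schur convex, so their sum is indeterminate; instead I would integrate the generator. By \eqref{eq:Delta-eps-by-Gamma} we have $\Delta_\eps^H(r) = \int_0^\eps \Gamma_H(\mmm_s(r))\,\d s$, and by the majorization preservation of the flow, \Cref{prop:properties_of_Lambda_eps}\ref{item:Lambda_eps-maj-preserving}, if $r\prec q$ then $\mmm_s(r)\prec \mmm_s(q)$ for every $s$. Schur convexity of $\Gamma_H$ then gives $\Gamma_H(\mmm_s(r))\le \Gamma_H(\mmm_s(q))$ pointwise in $s$, and integrating yields Schur convexity of $\Delta_\eps^H$. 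To upgrade to strictness for $r\prec q$ with $r$ not a permutation of $q$, I would show that $\mmm_s(r)$ fails to be a permutation of $\mmm_s(q)$ on a set of $s$ of positive measure, so that strict Schur convexity of $\Gamma_H$ forces a strict gap between the integrands there.

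The crux in both parts is strictness. For $\Gamma_H$ the delicate configurations are those in which the largest and smallest entries of $r$ and $q$ already coincide, so that the factor $B$ does not separate the two points and strictness must be carried by $A$, i.e.\ by the strict Schur concavity of $\sum_i\phi(r_i)$ propagated through $h'$ — effective precisely when $h$ is strictly concave. The genuinely borderline case is $h$ affine (as for the Tsallis entropy), where $A$ is constant and the local strict inequality can degenerate; I expect the strict Schur convexity of $\Delta_\eps^H$ to be recovered there not from the pointwise generator but from the flow itself, since the two paths $\mmm_s(r)$ and $\mmm_s(q)$ separate their extreme entries for $s>0$. Certifying this positive-measure separation is what I anticipate to be the main obstacle, and I would address it using the piecewise-affine structure of $s\mapsto\mmm_s(r)$ (at most $d$ pieces, by \eqref{eq:linear-perturbation} together with the semigroup property): two non-permutation starting points can be identified by the flow only at finitely many times before both reach the uniform distribution, which reduces the strictness to a finite bookkeeping of the breakpoints of the two paths.
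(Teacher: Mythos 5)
Your treatment of the non-strict claims is exactly the paper's argument: factor $\Gamma_H$ as in \eqref{eq:Gamma_hphi} into $A(r)=h'\bigl(\sum_i\phi(r_i)\bigr)$ and $B(r)=\phi'(r_-)-\phi'(r_+)$, check both factors are nonnegative and Schur convex, and then integrate $\Gamma_H$ along the majorization-preserving flow via \eqref{eq:Delta-eps-by-Gamma} to pass from the local to the global statement. The genuine gap is precisely where you located it: strictness. As written, your appeal to the product fact (``strictly positive Schur convex times strictly positive strictly Schur convex is strictly Schur convex'') has nothing to apply to, since you have shown neither $A$ nor $B$ to be strictly Schur convex---and when $h$ is affine, neither is. More importantly, your fallback plan---recovering strict Schur convexity of $\Delta_\eps^H$ from a positive-measure separation of the two flow paths---cannot work, because the strict statement is simply false when $h$ is affine. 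Take $d=4$, $H=T_\alpha$ (so $h(x)=x$, $h'\equiv 1$), and
\[
p = (0.4,\,0.3,\,0.2,\,0.1) \prec q = (0.4,\,0.35,\,0.15,\,0.1),
\]
two vectors in $\cP_+$ that are not permutations of one another. Since $h'$ is constant, $\Gamma_H$ depends only on the extreme entries, which coincide, so $\Gamma_H(p)=\Gamma_H(q)$: strict Schur convexity of $\Gamma_H$ fails on $\cP_+$. Worse, $\gen(p)=\gen(q)=(-1,0,0,1)$ by \eqref{eq:def_X}, and for $s\leq 0.05$ both paths are still in their first affine piece, $\mmm_s(p)=(0.4-s,\,0.3,\,0.2,\,0.1+s)$ and $\mmm_s(q)=(0.4-s,\,0.35,\,0.15,\,0.1+s)$, with identical extreme entries; hence $\Gamma_H(\mmm_s(p))=\Gamma_H(\mmm_s(q))$ for all $s\in[0,0.05]$, and by \eqref{eq:Delta-eps-by-Gamma}, $\Delta_\eps^H(p)=\Delta_\eps^H(q)$ for every $\eps\leq 0.05$. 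The paths do not separate before time $0.05$, so no bookkeeping of breakpoints can close this step.

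You should know that the paper's own proof stumbles at the same point, but less carefully: it asserts that $r\mapsto h'\bigl(\sum_i\phi(r_i)\bigr)$ is strictly Schur convex, which is false whenever $h$ is affine (Tsallis, von Neumann, $S_f$)---exactly the borderline case you singled out. What is actually provable, and what both your argument and the paper's do establish, is this: (i) for every \typeone{} $(h,\phi)$-entropy, $\Gamma_H$ and $\Delta_\eps^H$ are Schur convex (which is all that is needed for the uniform continuity bound of \Cref{thm:hphi-GCB}); and (ii) they are \emph{strictly} Schur convex under the additional hypothesis that $h$ is strictly concave, in which case $A$ is strictly Schur convex and strictly positive away from $u$, your product fact applies verbatim, and strictness of the integrated quantity follows since the integrands already differ strictly at $s=0$ and by continuity on a neighborhood of $s=0$. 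This covers the $\alpha$-R\'enyi entropies with $\alpha\in(0,1)$ but not the Tsallis entropies. In short: your diagnosis of the crux is correct and sharper than the paper's own treatment, but the strict form of the statement cannot be proved, because it is not true.
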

\begin{corollary}[Tight uniform continuity bounds for \typeone{} $(h,\phi)$-entropies] \label{cor:tight-uniform-bounds-concave-type}
For $\eps \in (0,1]$ and any states $p,q\in \cP$  such that $\TV(p,q)\leq \eps$, we have
 \begin{equation} 
 | H_{(h,\phi)}(p) - H_{(h,\phi)}(q) | \leq g(\eps) \label{eq:hphi-uniform-bound} 
 \end{equation}
 where
 \begin{equation} \label{eq:def_g}
 g(\eps) := \begin{cases}
h( \phi(1-\eps) + (d-1) \phi( \frac{\eps}{d-1})) & \eps < 1-\frac{1}{d}\\
h(d\phi(\frac{1}{d})) & \eps \geq 1 - \frac{1}{d}
\end{cases} 
 \end{equation}
and $d$ is the dimension. Moreover, equality in~\eqref{eq:hphi-uniform-bound} occurs if and only if one of the two distributions (say, $q$) is extremal (i.e.\@~a permutation of $(1,0,\dotsc,0)$), and either
\begin{enumerate}
	\item $\eps < 1 - \frac{1}{d}$ and $p = \pi( 1- \eps, \frac{\eps}{d-1},\dotsc, \dotsc, \frac{\eps}{d-1})$ for some permutation $\pi$, or
	\item  $\eps \geq 1- \frac{1}{d}$, and $p = u$ is uniform.
\end{enumerate}\label{thm:hphi-GCB}

\end{corollary}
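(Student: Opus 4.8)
The plan is to obtain the inequality directly from \Cref{thm:Delta-eps-Schur-convex} together with the reduction \eqref{eq:uniform-bound-from-Delta-eps}, and then to read off the equality conditions from the \emph{strictness} of the two monotonicity properties in play.

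Write $H=H_{(h,\phi)}$ and $\Delta_\eps^H(r)=\sup_{p'\in B_\eps(r)}H(p')-H(r)$. Given $p,q$ with $\TV(p,q)\le\eps$, assume without loss of generality that $H(p)\ge H(q)$. Since $p\in B_\eps(q)$, the definition of $\Delta_\eps^H$ gives $|H(p)-H(q)|=H(p)-H(q)\le\Delta_\eps^H(q)$. By \Cref{thm:Delta-eps-Schur-convex}, $\Delta_\eps^H$ is strictly Schur convex on $\cP$, and since $\psi=(1,0,\dotsc,0)$ majorizes every vector in $\cP$ we get $\Delta_\eps^H(q)\le\Delta_\eps^H(\psi)$. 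Thus everything reduces to evaluating $\Delta_\eps^H(\psi)=H(\mmm_\eps(\psi))-H(\psi)$. The normalization assumptions $\phi(0)=0$ and $h(\phi(1))=0$ give $H(\psi)=h(\phi(1)+(d-1)\phi(0))=0$, while \eqref{eq:Lambda-eps-pure} supplies the closed form of $\mmm_\eps(\psi)$; substituting it into $H=h(\sum_i\phi(\cdot))$ reproduces exactly the two branches of $g(\eps)$ in \eqref{eq:def_g}. This proves \eqref{eq:hphi-uniform-bound}.

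For the equality case, I would trace back through this chain: equality in \eqref{eq:hphi-uniform-bound} forces both $\Delta_\eps^H(q)=\Delta_\eps^H(\psi)=g(\eps)$ and $H(p)=H(q)+\Delta_\eps^H(q)$. Because $\Delta_\eps^H$ is \emph{strictly} Schur convex and $\psi$ strictly majorizes every non-extremal vector, the first equality forces $q$ to be a permutation of $\psi$; then $H(q)=0$ and $H(p)=H(\mmm_\eps(q))$, i.e.\ $p$ attains the maximum of $H$ over $B_\eps(q)$. Since $\mmm_\eps(q)\prec p'$ for every $p'\in B_\eps(q)$ and $H$ is \emph{strictly} Schur concave, $H(p)=H(\mmm_\eps(q))$ can hold only if $p$ is a permutation of $\mmm_\eps(q)$.

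The main obstacle is this last step, because strict Schur concavity alone only pins down the \emph{sorted} entries of $p$; I must still use the constraint $p\in B_\eps(q)$ to fix the ordering. A short total-variation computation settles it: with $q$ supported at a single coordinate, placing the entry $1-\eps$ of $\mmm_\eps(q)$ off that coordinate costs $\TV(p,q)=1-\tfrac{\eps}{d-1}$, which exceeds $\eps$ precisely when $\eps<1-\tfrac1d$. Hence in the regime $\eps<1-\tfrac1d$ only the aligned permutation survives, giving $p=\pi(1-\eps,\tfrac{\eps}{d-1},\dotsc,\tfrac{\eps}{d-1})$, while in the regime $\eps\ge1-\tfrac1d$ we have $\mmm_\eps(q)=u$, which is permutation-invariant (and satisfies $\TV(u,q)=1-\tfrac1d\le\eps$), so the condition collapses to $p=u$. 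These are exactly the two stated equality conditions.
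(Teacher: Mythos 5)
Your proof is correct and takes essentially the same route the paper intends: the bound is obtained from the strict Schur convexity of $\Delta_\eps^H$ (\Cref{thm:Delta-eps-Schur-convex}) combined with \eqref{eq:uniform-bound-from-Delta-eps}, the fact that $\psi=(1,0,\dotsc,0)$ majorizes every element of $\cP$, and the closed form \eqref{eq:Lambda-eps-pure} of $\mmm_\eps(\psi)$, with the equality conditions read off from the strictness of Schur convexity of $\Delta_\eps^H$ and of Schur concavity of $H_{(h,\phi)}$. Your explicit total-variation computation showing that, for $\eps<1-\tfrac1d$, only the permutation of $(1-\eps,\tfrac{\eps}{d-1},\dotsc,\tfrac{\eps}{d-1})$ aligned with the support of $q$ stays inside $B_\eps(q)$ is a detail the paper leaves implicit, and it correctly completes the equality analysis.
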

This provides a tight uniform continuity bound for the Tsallis entropies, the $\alpha$-R\'enyi entropies for $\alpha \in (0,1)$, the Shannon entropy, the $(s,\alpha)$-unified entropies with $\alpha\in (0,1)$ and $s\leq 1$, and any entropy induced by an $f$-divergence or maximal $f$-divergence with strictly convex $f$. See \Cref{sec:more-cty-bounds} for more details and references.

Given an $(h,\phi)$-entropy, we may also consider its smoothed variant,
\begin{equation} \label{eq:def-H-smoothed}
H_{(h,\phi)}^\delta(p) := \max_{q \in B_\delta(p)} H_{(h,\phi)}(q) = H_{(h,\phi)}\circ \mmm_\delta(p)
\end{equation}
for $\delta\in[0,1]$. If $H_{(h,\phi)}$ is \typeone{}, we can simply establish Lipschitz continuity bounds for any $\delta > 0$ by using the Schur concavity of $\Gamma_{H_{(h,\phi)}}$ and \Cref{cor:opt-Lip}.


\begin{corollary} \label{prop:typ1-Lip}
Let $H_{(h,\phi)}^\delta$ be the smoothed variant of a \typeone{} $(h,\phi)$-entropy, for $\delta \in [0, 1]$ (as defined in \eqref{eq:def-H-smoothed}). Then $H_{(h,\phi)}^\delta$ is Lipschitz continuous on $\cP$ if and only if
\begin{equation}\label{eq:opt-Lip-smoothed}
k := \lim_{\eps \to 0} \frac{g(\eps+\delta)-g(\delta)}{\eps} < \infty
\end{equation}
where $g(\eps) := H_{(h,\phi)}(\psi_\eps^*)$ for $\psi_\eps^* = \diag(1-\eps, \frac{\eps}{d-1},\dotsc, \frac{\eps}{d-1})$ is given in \eqref{eq:def_g}. Moreover, if $k$ is finite, then it is the optimal Lipschitz constant for $H_{(h,\phi)}^\delta$. In particular, if $\delta >0$, then
\[
k = g'(\delta) = h'(\phi(1-\delta) + (d-1) \phi(\tfrac{\delta}{d-1}))(\phi'(\tfrac{\delta}{d-1}) - \phi'(1-\delta))
\]
and $H_{(h,\phi)}^\delta$ is Lipschitz continuous.
\end{corollary}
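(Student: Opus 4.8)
The plan is to reduce every claim to the already-established properties of the underlying Concave-Type entropy $H := H_{(h,\phi)}$ via the identity $H_{(h,\phi)}^\delta = H\circ\mmm_\delta$ together with the semigroup property of the majorization flow. The first step is to record that $H_{(h,\phi)}^\delta$ is Schur concave: since $\mmm_\delta$ is majorization-preserving (\Cref{prop:properties_of_Lambda_eps}) and $H$ is Schur concave, $p\prec q$ gives $\mmm_\delta(p)\prec\mmm_\delta(q)$ and hence $H_{(h,\phi)}^\delta(p)\ge H_{(h,\phi)}^\delta(q)$. Consequently $H_{(h,\phi)}^\delta(\mmm_\eps(r)) = \max_{p\in B_\eps(r)}H_{(h,\phi)}^\delta(p)$, so the local-increase functional $\Delta_\eps^{H^\delta}(r):=\max_{p\in B_\eps(r)}H_{(h,\phi)}^\delta(p)-H_{(h,\phi)}^\delta(r)$ equals $H(\mmm_{\delta+\eps}(r))-H(\mmm_\delta(r))$ by the semigroup property $\mmm_\delta\circ\mmm_\eps=\mmm_{\delta+\eps}$.

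Next I would compute the infinitesimal rate. Using the semigroup property, $\partial_s^+H_{(h,\phi)}^\delta(\mmm_s(r))|_{s=0}=\partial_s^+H(\mmm_{\delta+s}(r))|_{s=0}$, which by \eqref{eq:H-integral-formula} (applied to the genuinely $C^1$ function $H$) equals $\Gamma_H(\mmm_\delta(r))$. Thus the correct analogue of the rate for $H_{(h,\phi)}^\delta$ is $\Gamma_{H^\delta}=\Gamma_H\circ\mmm_\delta$. Crucially, this is Schur convex on $\cP_+$: $\Gamma_H$ is Schur convex by \Cref{thm:Delta-eps-Schur-convex}, and precomposition with the majorization-preserving map $\mmm_\delta$ preserves Schur convexity. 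Writing $\Delta_\eps^{H^\delta}(r)=\int_\delta^{\delta+\eps}\Gamma_H(\mmm_s(r))\,\d s=\int_0^\eps\Gamma_{H^\delta}(\mmm_s(r))\,\d s$ (again through the $C^1$ function $H$), the argument of item~1 following \eqref{eq:Delta-eps-by-Gamma} shows $\Delta_\eps^{H^\delta}$ is Schur convex on $\cP$, so it is maximized at the corner $\psi=(1,0,\dotsc,0)$.

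This last fact is the engine of the proof. Evaluating at $\psi$ and using \Cref{prop:properties_of_Lambda_eps}, $\sup_{r\in\cP}\Delta_\eps^{H^\delta}(r)=\Delta_\eps^{H^\delta}(\psi)=H(\mmm_{\delta+\eps}(\psi))-H(\mmm_\delta(\psi))=g(\delta+\eps)-g(\delta)$, since $g(\eps)=H(\psi_\eps^*)=H(\mmm_\eps(\psi))$. Combining with \eqref{eq:uniform-bound-from-Delta-eps} yields the uniform bound $|H_{(h,\phi)}^\delta(p)-H_{(h,\phi)}^\delta(q)|\le g(\delta+\eps)-g(\delta)$ whenever $\TV(p,q)\le\eps$. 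Dividing by $\eps$ and letting $\eps\to0$ identifies the best slope as $k=\lim_{\eps\to0}\tfrac1\eps(g(\delta+\eps)-g(\delta))$. Because $g'(\eps)=\Gamma_H(\mmm_\eps(\psi))$ is non-increasing in $\eps$ (the flow decreases majorization and $\Gamma_H$ is Schur convex), $g$ is concave, so $g(\delta+\eps)-g(\delta)\le k\eps$ for all $\eps$; hence $H_{(h,\phi)}^\delta$ is genuinely $k$-Lipschitz whenever $k<\infty$. For optimality and the converse I would test the bound on the pair $(\psi,\psi_\eta^*)$, for which $\TV(\psi,\psi_\eta^*)=\eta$ and $H_{(h,\phi)}^\delta(\psi_\eta^*)-H_{(h,\phi)}^\delta(\psi)=g(\delta+\eta)-g(\delta)$; the difference quotient tends to $k$, forcing any Lipschitz constant to be $\ge k$ and showing that $k=\infty$ precludes Lipschitz continuity. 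Finally, for $0<\delta<1-\tfrac1d$ the chain rule applied to $g(\eps)=h(\phi(1-\eps)+(d-1)\phi(\tfrac{\eps}{d-1}))$ from \eqref{eq:def_g} gives $k=g'(\delta)=h'(\phi(1-\delta)+(d-1)\phi(\tfrac{\delta}{d-1}))(\phi'(\tfrac{\delta}{d-1})-\phi'(1-\delta))$, which is finite because $1-\delta,\tfrac{\delta}{d-1}\in(0,1)$ place every argument in the interior of the domains of $\phi'$ and $h'$; this yields unconditional Lipschitz continuity for positive smoothing.

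The step I expect to require the most care is the regularity underpinning this $\Gamma$-calculus for $H_{(h,\phi)}^\delta$: since $\mmm_\delta$ is only piecewise affine, $H_{(h,\phi)}^\delta$ need not be continuously differentiable on $\cP_+$, so \Cref{cor:opt-Lip} cannot be invoked verbatim. The remedy, used silently above, is to never differentiate $H_{(h,\phi)}^\delta$ directly but to express all quantities through $s\mapsto H(\mmm_{\delta+s}(r))$, which inherits absolute continuity and the integral representation \eqref{eq:H-integral-formula} from the genuinely $C^1$ entropy $H$; this is exactly what makes the reduction $\Gamma_{H^\delta}=\Gamma_H\circ\mmm_\delta$ legitimate and lets the corner-maximization argument go through.
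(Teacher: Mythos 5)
Your proposal is correct, and its core is the same as the paper's: both proofs hinge on the semigroup identity $\Gamma_{H_{(h,\phi)}^\delta}(r)=\Gamma_{H_{(h,\phi)}}(\mmm_\delta(r))$, the strict Schur convexity of $\Gamma_{H_{(h,\phi)}}$ from \Cref{thm:Delta-eps-Schur-convex}, and the majorization-preservation of $\mmm_\delta$, so that the rate is maximized at the corner $\psi=(1,0,\dotsc,0)$ (the paper handles $\delta=0$ by \Cref{cor:tight-uniform-bounds-concave-type} and then does exactly your computation for $\delta>0$). Where you diverge is the finishing move: the paper simply invokes \Cref{cor:opt-Lip}, whereas you refuse to apply it verbatim on the grounds that $H_{(h,\phi)}^\delta=H_{(h,\phi)}\circ\mmm_\delta$ need not be continuously differentiable on $\cP_+$ (the map $\mmm_\delta$ has kinks where the active sets of capped/raised entries change), and instead you rebuild the conclusion by hand: the integral representation along the flow inherited from the $C^1$ entropy $H_{(h,\phi)}$ via the semigroup property, Schur convexity of $\Delta_\eps^{H^\delta}$ giving the uniform bound $g(\delta+\eps)-g(\delta)$, concavity of $g$ giving the global Lipschitz estimate $g(\delta+\eps)-g(\delta)\le k\eps$, and the test pair $(\psi,\psi_\eta^*)$ giving optimality and the converse. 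This is a genuine improvement in rigor: your observation pinpoints that the hypothesis of \Cref{cor:opt-Lip} is not literally satisfied by the smoothed entropy, and your remedy — noting that the proof of that corollary only ever uses the existence of the one-sided derivative along the flow together with the integral formula, both of which hold for $H_{(h,\phi)}^\delta$ by the semigroup reduction — is exactly what is needed to make the paper's shorter argument airtight. The cost is length; the paper's route is more economical but leans on \Cref{cor:opt-Lip} applying beyond its stated smoothness hypothesis.
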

\begin{proof}
If $\delta = 0$, the result follows from \Cref{cor:tight-uniform-bounds-concave-type}. Hence, consider the case $\delta > 0$. For $r\in \cP_+$,
\begin{align*}	
\Gamma_{H_{(h,\phi)}^\delta}(r) &= \lim_{s\downarrow 0} \frac{1}{s}\left(H_{(h,\phi)}^\delta(\mmm_s(r)) - H_{(h,\phi)}^\delta(r)\right)\\
&= \lim_{s\downarrow 0} \frac{1}{s}\left(H_{(h,\phi)}(\mmm_{s+\delta}(r)) - H_{(h,\phi)}(\mmm_\delta(r))\right)\\
&= \lim_{s\downarrow 0} \frac{1}{s}\left(H_{(h,\phi)}(\mmm_{s}(\mmm_\delta(r)) - H_{(h,\phi)}(\mmm_\delta(r))\right)\\
&= \Gamma_{H_{(h,\phi)}}(\mmm_\delta(r)).
\end{align*}
Since $\Gamma_{H_{(h,\phi)}}$ is strictly Schur convex on $\cP_+$ by \Cref{thm:Delta-eps-Schur-convex}, and for any $\delta > 0$, the map $\mmm_\delta: \cP\to \cP_+$ is majorization-preserving, we have that $r \mapsto \Gamma_{H_{(h,\phi)}}(\mmm_\delta(r))$ is strictly Schur convex on $\cP$. Hence, it is maximized at $r = \psi := (1,0,\dotsc,0)$ (or any permutation thereof). Invoking \Cref{cor:opt-Lip} completes the proof.
\end{proof}

\paragraph{\typetwo{} $(h,\phi)$-entropies}

For a \typetwo{} $(h,\phi)$-entropy $H \equiv H_{(h,\phi)}$, 
\[
r \mapsto (  \phi'(r_+) - \phi'(r_-) )
\]
is Schur convex and strictly positive for non-uniform $r$, while
\[
r \mapsto -h'( {\textstyle \sum_i} \phi(r_i) )
\]
is strictly Schur concave and strictly positive. Hence, $\Gamma_H$ is the product of a Schur convex and Schur concave function. The former only depends on the largest and smallest entries of $r$, however. In the case $d=2$, these are all the entries, and for $x\in (0, \frac{1}{2}]$,
\[
\Gamma_H(\left\{ x, 1-x \right\}) = h'(\phi(x) + \phi(1-x)) (\phi'(x) - \phi'(1-x)) = \bin_{(h,\phi)}'(x)
\]
is the derivative of the binary $(h,\phi)$-entropy, where $\bin_{(h,\phi)}(x) := H_{(h,\phi)}(\left\{ x, 1-x \right\})$.
In the case $d > 2$, define
\[
\bar r = (r_1, \underbrace{z, \dotsc, z}_{d-2 \text{ times}}, r_d)
\]
where without loss of generality, $r_1 \geq r_2 \geq \dotsm \geq r_d$ are the sorted elements of $r$. Then $\Gamma_H(r) \leq \Gamma_H(\bar r)$, as $\bar r \prec r$ while having the same largest and smallest elements.

\begin{theorem} \label{thm:Convex-type-Lip}
Let $H_{(h,\phi)}$ be a \typetwo{} $(h,\phi)$-entropy.
If $d = 2$, then
\begin{equation}\label{eq:Convex-type-Lip-d-2}
\sup_{r\in \cP_+}\Gamma_H(r) = \sup_{0 < x \leq \frac{1}{2}} \bin'_{(h,\phi)}(x).
\end{equation}
If $d > 2$, then
\begin{equation} \label{eq:Convex-type-Lip-d-gt-2}
\sup_{r\in \cP_+}\Gamma_H(r) =\sup_{\substack{x, y: \\ 0 < x\leq \frac{1}{d} \leq y\\x \leq z \leq y}} -h'(\phi(y) + (d-2)\phi(z) + \phi(x))(\phi'(y) - \phi'(x)), \qquad z := \frac{1-y-x}{d-2}
\end{equation}
In either case, if $\phi$ is differentiable at zero and $h$ is differentiable at $\phi(1)$, then $H_{(h,\phi)}$ is Lipschitz continuous, and the optimal Lipschitz constant is given by $\sup_{r\in \cP_+}\Gamma_H(r)$.
\end{theorem}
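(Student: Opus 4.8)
The plan is to reduce everything to \Cref{cor:opt-Lip}, which already tells us that $H_{(h,\phi)}$ is Lipschitz continuous precisely when $k := \sup_{r\in\cP_+}\Gamma_H(r)$ is finite, in which case $k$ is the optimal Lipschitz constant. Thus the theorem amounts to two tasks: first, evaluating this supremum in the forms \eqref{eq:Convex-type-Lip-d-2} and \eqref{eq:Convex-type-Lip-d-gt-2}; and second, showing the supremum is finite under the stated differentiability hypotheses. Throughout I would work from \eqref{eq:Gamma_hphi}, writing $\Gamma_H(r) = \bigl(-h'(\sum_i \phi(r_i))\bigr)\bigl(\phi'(r_+) - \phi'(r_-)\bigr)$ as a product of two strictly positive factors on $\cP_+\setminus\{u\}$.

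For $d = 2$ the evaluation is immediate. Given $r = (x,1-x)\in\cP_+$, permutation symmetry of $\Gamma_H$ lets me assume $x\le\tfrac12$, so that $r_- = x$ and $r_+ = 1-x$. Substituting into $\Gamma_H$ recovers exactly $h'(\phi(x)+\phi(1-x))(\phi'(x)-\phi'(1-x))$, which is $\bin'_{(h,\phi)}(x)$; taking the supremum over $x\in(0,\tfrac12]$ gives \eqref{eq:Convex-type-Lip-d-2}.

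For $d>2$ I would invoke the reduction established just before the theorem statement: for any $r$ with sorted entries $r_1\ge\dots\ge r_d$, the vector $\bar r = (r_1, z,\dots, z, r_d)$ obtained by replacing the $d-2$ middle entries with their common average $z$ satisfies $\bar r\prec r$ while sharing the same largest and smallest entries. Since the factor $\phi'(r_+)-\phi'(r_-)$ depends only on $r_\pm$ and the factor $-h'(\sum_i\phi(\cdot))$ is strictly Schur concave, this yields $\Gamma_H(r)\le\Gamma_H(\bar r)$. Consequently the supremum over $\cP_+$ is unchanged upon restricting to vectors all but two of whose entries coincide. Parametrizing this family by the largest entry $y$, the smallest entry $x$, and the forced common value $z = (1-x-y)/(d-2)$ — with the constraints $0<x\le\tfrac1d\le y$ and $x\le z\le y$ exactly encoding that $\bar r$ is a sorted, strictly positive probability vector — and substituting into $\Gamma_H$ produces \eqref{eq:Convex-type-Lip-d-gt-2}.

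It remains to prove finiteness of $k$, and hence Lipschitz continuity. The factor $\phi'(r_+)$ causes no difficulty, since $r_+\ge\tfrac1d$ confines it to $\phi'$ evaluated on the compact interval $[\tfrac1d,1]$. The two potential sources of blow-up are $r_-\to 0$ (entering through $\phi'(r_-)$) and $\sum_i\phi(r_i)\to\phi(1)$ (entering through $h'$), both of which occur as $r$ approaches a corner of the simplex. The hypothesis that $\phi$ is differentiable at $0$ lets me extend $\phi'$ continuously to $[0,1]$, and the hypothesis that $h$ is differentiable at $\phi(1)$ lets me extend $h'$ continuously up to $\phi(1)$. With both extensions in hand, and using that $r\mapsto r_\pm$ and $r\mapsto\sum_i\phi(r_i)$ are continuous on $\cP$, the function $\Gamma_H$ extends continuously to the compact set $\cP$; therefore $k=\sup_{\cP_+}\Gamma_H<\infty$, and \Cref{cor:opt-Lip} delivers both Lipschitz continuity and the optimality of $k$. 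The delicate point — and the step I would check most carefully — is precisely this last one: the corners are where $r_-\to0$ and $\sum_i\phi(r_i)\to\phi(1)$ happen simultaneously, so one must confirm that the two differentiability assumptions together leave no residual divergence in the product, rather than each controlling only one factor in isolation.
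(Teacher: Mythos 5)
Your proposal is correct and follows essentially the same route as the paper: your $d=2$ computation and the $\bar r = (r_1,z,\dotsc,z,r_d)$ reduction reproduce exactly the discussion preceding the theorem, and the conclusion via \Cref{cor:opt-Lip} is precisely the paper's (very terse) proof, with your finiteness argument supplying a detail the paper leaves implicit. One caveat on that final step: differentiability of $\phi$ at $0$ (or of $h$ at $\phi(1)$) does not in general imply that the derivative extends \emph{continuously} to the endpoint (consider $x^2\sin(1/x)$), but the inference is valid here because $\phi$ and $h$ are convex for a \typetwo{} entropy, so $\phi'$ and $h'$ are monotone and their one-sided limits at the endpoints coincide (by the mean value theorem) with the pointwise derivatives assumed to exist there; with that one-line justification added, your continuity-and-compactness argument for $k<\infty$ is airtight.
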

\begin{proof}	
The discussion before \Cref{thm:Convex-type-Lip} and \eqref{eq:Gamma_hphi} establishes the expressions for $\sup_{r\in \cP_+}\Gamma_H(r)$, and the proof concludes by \Cref{cor:opt-Lip}.
\end{proof}
\begin{remark}
While \eqref{eq:Convex-type-Lip-d-2} (resp. \eqref{eq:Convex-type-Lip-d-gt-2}) do not provide a closed-form expression for $\sup_{r \in \cP_+} \Gamma_H(r)$, they reduce the naively $d$-dimensional optimization problem to a 1- (resp.~2-) dimensional problem.
\end{remark}

\subsection{Continuity bounds for the R\'enyi and Tsallis entropies} \label{sec:tsallis-renyi}
\paragraph{R\'enyi entropies}

The $\alpha$-R\'enyi entropy \cite{Ren61}  for $\alpha \in (0,1)\cup(1,\infty)$, of a quantum state $\rho \in \cD(\cH)$ is defined by
\[
H_\alpha(\rho) := \frac{1}{1-\alpha} \log \left(\tr \rho^\alpha\right).
\]
$H_\alpha$ is the $(h,\phi)$-entropy with $h(x) = \frac{1}{1-\alpha}\log x$ for $x\in \RR$ and $\phi(x) = x^\alpha$ for $x\in [0,1]$. For $\alpha\in (0,1)$, $h$ is concave and strictly increasing and $\phi$ is strictly concave. For $\alpha >1$, $h$ is convex and strictly decreasing, and $\phi$ is strictly convex. Hence, $H_\alpha$ is a \typeone{} $(h,\phi)$-entropy for $\alpha \in (0,1)$, and is a \typetwo{} $(h,\phi)$-entropy for $\alpha > 1$. It is known that $\lim_{\alpha \to 1}H_\alpha(\rho) = H(\rho)$, and $\lim_{\alpha\to\infty}H_\alpha(\rho) = H_\infty(\rho) := - \log \lambda_{\max}(\rho)$, where $\lambda_{\max}(\rho)$ denotes the largest eigenvalue of $\rho$.

For a probability distribution $p \in \cP$, the above quantity reduces to
\[
H_\alpha(p) = \frac{1}{1-\alpha} \log \left(\sum_{i=1}^d p_i^\alpha\right),
\]
and $H_\infty(p) = - \log \max_{1\leq i \leq d} p_i$. 

\paragraph{Tsallis entropies}

The $\alpha$-Tsallis entropy \cite{Tsallis1988} for $\alpha\in(0,1)\cup(1,\infty)$ of a quantum state $\rho \in \cD(\cH)$ is defined by
\[
T_\alpha(\rho) := \frac{1}{1-\alpha}[\tr(\rho^\alpha)-1],
\]
and in the case of a probability vector $p \in \cP$,
\[
T_\alpha(p) = \frac{1}{1-\alpha}\left[\sum_{i=1}^d p_i^\alpha - 1\right].
\]
The Tsallis entropy can be seen as a version of the R\'enyi entropy in which the logarithm has been linearized (up to a factor of $\ln 2$), using the first-order Taylor series $\log x \approx \frac{1}{\ln 2} (x - 1)$. The Tsallis entropy is not additive under tensor products (it is \emph{nonextensive}) and instead satisfies the relation
\begin{equation}\label{eq:Talpha_non_subadd}
T_\alpha(\rho\otimes \sigma) = T_\alpha(\rho) + T_\alpha(\sigma) + (1-\alpha)T_\alpha(\rho)T_\alpha(\sigma)
\end{equation}
for $\alpha \in (0,1)\cup(1,\infty)$ and $\rho,\sigma\in \cD(\cH)$, as can be verified by direct computation.

We have that $T_\alpha = H_{(h,\phi)}$ for $h(x) = x$ and $\phi(x) = \frac{x^\alpha-x}{1-\alpha}$ and hence is a \typeone $(h,\phi)$-entropy. 

\paragraph{Previously known continuity bounds for Tsallis entropies}

Raggio \cite[Lemma 2]{Rag95} showed that $T_\alpha$ is Lipschitz continuous for $\alpha>1$:
\begin{equation} \label{eq:Tsallis-Lip}
|T_\alpha(\rho) - T_\alpha(\sigma)| \leq \frac{2\alpha}{\alpha-1} \eps
\end{equation}
if $T(\rho,\sigma) \leq \eps$,
while Zhang \cite[Theorem 1]{Zha07a} proved that if $T(\rho,\sigma)\leq \eps$ and $\alpha>1$, then
\begin{equation}
|T_\alpha(\rho) - T_\alpha(\sigma)| \leq \begin{cases}
 \frac{1}{1-\alpha}(\eps^\alpha (d-1)^{1-\alpha} +  (1-\eps)^\alpha -1)  & \eps < 1 - \frac{1}{d}\\
 \frac{d^{1-\alpha} - 1}{(1-\alpha)} & \eps \geq 1 - \frac{1}{d}
\end{cases}
\label{eq:Tsallis-bound}
\end{equation}
using a coupling technique\footnote{In fact, \cite[Theorem 1]{Zha07a} considers the case $T(\rho,\sigma)=\eps$; in \eqref{eq:Tsallis-bound}, their bound has been monotonized to hold for $T(\rho,\sigma)\leq \eps$.}. In fact, \eqref{eq:Tsallis-bound} also holds in the case  $0<\alpha<1$ as was shown by \cite[(A.2)]{Audenaert07} via a direct optimization method (adapting the proof of \eqref{eq:Audenaert-Fannes_bound}). This bound for all $\alpha\in (0,1)\cup(1,\infty)$ also appears as \cite[Lemma 1.2]{Renyi-CMNF}, whose proof appears to follow the same direct optimization method as Audenaert. Zhang  \cite[Remark 4]{Zha07a} also derived \eqref{eq:Tsallis-Lip} from \eqref{eq:Tsallis-bound}, and \eqref{eq:Audenaert-Fannes_bound} from the limit $\alpha\to 1$ of \eqref{eq:Tsallis-bound}.

 \cite[Theorem 2.4]{FYK07} showed that if $\alpha\in [0,2]$ and $p,r\in \cP$ such that $\TV(p,r)= \eps \leq \alpha^{1/(1-\alpha)}$, then
\[
|T_\alpha(p) - T_\alpha(r)| \leq (2\eps)^\alpha \ln_\alpha(d) + \eta_\alpha(2\eps)
\]
where $\eta_\alpha(x) = - x^\alpha \ln_\alpha(x)$ and $\ln_\alpha(x) = \frac{x^{1-\alpha}-1}{1-\alpha}$. This bound is less tight than \eqref{eq:Tsallis-bound}, however.

\paragraph{Previously known continuity bounds for R\'enyi entropies}

Audenaert proved a tight uniform continuity bound for the $\alpha$-R\'enyi entropies for $\alpha \in (0,1)$ in 2007 (\cite{Audenaert07}), namely
for  $\eps\in [0,1]$ and $\rho,\sigma\in \mathcal{D}$ with $T(\rho,\sigma)\leq \eps$,
	\begin{equation}
 | H_\alpha(\rho) - H_\alpha(\sigma) | \leq \begin{cases}
\frac{1}{1-\alpha} \log ( (1-\eps)^\alpha + (d-1)^{1-\alpha} \eps^\alpha) & \eps < 1-\frac{1}{d}\\
\log d& \eps \geq 1 - \frac{1}{d},
\end{cases} \label{eq:uniform_Renyi_bound}
 \end{equation}

See \eqref{eq:bound-Rastegin} and \eqref{eq:Chen-renyi-bound} of \Cref{sec:cty-from-majflow} for the previously known continuity bounds in the case $\alpha > 1$. Note also \cite[Theorem 7]{WH19} provides continuity bounds for the $\alpha$-R\'enyi-entropy, although in the case $\alpha \in (0,1)$ bounds are not optimal (in contrast to \eqref{eq:uniform_Renyi_bound}), and in the case $\alpha>1$, are not as tight as the bounds presented here.

\paragraph{New continuity bounds}

\Cref{thm:hphi-GCB} provides an alternate proof of \eqref{eq:Tsallis-bound} for any $\alpha>0$ and of \eqref{eq:uniform_Renyi_bound} for $\alpha\in(0,1)$ and establishes that in either case for equality to occur, it is necessary and sufficient for one state to be pure, and the other state to have spectrum $\{1-\eps, \frac{\eps}{d-1}, \dotsc, \frac{\eps}{d-1})$ if $\eps < 1-\frac{1}{d}$, or $\{\frac{1}{d},\dotsc,\frac{1}{d}\}$ if $\eps \geq 1 - \frac{1}{d}$.

\begin{proposition} \label{cor:Tsallis-Lipschitz}
The $\alpha$-Tsallis entropies are Lipschitz continuous for $\alpha > 1$, with optimal Lipschitz constant $\frac{\alpha}{\alpha-1}$.
\end{proposition}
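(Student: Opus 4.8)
The plan is to apply \Cref{cor:opt-Lip} directly. Recall from the discussion of Tsallis entropies that $T_\alpha = H_{(h,\phi)}$ with $h(x) = x$ and $\phi(x) = \frac{x^\alpha - x}{1-\alpha}$, so $T_\alpha$ is a \typeone{} $(h,\phi)$-entropy; in particular it is Schur concave and continuously differentiable on $\cP_+$, which are exactly the hypotheses of \Cref{cor:opt-Lip}. That corollary then says $T_\alpha$ is Lipschitz continuous if and only if $k := \sup_{r \in \cP_+} \Gamma_{T_\alpha}(r)$ is finite, in which case $k$ is the optimal Lipschitz constant. The entire proof thus reduces to computing this supremum and verifying it equals $\frac{\alpha}{\alpha-1}$.

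First I would compute $\Gamma_{T_\alpha}$ using \eqref{eq:Gamma_hphi}. Since $h'(x) = 1$, that formula gives $\Gamma_{T_\alpha}(r) = \phi'(r_-) - \phi'(r_+)$. Differentiating yields $\phi'(x) = \frac{\alpha x^{\alpha-1} - 1}{1-\alpha}$, and substituting (the constant terms cancel) gives
\[
\Gamma_{T_\alpha}(r) = \frac{\alpha}{\alpha-1}\left(r_+^{\alpha-1} - r_-^{\alpha-1}\right).
\]
For $\alpha > 1$ the map $x \mapsto x^{\alpha-1}$ is strictly increasing on $[0,1]$, so $r_- \leq r_+$ makes this quantity nonnegative, consistent with $T_\alpha$ being Schur concave.

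Next I would maximize over $\cP_+$. The expression depends on $r$ only through $r_+$ and $r_-$, and for $r \in \cP_+$ (with $d \geq 2$) one has $0 < r_- \leq r_+ < 1$, whence $r_+^{\alpha-1} - r_-^{\alpha-1} < 1$ and therefore $\Gamma_{T_\alpha}(r) < \frac{\alpha}{\alpha-1}$ throughout $\cP_+$. Conversely, any sequence in $\cP_+$ converging to the corner $\psi = (1,0,\dotsc,0)$, for instance $r^{(n)} = (1 - \tfrac{d-1}{n}, \tfrac1n, \dotsc, \tfrac1n)$, forces $r_+ \to 1$ and $r_- \to 0$, so $\Gamma_{T_\alpha}(r^{(n)}) \to \frac{\alpha}{\alpha-1}$. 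Hence $k = \frac{\alpha}{\alpha-1} < \infty$, and \Cref{cor:opt-Lip} delivers both Lipschitz continuity and optimality of this constant. Alternatively one can invoke that $\Gamma_{T_\alpha}$ is strictly Schur convex by \Cref{thm:Delta-eps-Schur-convex} and that $\psi$ majorizes every $r \in \cP$, so the supremum is governed by the behavior near $\psi$; notably this value is independent of $d$.

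The computation is routine; the only genuine point requiring care is that the supremum is \emph{approached} at the boundary point $\psi \notin \cP_+$ rather than \emph{attained} in the interior. This is exactly why the statement concerns the supremum over the open set $\cP_+$, and why \Cref{cor:opt-Lip}—which requires only finiteness of that supremum—is the appropriate tool, sidestepping any need to exhibit an interior maximizer.
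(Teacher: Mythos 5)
Your proposal is correct and follows essentially the same route as the paper: compute $\Gamma_{T_\alpha}(r) = \frac{\alpha}{\alpha-1}\left(r_+^{\alpha-1} - r_-^{\alpha-1}\right)$, bound it by $\frac{\alpha}{\alpha-1}$, and invoke \Cref{cor:opt-Lip}. You are in fact slightly more careful than the paper, which asserts equality ``achieved by $r=(1,0,\dotsc,0)$'' even though that point lies outside $\cP_+$, whereas you correctly exhibit the supremum as a limit along an interior sequence.
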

\begin{proof}	
\[
\Gamma_{T_\alpha}(r) = \frac{\alpha}{1-\alpha} (r_-^{\alpha-1} -r_+^{\alpha-1}) = \frac{\alpha}{\alpha-1} (r_+^{\alpha-1} -r_-^{\alpha-1}) \leq \frac{\alpha}{\alpha-1}
\]
with equality achieved by $r=  (1,0,\dotsc,0)$.
\end{proof}
\begin{remark}
This improves upon \eqref{eq:Tsallis-Lip} by a factor of 2, but can also be derived directly from \eqref{eq:Tsallis-bound}.
\end{remark}

The following is a corollary of \Cref{thm:Convex-type-Lip}.
\begin{proposition} \label{cor:Renyi-lipschitz}
The $\alpha$-R\'enyi entropy is Lipschitz continuous if and only if  $\alpha> 1$. In the latter case, the
optimal Lipschitz constant $k_\alpha$ satisfies
\begin{equation} \label{eq:Lip-estimate-Renyi}
\frac{\alpha}{\alpha-1}(d-2)^{1 - 1/\alpha}\frac{1}{2\ln(2)} \leq k_\alpha \leq \frac{d\alpha}{\alpha-1}\frac{1}{\ln(2)}.
\end{equation}
For certain values of $\alpha$, we compute $k_\alpha$ exactly or provide tighter bounds. We have
\begin{align}
k_\infty &= \frac{1}{\ln(2)} d,\\
k_2 &= \begin{cases}
 \frac{2}{\ln(2)} & d = 2\\ 
 \frac{d-2}{\sqrt{d-1}-1}\frac{1}{\ln(2)} & d > 2,
 \end{cases}
\end{align}
and for $\alpha\in(1,2)$,
\[
k_\alpha \leq \frac{\alpha}{\alpha-1} \frac{d^{\alpha-1}}{\ln(2)} < \frac{d\alpha}{\alpha-1}\frac{1}{\ln(2)}.
\]

Additionally, for $\alpha \in (0,1)$ and any $\delta\in (0,1]$, the smoothed entropy $H_\alpha^\delta$ is Lipschitz continuous, with optimal Lipschitz constant
\[
\frac{\alpha}{1-\alpha} \frac{1}{\ln(2)} \frac{ (\tfrac{\delta}{d-1})^{\alpha-1} - (1-\delta)^{\alpha-1} }{(1-\delta)^\alpha + (d-1)^{\alpha-1}\delta^\alpha}.
\]
\end{proposition}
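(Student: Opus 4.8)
The plan is to read off $\Gamma_{H_\alpha}$ from \eqref{eq:Gamma_hphi} and then feed it into \Cref{thm:Convex-type-Lip}, \Cref{cor:opt-Lip}, and \Cref{prop:typ1-Lip}. With $h(x)=\frac{1}{1-\alpha}\log x$ and $\phi(x)=x^\alpha$, equation \eqref{eq:Gamma_hphi} gives, for $r\in\cP_+$,
\[
\Gamma_{H_\alpha}(r)=\frac{\alpha}{(\alpha-1)\ln 2}\,\frac{r_+^{\alpha-1}-r_-^{\alpha-1}}{\sum_i r_i^\alpha}.
\]
For $\alpha\in(0,1)$ the R\'enyi entropy is \typeone{}, and since $\phi'(r_-)=\alpha r_-^{\alpha-1}\to\infty$ as $r_-\downarrow 0$, evaluating along $r=(1-(d-1)\eta,\eta,\dots,\eta)$ with $\eta\downarrow0$ shows $\sup_{r\in\cP_+}\Gamma_{H_\alpha}(r)=\infty$; by \Cref{cor:opt-Lip} the unsmoothed entropy is therefore not Lipschitz. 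For $\alpha>1$ it is \typetwo{}, and because $\phi$ is differentiable at $0$ (with $\phi'(0)=0$) and $h$ is differentiable at $\phi(1)=1$, \Cref{thm:Convex-type-Lip} applies and yields Lipschitz continuity with optimal constant $k_\alpha=\sup_{r\in\cP_+}\Gamma_{H_\alpha}(r)$. This proves the dichotomy and reduces all remaining claims to estimating this supremum.

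For the upper bounds I estimate $\Gamma_{H_\alpha}$ directly on $\cP_+$. Bounding the numerator by $r_+^{\alpha-1}$, the denominator by $r_+^{\alpha}$, and using $r_+\ge 1/d$ gives $k_\alpha\le \frac{d\alpha}{(\alpha-1)\ln 2}$; alternatively, bounding the numerator by $1$ and the denominator below by the power-mean inequality $\sum_i r_i^\alpha\ge d^{1-\alpha}$ gives $k_\alpha\le \frac{\alpha}{(\alpha-1)\ln 2}d^{\alpha-1}$, which is the sharper estimate precisely when $\alpha<2$. For the lower bound I exhibit a single feasible family: taking $r_-\downarrow 0$ with largest entry $y$ and the remaining $d-2$ entries equal to $z=\frac{1-y}{d-2}$, and choosing $y$ so that the two surviving denominator terms balance, $y^\alpha=(d-2)^{1-\alpha}(1-y)^\alpha$, makes $\sum_i r_i^\alpha=2y^\alpha$ and gives $\Gamma_{H_\alpha}=\frac{\alpha}{2(\alpha-1)\ln 2}\,(1+(d-2)^{1-1/\alpha})$; the stated lower bound follows after the short check that this point satisfies $1/d\le y$ and $0\le z\le y$.

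The exact values require solving the reduced optimization problems of \Cref{thm:Convex-type-Lip}. For $H_\infty(r)=-\log r_+$ one has $\Gamma_{H_\infty}(r)=\frac{1}{(\ln 2)\,r_+}$, whose supremum over $r$ with a unique maximum is $\frac{d}{\ln 2}$ (as $r_+\downarrow 1/d$); the matching upper bound $|H_\infty(p)-H_\infty(q)|\le\frac{d}{\ln 2}\TV(p,q)$ comes from $|p_+-q_+|\le\TV(p,q)$ together with $r_+\ge 1/d$. For $\alpha=2$, $d=2$, \eqref{eq:Convex-type-Lip-d-2} asks to maximize $\bin'_{(h,\phi)}(x)=\frac{2(1-2x)}{(\ln 2)(x^2+(1-x)^2)}$ over $(0,\tfrac12]$, which is decreasing, so the supremum is the value $\frac{2}{\ln 2}$ at $x\downarrow 0$. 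For $\alpha=2$, $d>2$, I substitute $z=\frac{1-x-y}{d-2}$ into \eqref{eq:Convex-type-Lip-d-gt-2} and show there is no interior maximizer: adding the two stationarity equations forces $x+y=2/d$ (hence $z=1/d$), which in turn forces $xy=-\frac{d-2}{2d^2}<0$, a contradiction. Thus the maximum lies on the boundary $x=0$, where a one-dimensional optimization gives $y=1/\sqrt{d-1}$ and the value $\frac{\sqrt{d-1}+1}{\ln 2}=\frac{d-2}{(\sqrt{d-1}-1)\ln 2}$.

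Finally, for $\alpha\in(0,1)$ and $\delta>0$ the smoothed entropy is the smoothing of a \typeone{} entropy, so \Cref{prop:typ1-Lip} gives Lipschitz continuity with optimal constant $g'(\delta)=h'(\phi(1-\delta)+(d-1)\phi(\tfrac{\delta}{d-1}))(\phi'(\tfrac{\delta}{d-1})-\phi'(1-\delta))$; substituting $h$ and $\phi$ produces the displayed expression. I expect the main obstacle to be the exact evaluation of $k_2$ for $d>2$, i.e.\ verifying that the two-dimensional supremum in \eqref{eq:Convex-type-Lip-d-gt-2} is attained on the boundary $x=0$ rather than in the interior: shrinking $r_-$ raises the numerator $\phi'(r_+)-\phi'(r_-)$ but also raises $\sum_i r_i^\alpha$, so the two effects compete, and only the critical-point computation above resolves which boundary wins. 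By contrast, the two-sided bound \eqref{eq:Lip-estimate-Renyi} needs only the elementary numerator/denominator estimates and a single feasible evaluation.
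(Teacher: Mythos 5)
Most of your argument is sound and tracks the paper's proof closely: the Lipschitz/non-Lipschitz dichotomy via $\sup_{r\in\cP_+}\Gamma_{H_\alpha}$ together with \Cref{cor:opt-Lip} and \Cref{thm:Convex-type-Lip}; the upper bound $\frac{d\alpha}{(\alpha-1)\ln 2}$ by bounding the numerator by $r_+^{\alpha-1}$, the denominator by $r_+^{\alpha}$, and using $r_+\ge 1/d$ (the paper performs the same estimate in the reduced $(x,y,z)$ variables); the lower bound via a family $(0,z,\dots,z,y)$, where your exact balancing $y^\alpha=(d-2)z^\alpha$ is a slightly cleaner variant of the paper's choice $y=(d-2)^{-(\alpha-1)/\alpha}$ and even gives the marginally stronger value $\frac{\alpha}{2(\alpha-1)\ln 2}\left(1+(d-2)^{1-1/\alpha}\right)$; and the smoothed case via \Cref{prop:typ1-Lip}. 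Two sub-arguments genuinely differ from the paper, in a good way: for $\alpha\in(1,2)$ you obtain $k_\alpha\le\frac{\alpha}{\alpha-1}\frac{d^{\alpha-1}}{\ln 2}$ directly from the power-mean bound $\sum_i r_i^\alpha\ge d^{1-\alpha}$, whereas the paper routes this through the R\'enyi--Tsallis comparison (\Cref{prop:compare-Renyi-Tsallis} combined with \Cref{cor:Tsallis-Lipschitz}); and for $k_\infty$ your mean-value bound $|\log p_+-\log q_+|\le \frac{d}{\ln 2}|p_+-q_+|\le\frac{d}{\ln 2}\TV(p,q)$ is more elementary than the paper's \Cref{prop:H-infinity}, and neatly sidesteps the fact that $H_\infty$ is not continuously differentiable on $\cP_+$ (the reason the paper treats $\alpha=\infty$ separately rather than through \Cref{cor:opt-Lip}).

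The genuine gap is in the exact evaluation of $k_2$ for $d>2$ --- precisely the step you flagged as the main obstacle. After eliminating $z=\frac{1-x-y}{d-2}$, the feasible set of \eqref{eq:Convex-type-Lip-d-gt-2} is the closed triangle in the $(x,y)$-plane with vertices $(0,1)$, $(0,\tfrac{1}{d-1})$, $(\tfrac{1}{d},\tfrac{1}{d})$, bounded by the three lines $x=0$, $z=x$ (i.e.\ $(d-1)x+y=1$) and $z=y$ (i.e.\ $x+(d-1)y=1$). Your stationarity computation is correct (adding the two equations gives $D=N^2$, subtracting gives $2z=x+y$, whence $z=1/d$ and $xy<0$), but excluding interior critical points only places the maximum somewhere on the boundary of this triangle, which has three sides, not one. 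The sides $z=x$ and $z=y$, i.e.\ distributions of the form $(y,x,\dots,x)$ and $(y,\dots,y,x)$, are untouched by your argument, and the restrictions of the objective to those sides have their own critical points: for $d=3$ the side $z=x$ has an interior critical point near $x\approx 0.098$ whose value ($\approx 1.06$ in the appendix's normalization) must be compared against $\frac{d-2}{2(\sqrt{d-1}-1)}\approx 1.21$ --- it loses, but nothing you wrote shows this. The paper closes exactly this hole differently in \Cref{sec:H2}: it proves the pointwise monotonicity statement $\partial_x h(x,y)\le 0$ throughout the feasible region, so for each fixed $y$ the maximum sits at the smallest feasible $x$; this eliminates the interior and the entire side $z=x$ in one stroke and reduces the problem to the segment $x=0$, on which your one-dimensional optimization ($y=1/\sqrt{d-1}$) is correct. (Even the paper is terse about the residual piece $y<\tfrac{1}{d-1}$, where the smallest feasible $x$ lies on the side $z=y$, but monotonicity does the bulk of the work.) To repair your proof, either adopt that monotonicity argument or separately bound the objective on the two remaining sides.
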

\begin{remark}
\cite[Theorem 7 (2)]{WH19} can be used to establish the bound $k_\alpha \leq 2\frac{d\alpha}{\alpha-1}\frac{1}{\ln(2)}$ by a different technique.
\end{remark}
\begin{proof}	
The fact that the $\alpha$-R\'enyi entropies are not Lipshitz for $\alpha \leq 1$ follows from the fact that $g(\eps)$ defined in \eqref{eq:def_g} has $\frac{g(\eps)}{\eps}\xrightarrow{\eps\to0}\infty$.

Let us prove \eqref{eq:Lip-estimate-Renyi}; consider $\alpha > 1$. We have $h(x) = \frac{1}{1-\alpha}\log x$ and $\phi(x) =x^\alpha$. Then for $\alpha > 1$, we have $\phi(x)\geq 0$ and $\phi'(x)\geq 0$ for $x\in [0,1]$. Hence, since $-h'$ is strictly decreasing,
\begin{align*}	
 -h'(\phi(y) + (d-2)\phi(z) + \phi(x))(\phi'(y) - \phi'(x)) &\leq - h'(\phi(y))(\phi'(y) - \phi'(x)) \\
 &\leq - h'(\phi(y))\phi'(y)  \\
 &= - (h\circ \phi)'(y).
\end{align*}
Then
\[
 - (h\circ \phi)'(y) = \frac{\alpha}{\alpha-1}\frac{1}{\ln(2)}\frac{1}{y} \leq \frac{\alpha}{\alpha-1}\frac{1}{\ln(2)} d
\]
since $y \geq \frac{1}{d}$. 
Next, consider the lower bound.
Let $x=0$, $y = \frac{1}{(d-2)^{1 - 1/\alpha}}=\frac{1}{(d-2)^{(\alpha-1)/\alpha}}$, $z = \frac{1-y}{d-2}$, and let $r = (x,z,\dotsc,z,y) \in \cP^\downarrow$. Then
\begin{align*}	
\Gamma_{H_\alpha}(r) &= \frac{\alpha}{\alpha-1} \frac{1}{\ln(2)}\frac{y^{\alpha-1} - x^{\alpha-1}}{x^\alpha + y^\alpha + (d-2) z^\alpha} \\
&= \frac{\alpha}{\alpha-1} \frac{1}{\ln(2)}\frac{y^{\alpha-1}}{y^\alpha + (d-2) z^\alpha} \\
&= \frac{\alpha}{\alpha-1} \frac{1}{\ln(2)} \frac{y^{\alpha-1}}{y^\alpha + (1-y)^\alpha (d-2)^{1-\alpha}} \\
&= \frac{\alpha}{\alpha-1} \frac{1}{\ln(2)}\frac{1}{y} \frac{1}{1 + (1/y-1)^\alpha (d-2)^{1-\alpha}} \\
&\geq \frac{\alpha}{\alpha-1} \frac{1}{\ln(2)}\frac{1}{y} \frac{1}{1 + y^{-\alpha} (d-2)^{1-\alpha}} \\
&\geq \frac{\alpha}{\alpha-1} \frac{1}{\ln(2)}\frac{1}{y} \frac{1}{1 + 1} \\
&= \frac{\alpha}{\alpha-1} \frac{1}{2\ln(2)}(d-2)^{1 - 1/\alpha}.
\end{align*}

The proof for $\alpha=2$ is in \Cref{sec:H2}, and for $\alpha=\infty$ is in the next proposition. The proof for $\alpha \in(1,2)$ follows from \Cref{cor:Tsallis-Lipschitz} and \eqref{eq:bound-Renyi-by-Tsallis}. Lastly, for $\alpha < 1$ and $\delta > 0$, we find
\[
\lim_{\eps \to 0}\frac{g(\eps + \delta)}{\eps} = \frac{\alpha}{1-\alpha} \frac{1}{\log_\mathrm{e}(2)} \frac{ (\tfrac{\delta}{d-1})^{\alpha-1} - (1-\delta)^{\alpha-1} }{(1-\delta)^\alpha + (d-1)^(\alpha-1)\delta^\alpha}.
\]
from which the optimal Lipschitz constant follows by \Cref{prop:typ1-Lip}.
\end{proof}

For $\alpha=\infty$, we can obtain both a tight uniform continuity bound and the optimal Lipschitz continuity constant.
\begin{proposition} \label{prop:H-infinity}
$|H_\infty(p) - H_\infty(q)| \leq \log(1 + \eps d)$ if $\TV(p,q)\leq \eps$. In particular, $H_\infty$ has an optimal Lipschitz constant of $\frac{d}{\ln(2)}$.
\end{proposition}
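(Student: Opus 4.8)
The plan is to work directly with the explicit form $H_\infty(p) = -\log p_+$, where $p_+ = \max_i p_i$ denotes the largest entry of $p$, rather than invoking \Cref{cor:opt-Lip}. The point is that although $H_\infty = \lim_{\alpha\to\infty} H_\alpha$ is Schur concave, it is \emph{not} continuously differentiable on all of $\cP_+$: it has a kink wherever the maximal entry is degenerate, in particular at $u$, so the hypotheses of \Cref{cor:opt-Lip} are not met and a direct argument is cleaner. The only genuinely non-bookkeeping ingredient is the elementary comparison $p_+ - q_+ \le \TV(p,q)$. To see this, assume (by symmetry of the claimed inequality) that $p_+ \ge q_+$ and pick an index $i^*$ with $p_{i^*} = p_+$. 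Since $q_+ \ge q_{i^*}$ we have $p_+ - q_+ \le p_{i^*} - q_{i^*}$, and the identity $\sum_{i : p_i > q_i}(p_i - q_i) = \tfrac12\|p-q\|_1 = \TV(p,q)$, which holds because $\sum_i (p_i - q_i) = 0$, bounds $p_{i^*} - q_{i^*}$ by $\TV(p,q) \le \eps$ (the case $p_{i^*} \le q_{i^*}$ is trivial).

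Granting this, the uniform bound follows from $q_+ \ge \tfrac1d$:
\[
|H_\infty(p) - H_\infty(q)| = \log \frac{p_+}{q_+} = \log\!\left( 1 + \frac{p_+ - q_+}{q_+}\right) \le \log\!\left(1 + \frac{\eps}{q_+}\right) \le \log(1 + \eps d).
\]
The Lipschitz bound is then immediate from the inequality $\ln(1+x) \le x$: since $\log(1+\eps d) = \tfrac{1}{\ln 2}\ln(1 + \eps d) \le \tfrac{d}{\ln 2}\eps$, the constant $\tfrac{d}{\ln 2}$ is a valid Lipschitz constant. (Concavity of $\eps \mapsto \log(1+\eps d)$ also shows the secant slope $\tfrac{\log(1+\eps d)}{\eps}$ is decreasing, so this near-origin slope is the worst case.)

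For optimality I would exhibit a family saturating the constant. For $\eps \in (0, 1 - \tfrac1d)$ take $q = u$ and $p_\eps = (\tfrac1d + \eps, \tfrac1d - \tfrac{\eps}{d-1}, \dotsc, \tfrac1d - \tfrac{\eps}{d-1}) \in \cP$, which satisfies $\TV(p_\eps, u) = \eps$ and $H_\infty(u) - H_\infty(p_\eps) = \log d + \log(\tfrac1d + \eps) = \log(1 + \eps d)$. Hence
\[
\frac{|H_\infty(p_\eps) - H_\infty(u)|}{\TV(p_\eps, u)} = \frac{\log(1 + \eps d)}{\eps} \xrightarrow{\ \eps \to 0\ } \frac{d}{\ln 2},
\]
so no constant smaller than $\tfrac{d}{\ln 2}$ can work, and combined with the upper bound this proves $\tfrac{d}{\ln 2}$ is optimal. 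The main obstacle is conceptual rather than computational: recognizing that the smooth $\Gamma_H$ machinery does not apply at the uniform point and that the optimal constant is \emph{approached} as the perturbation shrinks toward $u$ (where $r_+ \to \tfrac1d$ with a unique maximizer) rather than attained, which is exactly what the saturating family above captures.
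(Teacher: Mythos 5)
Your proof is correct, and it takes a genuinely different route from the paper's. The paper stays inside the majorization-flow framework: it computes $\Gamma_{H_\infty}(r)=\frac{1}{\ln(2)}\frac{1}{r_+}$ via \eqref{eq:Gamma-for-symmetric-H}, observes that this is Schur \emph{concave} (unlike the Concave-Type case), uses majorization-preservation of $\mmm_s$ to conclude that the maximal increase $\Delta_\eps^{H_\infty}$ is attained on $B_\eps(u)$, and then evaluates $\max_{r\in B_\eps(u)}\log(dr_+)=\log(1+\eps d)$; the Lipschitz constant is read off from $r_+\geq\frac1d$. You instead replace all of this with the elementary inequality $|p_+-q_+|\leq\TV(p,q)$ (proved correctly via $\sum_{i:p_i>q_i}(p_i-q_i)=\TV(p,q)$), combine it with $q_+\geq\frac1d$ to get the uniform bound, deduce the Lipschitz bound from $\ln(1+x)\leq x$, and establish optimality with the explicit family $p_\eps=(\tfrac1d+\eps,\tfrac1d-\tfrac{\eps}{d-1},\dotsc)$ near $u$. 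What your approach buys: it is shorter and self-contained, and it sidesteps a real technical wrinkle you correctly identify --- $H_\infty$ is not continuously differentiable on $\cP_+$ (kinks at degenerate maxima), so \Cref{cor:opt-Lip} does not literally apply; indeed, at a point where the maximum has multiplicity $k_+>1$ the flow decreases it at rate $1/k_+$, so the directional derivative is $\frac{1}{\ln(2)}\frac{1}{k_+r_+}$ rather than the paper's stated $\frac{1}{\ln(2)}\frac{1}{r_+}$, and the supremum $\frac{d}{\ln 2}$ is only approached, never attained --- exactly the subtlety your limiting family makes rigorous. As a bonus, your family saturates the uniform bound $\log(1+\eps d)$ exactly at every $\eps<1-\frac1d$, showing it is tight, not merely asymptotically so. What the paper's approach buys in exchange is the structural insight emphasized in its subsequent remark: the Schur concavity of $\Delta_\eps^{H_\infty}$ (fastest increase near the center of the simplex), which situates $H_\infty$ within the unified picture contrasting Concave-Type and Convex-Type entropies; that landscape information is invisible to your direct argument.
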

\begin{proof}	
Since $H_\infty(r) = -\log r_+$, for $r\neq u$,  \eqref{eq:Gamma-for-symmetric-H} yields
\[
\Gamma_{H_\infty}(r) = (\partial_--\partial_+) H_\infty(r) = \partial_{+} \log r_+ = \frac{1}{\ln(2)}\frac{1}{r_+},
\]
whereas $\Gamma_{H_\infty}(u)=0$ as $\gen(u)=0$.
The optimal Lipschitz constant follows from the fact that $r_+ \geq \frac{1}{d}$ and for some probability vectors, $r_+ = \frac{1}{d}$. 

Moreover, $r \mapsto \frac{1}{\ln(2)}\frac{1}{r_+}$ is Schur concave (as the composition of a decreasing function and the Schur convex function $r\mapsto r_+$). Hence, $\Gamma_{H_\infty}$ is Schur concave on $\cP\setminus \{u\}$.
Since $\mmm_s$ is majorization-preserving for all $s\in[0,1]$, we have
\[
r\mapsto \Delta_\eps^{H_\infty} := H_\infty(\mmm_\eps(r)) -r = \int_0^\eps \Gamma_{H_\infty}(\mmm_s(r))\d s
\]
is Schur concave on $\cP\setminus B_\eps(u)$. This uses the fact that for $r\in \cP\setminus B_\eps(u)$, $\TV(r,u) > \eps$ and hence $\mmm_s(r)\neq u$ for all $s\in[0,\eps]$.
For any $r\in \cP\setminus B_\eps(u)$, $\mmm_t(r) \prec r$ for $t = \TV(r,u)$. Hence,
\[
\max_{r\in \cP}H_\infty(\mmm_\eps(r)) -r = \max_{r\in B_\eps(u)}H_\infty(\mmm_\eps(r)) -r = \max_{r\in B_\eps(u)} \log (d) - H_\infty(r)
\]
using that $\mmm_\eps(r) = u$ for $r\in B_\eps(u)$. Then
\[
\max_{r\in B_\eps(u)} \log (d) - H_\infty(r) = \max_{r\in B_\eps(u)} \log (d r_+).
\]
For $r\in B_\eps(u)$, $r_+ \leq \frac{1}{d}+\eps$, with equality for $r =  (\frac{1}{d}+\eps, \frac{1}{d},\dotsc, \frac{1}{d},\frac{1}{d}-\eps)$. Hence, putting it all together,
\[
\max_{r\in \cP}H_\infty(\mmm_\eps(r)) -r =  \log \left(d \left(\frac{1}{d}+\eps\right)\right) = \log (  1 + d\eps). \qedhere
\]
\end{proof}
\begin{remark}
The proof of \Cref{prop:H-infinity} shows the Schur concavity of $\Delta_\eps^{H_\infty}$ on $\cP\setminus B_\eps(u)$. This contrasts strongly with the Schur \emph{convexity} of $\Delta_\eps^H$ on $\cP$ for all \typeone{} $(h,\phi)$ entropies $H$ proven in \Cref{thm:Delta-eps-Schur-convex}.
\end{remark}

\subsection{R\'enyi entropy of parameter \texorpdfstring{$\alpha =1$}{alpha = 1} has optimal dimensional scaling in its continuity bound}\label{sec:Renyi_opt_scaling}

Note that the R\'enyi entropy of parameter $\alpha=1$ is the Shannon entropy (or von Neumann entropy, in the quantum case). There is a sense in which the continuity properties as a function of dimension $d$ of the R\'enyi entropy $H_\alpha$ are much improved at $\alpha=1$ compared to $\alpha \neq 1$. Let us introduce some notation. Define
\[
C_\alpha(d,\eps) := \sup_{\substack{p,q\in \cP_d\\ \TV(p,q)\leq \eps}} |H_\alpha(p) - H_\alpha(q)|
\]
as the optimal uniform continuity bound for $H_\alpha$ over $\cP_d$. Consider a sequence $(\eps_d)_{d\in \mathbb{N}}$ such that $\eps_d \xrightarrow{d\to\infty} 0$. Clearly, if
\begin{equation}\label{eq:Calpha_to_zero}
\limsup_{d\to\infty} C_\alpha(d, \eps_d) =0
\end{equation}
then for any sequences of distributions $(p_d)_{d\in \mathbb{N}}$ and $(q_d)_{d\in \mathbb{N}}$ with $p_q,q_d\in \cP_d$ such that $\TV(p_d,q_d) \leq \eps_d$, we have
\[
\limsup_{d\to\infty} |H_\alpha(p_d) - H_\alpha(q_d)| =0.
\]
Thus, any $(\eps_d)_{d\in \mathbb{N}}$ satisfying \eqref{eq:Calpha_to_zero} provides a dimensionally-aware notion of continuity for $H_\alpha$. Moreover, the slower $\eps_d$ converges to zero, the stronger the statement of continuity provided by \eqref{eq:Calpha_to_zero}. The following proposition therefore demonstrates that the case $\alpha=1$ is the ``most continuous'' in this sense.
\begin{proposition}
For any $s > 0$,
\begin{equation} \label{eq:C1_invpoly_zero}
C_1(d, d^{-s}) \xrightarrow{d\to\infty} 0.
\end{equation}
In fact, if $(\eps_d)_{d\in \mathbb{N}}$ has $\eps_d\log(d) \to 0$, then $\lim_{d\to\infty}C_1(d, \eps_d) = 0$.
However, for $\alpha \in(0,1)\cup(1,\infty)$,
\begin{equation} \label{eq:Calpha_invpoly_nonzero}
\liminf_{d\to\infty}C_\alpha(d, d^{-\frac{|\alpha-1|}{\alpha}}) > 0.
\end{equation}
and likewise $\liminf_{d\to\infty}C_\infty(d, d\inv) > 0$.
\end{proposition}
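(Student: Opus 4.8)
The plan is to obtain the $\alpha=1$ statement from an \emph{upper} bound on $C_1(d,\eps_d)$, and each of the $\alpha\neq 1$ statements from an explicit \emph{lower} bound: for every such $\alpha$ I exhibit a pair $p,q$ with $\TV(p,q)\leq\eps_d$ whose $\alpha$-R\'enyi entropies differ by an amount bounded away from $0$ as $d\to\infty$, so that $\liminf_d C_\alpha(d,\eps_d)>0$ follows immediately from the definition of $C_\alpha$ as a supremum. For $\alpha=1$ the Shannon entropy is a \typeone{} $(h,\phi)$-entropy (with $h=\id$ and $\phi(x)=-x\log x$), so \Cref{cor:tight-uniform-bounds-concave-type} gives $C_1(d,\eps)\leq g(\eps)$, and a direct evaluation of \eqref{eq:def_g} yields $g(\eps)= h_2(\eps)+\eps\log(d-1)$ for $\eps<1-\frac1d$, where $h_2(\eps):=-\eps\log\eps-(1-\eps)\log(1-\eps)$ is the binary entropy. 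If $\eps_d\log d\to 0$ then $\eps_d\to 0$ (since $\log d\to\infty$), so $h_2(\eps_d)\to 0$ and $\eps_d\log(d-1)\leq \eps_d\log d\to 0$; hence $C_1(d,\eps_d)\to 0$. The claim \eqref{eq:C1_invpoly_zero} is the special case $\eps_d=d^{-s}$, for which $d^{-s}\log d\to 0$.

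For $\alpha\in(0,1)$ the extremal pair sits near a corner: set $\eps_d=d^{-(1-\alpha)/\alpha}$, $q=(1,0,\dotsc,0)$ and $p=(1-\eps_d,\tfrac{\eps_d}{d-1},\dotsc,\tfrac{\eps_d}{d-1})$, so that $\TV(p,q)=\eps_d$. Then $H_\alpha(q)=0$ and $H_\alpha(p)=\frac{1}{1-\alpha}\log\bigl((1-\eps_d)^\alpha+(d-1)^{1-\alpha}\eps_d^\alpha\bigr)$; using $\eps_d^\alpha=d^{\alpha-1}$ one checks $(d-1)^{1-\alpha}\eps_d^\alpha=(1-\tfrac1d)^{1-\alpha}\to 1$ and $(1-\eps_d)^\alpha\to 1$, so $|H_\alpha(p)-H_\alpha(q)|\to\frac{\log 2}{1-\alpha}>0$. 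The case $\alpha=\infty$ is immediate from \Cref{prop:H-infinity}, which gives the tight bound $C_\infty(d,\eps)=\log(1+\eps d)$ and hence $C_\infty(d,d^{-1})=\log 2$.

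The case $\alpha\in(1,\infty)$ is the crux: here $H_\alpha$ increases fastest near the \emph{centre} of the simplex (as visible from $\Gamma_{H_\alpha}$), so the extremal pair must be taken near $u$. With $\eps_d=d^{-(\alpha-1)/\alpha}$ (note $\tfrac1d<\eps_d<1$), set $q=u$ and $p=(\eps_d,\tfrac{1-\eps_d}{d-1},\dotsc,\tfrac{1-\eps_d}{d-1})$, so that $\TV(p,u)=\eps_d-\tfrac1d\leq\eps_d$. The chosen exponent makes the single enlarged coordinate contribute $\eps_d^\alpha=d^{1-\alpha}$, exactly matching the uniform value; since moreover $(d-1)^{1-\alpha}(1-\eps_d)^\alpha=d^{1-\alpha}(1+o(1))$, we get $\sum_i p_i^\alpha=d^{1-\alpha}(2+o(1))$ and therefore $|H_\alpha(p)-H_\alpha(u)|=\frac{1}{\alpha-1}\log\frac{\sum_i p_i^\alpha}{d^{1-\alpha}}\to\frac{\log 2}{\alpha-1}>0$.

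The only genuinely delicate point is this last case: one must recognise that, in contrast to $\alpha<1$, the worst configuration is a near-uniform distribution with a single macroscopically enlarged coordinate of size $\approx\eps_d$, and that the exponent $(\alpha-1)/\alpha$ is precisely what makes $\eps_d^\alpha$ comparable to the uniform contribution $d^{1-\alpha}$ (so that $\sum_i p_i^\alpha$ roughly doubles). Once the configuration is identified, the remaining estimates are routine limits; the $\alpha=1$ argument needs only the tight Fannes-type bound of \Cref{cor:tight-uniform-bounds-concave-type} together with the elementary fact that $h_2(\eps_d)\to 0$.
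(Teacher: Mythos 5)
Your proof is correct, and for the cases $\alpha\in(1,\infty)$ and $\alpha=\infty$ it takes a genuinely different --- and in fact more self-contained --- route than the paper. For $\alpha=1$ and $\alpha\in(0,1)$ your argument coincides in substance with the paper's: both rest on the tight uniform bounds (the Audenaert--Fannes bound and \eqref{eq:uniform_Renyi_bound}), the only difference being that you phrase the $\alpha\in(0,1)$ case via the explicit witness pair $\bigl((1,0,\dotsc,0),\,(1-\eps_d,\tfrac{\eps_d}{d-1},\dotsc,\tfrac{\eps_d}{d-1})\bigr)$ rather than via an exact formula for $C_\alpha$; incidentally, your limit $(1-\eps_d)^\alpha\to 1$ corrects a typo in the paper, which asserts this quantity tends to $0$, though the final limit $\tfrac{1}{1-\alpha}\log 2$ agrees. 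The real divergence is at $\alpha>1$: the paper, lacking an exact expression for $C_\alpha(d,\eps)$, asserts $C_\alpha(d,\eps)\geq \eps\, k_\alpha$ with $k_\alpha=\sup_{r\in\cP_+}\Gamma_{H_\alpha}(r)$ the optimal Lipschitz constant, and then invokes the lower bound on $k_\alpha$ from \Cref{cor:Renyi-lipschitz}. That step is delicate: optimality of a Lipschitz constant gives $C_\alpha(d,\eps)\leq \eps k_\alpha$ and $C_\alpha(d,\eps)/\eps\to k_\alpha$ only as $\eps\to 0$, so the asserted reverse inequality at the fixed scale $\eps_d\gg 1/d$ does not follow from the justification given (the configurations nearly achieving $k_\alpha$ have $\Gamma_{H_\alpha}$ decaying along the flow well before time $\eps_d$). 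Your construction --- $q=u$ together with a near-uniform $p$ whose single enlarged coordinate $\eps_d$ is tuned by the exponent $(\alpha-1)/\alpha$ so that $\eps_d^\alpha=d^{1-\alpha}$ matches the uniform contribution and $\sum_i p_i^\alpha$ doubles --- sidesteps this issue entirely and yields $\liminf_d C_\alpha(d,\eps_d)\geq\tfrac{1}{\alpha-1}\log 2>0$ directly from the definition of $C_\alpha$ as a supremum; the same remark applies at $\alpha=\infty$, where you use the attained maximum from \Cref{prop:H-infinity} instead of the Lipschitz-constant route. In short, your argument is the more elementary and, as written, the more watertight of the two.
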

\begin{remark}
Note that this result contrasts with the scenario of fixed dimension $d$. The fact that $H_\alpha$ is Lipschitz continuous on $\cP_d$ if and only if $\alpha > 1$ provides a notion in which $H_\alpha$ is ``more continuous'' for $\alpha > 1$ than for $\alpha > 0$. In other words, if $\alpha > 0$, the only constant $k$ satisfying $|H_\alpha(p) - H_\alpha(q) | \leq k \TV(p,q)$ is $k=\infty$, while for $\alpha > 1$, finite $k$ suffices. This notion is not ``dimensionally-aware'', however, in the sense that $d$ is fixed.

Additionally, the parameter $\alpha=1$ for the Tsallis entropy, which again coincides with the Shannon entropy, does not admit optimal scaling of the continuity bound out of the whole family of Tsallis entropies; in fact, \eqref{eq:Tsallis-bound} shows that the optimal bound actually decreases with dimension for $\alpha>1$. This is related to the nonextensivity of the Tsallis entropy; extensivity enforces the scaling
\[
 H_\alpha(p^{\otimes n}) - H_\alpha(q^{\otimes n}) = n(H_\alpha(p) - H_\alpha(q))
\]
and since
\[
\TV(p^{\otimes n}, q^{\otimes n}) \leq n\TV(p,q)
\]
we find that
\[
C_\alpha(d^n, n \eps_d) \geq n C_\alpha (d,\eps_d),
\]
or, for $d = 2^n$,
\[
C_\alpha(d, \eps_d\log(d)) \geq \log(d) C_\alpha(2,\eps_d).
\]
Hence for any extensive entropy, the optimal continuity bound must grow at least logarithmically with dimension (up to the modification $\eps_d \leadsto \log(d) \eps_d$).
\end{remark}
\begin{proof}	
For $0\leq \eps < 1 - \frac{1}{d}$, the Audenaert-Fannes bound \eqref{eq:Audenaert-Fannes_bound} gives
\[
C_1(d, \eps) = h_2(\eps) + \eps\log(d-1)
\]
where $h_2(\eps) = -\eps\log(\eps) - (1-\eps)\log(1-\eps)$ is the binary entropy. Hence, if $\eps_d \log(d) \to 0$, then $\eps_d \to 0$ and
\[
C_1(d, \eps_d) = h_2(\eps_d) + \eps_d \log(d-1) \to 0
\]
as well. Since for any $s>0$ we have $d^{-s} \log(d) \to 0$, we recover \eqref{eq:C1_invpoly_zero}.

Next, let us establish \eqref{eq:Calpha_invpoly_nonzero} for $0<\alpha<1$. In this case, for $0\leq \eps < 1 - \frac{1}{d}$ \eqref{eq:uniform_Renyi_bound} gives
\[
C_\alpha(d, \eps) = \frac{1}{1-\alpha} \log ( (1-\eps)^\alpha + (d-1)^{1-\alpha} \eps^\alpha).
\]
For $\eps_d = d^{-\frac{1-\alpha}{\alpha}}$, we have $(d-1)^{1-\alpha} \eps_d^\alpha \to 1$ while $(1-\eps_d)^\alpha\to 0$, and hence $C_\alpha(d,\eps_d) \to \frac{1}{1-\alpha} > 0$.

For $\alpha > 1$, we do not have an exact expression for $C_\alpha(d,\eps)$. However, since a Lipschitz bound provides a uniform continuity bound, we have
\[
C_\alpha(d, \eps) \geq \eps \sup_{r\in \cP_+} \Gamma_{H_\alpha}(r)
\]
using \Cref{cor:opt-Lip}. Then \Cref{cor:Renyi-lipschitz} shows that $k_\alpha = \sup_{r\in \cP_+} \Gamma_{H_\alpha}(r)$ satisfies
\[
k_\alpha \geq \frac{\alpha}{\alpha-1}(d-2)^{1 - 1/\alpha}\frac{1}{2\ln(2)}.
\]
Hence, if $\eps_d = d^{-\frac{\alpha-1}{\alpha}}$, then
\[
\eps_d(d-2)^{1 - 1/\alpha} = \eps_d (d-2)^{(1-\alpha)/\alpha} \to 1.
\]
Therefore $\liminf_{d\to\infty}  C_\alpha(d, \eps) \geq  \frac{\alpha}{\alpha-1}\frac{1}{2\ln(2)}>0$.
Likewise, for $\alpha=\infty$, the optimal Lipschitz constant is given in \Cref{prop:H-infinity} as $k_\infty = \frac{d}{\ln(2)}$, and hence $\liminf_{d\to\infty}C_\infty(d,d\inv) \geq \frac{1}{\ln(2)}>0$.
\end{proof}

\subsubsection{Discussion of previous continuity bounds for \texorpdfstring{$H_\alpha$}{the alpha-Renyi entropy} with \texorpdfstring{$\alpha > 1$}{alpha > 1}} \label{sec:discuss-prev-bounds}

As mentioned at the start of the section, continuity bounds on the $\alpha$-R\'enyi entropy for $\alpha > 1$ were not known until 2011, and the bounds known until now have poor scaling $\sim d^{\alpha-1}$. In this section, we use majorization flow as a tool to understand why the previous bounds performed poorly. The technique used to establish the previous bounds was to relate the difference in R\'enyi entropy of two distributions to the corresponding difference in Tsallis entropy \cite{Ras2011,Renyi-CMNF}. In the following proposition, we show that even a relaxed version of this pointwise comparison must necessarily yield a bound scaling as $d^{\alpha-1}$. In contrast, we show that the comparison between the \emph{maximum} difference of the two entropies exhibits much better scaling, and in fact can yield the Lipschitz continuity bound of \eqref{eq:Lip-estimate-Renyi}. This can be understood by the fact that the Tsallis entropy is a \typeone{} $(h,\phi)$-entropy, and hence it increases the slowest near the uniform distribution (in the sense that \Cref{thm:Delta-eps-Schur-convex} holds). On the other hand, for $\alpha>1$, the $\alpha$-R\'enyi entropy is a \typetwo{} $(h,\phi)$-entropy, and increases  quickest at a distribution of the form $(x,z,\dotsc,z,y)$ for $x\leq z \leq y$ as shown by \Cref{thm:Convex-type-Lip}, which can be close to uniform.

\begin{proposition} \label{prop:compare-Renyi-Tsallis}
The smallest constant $c$ such that
\begin{equation} \label{eq:bound-Renyi-by-Tsallis-prointwise}
\max_{q \in B_\eps(r)}(H_\alpha(q) - H_\alpha(r)) \leq c \max_{q \in B_\eps(r)}(T_\alpha(q) - T_\alpha(r)) 
\end{equation}
for all $r\in \cP$ and $\eps \in (0,1]$ is $c = \frac{d^{\alpha-1}}{\ln(2)}$. However, the smallest constant $\tilde c$ such that
\begin{equation} \label{eq:bound-Renyi-by-Tsallis}
\max_{r \in \cP}\max_{q \in B_\eps(r)}(H_\alpha(q) - H_\alpha(r)) \leq \tilde c \max_{r' \in \cP} \max_{q \in B_\eps(r')}(T_\alpha(q) - T_\alpha(r')) 
\end{equation}
for all $\eps \in (0,1]$ satisfies $\tilde c \leq \frac{\alpha d}{\ln(2)}$.
\end{proposition}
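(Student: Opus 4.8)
The plan is to push everything through the single scalar $S_\alpha(r):=\sum_{i=1}^d r_i^\alpha$, in terms of which (for $\alpha>1$) $H_\alpha(r)=\frac{-1}{(\alpha-1)\ln 2}\ln S_\alpha(r)$ and $T_\alpha(r)=\frac{1}{\alpha-1}\bigl(1-S_\alpha(r)\bigr)$. Since both entropies are Schur concave, the inner maximum over a ball is attained at the majorization minimizer: $\max_{q\in B_\eps(r)}H(q)=H(\mmm_\eps(r))$. Writing $S:=S_\alpha(r)$ and $S^*:=S_\alpha(\mmm_\eps(r))$, the relation $\mmm_\eps(r)\prec r$ together with strict Schur convexity of $S_\alpha$ (as $x\mapsto x^\alpha$ is strictly convex for $\alpha>1$) gives
\[
d^{1-\alpha}\leq S^*\leq S\leq 1,\qquad
\max_{q\in B_\eps(r)}\bigl(H_\alpha(q)-H_\alpha(r)\bigr)=\frac{\ln(S/S^*)}{(\alpha-1)\ln 2},\qquad
\max_{q\in B_\eps(r)}\bigl(T_\alpha(q)-T_\alpha(r)\bigr)=\frac{S-S^*}{\alpha-1}.
\]

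For the pointwise comparison \eqref{eq:bound-Renyi-by-Tsallis-prointwise} I would apply $\ln(S/S^*)=\ln\!\bigl(1+\tfrac{S-S^*}{S^*}\bigr)\leq\tfrac{S-S^*}{S^*}$, so the left quantity is at most $\tfrac{1}{\ln 2\,S^*}$ times the right one; bounding $S^*\geq d^{1-\alpha}$ yields $c\leq d^{\alpha-1}/\ln 2$. For optimality I would let $r\to u$ while $\eps\downarrow 0$: by \eqref{eq:H-integral-formula} the ratio of the two maxima tends to $\Gamma_{H_\alpha}(r)/\Gamma_{T_\alpha}(r)$, which by \eqref{eq:Gamma_hphi} equals $\frac{1}{\ln 2\,S_\alpha(r)}\to d^{\alpha-1}/\ln 2$ as $S_\alpha(r)\to d^{1-\alpha}$, so no smaller constant is admissible and $c=d^{\alpha-1}/\ln 2$.

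The inequality \eqref{eq:bound-Renyi-by-Tsallis} is the crux, because naively replacing $S^*$ by $d^{1-\alpha}$ only reproduces the factor $d^{\alpha-1}$. The key is to exploit that the Tsallis maximum sits at the corner: $T_\alpha$ is a \typeone{} $(h,\phi)$-entropy, so by \Cref{thm:Delta-eps-Schur-convex} the map $r'\mapsto\max_q(T_\alpha(q)-T_\alpha(r'))$ is Schur convex and hence maximized at $\psi=(1,0,\dotsc,0)$, making the right-hand side $\frac{\alpha d}{\ln 2}\cdot\frac{1-S_\psi^*}{\alpha-1}$ with $S_\psi^*:=S_\alpha(\mmm_\eps(\psi))$. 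Now since $\psi$ majorizes every $r$ and $\mmm_\eps$ is majorization preserving (\Cref{prop:properties_of_Lambda_eps}), we get $\mmm_\eps(\psi)\prec\mmm_\eps(r)$, and Schur convexity of $S_\alpha$ delivers the crucial \emph{uniform} bound $S^*=S_\alpha(\mmm_\eps(r))\geq S_\psi^*$ for all $r$. Combined with $S\leq 1$, this bounds each Rényi increase by $\frac{-\ln S_\psi^*}{(\alpha-1)\ln 2}$, so the whole claim collapses to the one-variable inequality
\[
-\ln S_\psi^*\leq\alpha d\,(1-S_\psi^*).
\]

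To finish I would verify this for $S_\psi^*\in[d^{1-\alpha},1]$: setting $u:=-\ln S_\psi^*\in[0,(\alpha-1)\ln d]$ it reads $u\leq\alpha d(1-e^{-u})$, and using the elementary estimate $1-e^{-u}\geq u/(1+u)$ (equivalent to $1+u\leq e^u$) it suffices that $\alpha d\geq 1+u$, which holds since $u\leq(\alpha-1)\ln d\leq(\alpha-1)(d-1)<\alpha d-1$. I expect the main obstacle to be precisely the monotonicity step $S_\alpha(\mmm_\eps(r))\geq S_\alpha(\mmm_\eps(\psi))$: this is what lets one measure the Rényi increase at an arbitrary (possibly near-uniform) point against the Tsallis increase at the corner \emph{at the same $\eps$}, thereby avoiding the exponential factor, and it rests entirely on majorization preservation of the flow together with the fact that $S_\alpha$ is monotone along the majorization order.
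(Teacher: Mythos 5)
Your treatment of the pointwise constant $c$ is correct and matches the paper's argument in substance: writing $S_\alpha(p):=\sum_i p_i^\alpha$, your bound $\ln(S/S^*)\le (S-S^*)/S^*$ together with $S^*\ge d^{1-\alpha}$ gives the upper bound, and the limit $\eps\downarrow 0$, $r\to u$ gives optimality. The paper phrases exactly the same computation at the level of generators: the ratio $\Gamma_{H_\alpha}/\Gamma_{T_\alpha}=\frac{1}{\ln 2}\,S_\alpha(p)^{-1}$ is Schur concave, with supremum $d^{\alpha-1}/\ln 2$ approached at $u$.

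Your proof of $\tilde c\le\frac{\alpha d}{\ln 2}$, however, collapses at the step you yourself call crucial, and it cannot be repaired. Majorization preservation runs the other way: since $r\prec\psi$ for every $r\in\cP$, property \ref{item:Lambda_eps-maj-preserving} of \Cref{prop:properties_of_Lambda_eps} gives $\mmm_\eps(r)\prec\mmm_\eps(\psi)$, and Schur convexity of $S_\alpha$ then yields $S_\alpha(\mmm_\eps(r))\le S_\alpha(\mmm_\eps(\psi))$, i.e.\ $S^*\le S_\psi^*$ --- the \emph{reverse} of your claimed uniform bound. (Concretely, $r=u$ gives $S^*=d^{1-\alpha}$, while $S_\psi^*=(1-\eps)^\alpha+(d-1)^{1-\alpha}\eps^\alpha$ is close to $1$ for small $\eps$.) Nor is this a fixable sign slip: the statement the step was meant to deliver --- that the R\'enyi increase $\max_{q\in B_\eps(r)}(H_\alpha(q)-H_\alpha(r))$ is largest at the corner $r=\psi$ --- is false, and its failure is precisely the phenomenon this proposition is designed to expose. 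For $\alpha>1$, $H_\alpha$ is \typetwo{} and increases fastest near the middle of the simplex: the point $r=(0,z,\dotsc,z,y)$ with $y=(d-2)^{-(\alpha-1)/\alpha}$ used in the proof of \Cref{cor:Renyi-lipschitz} has $\Gamma_{H_\alpha}(r)\ge\frac{\alpha}{\alpha-1}\frac{(d-2)^{1-1/\alpha}}{2\ln 2}$, so for small $\eps$ its increase exceeds the increase at $\psi$ (which is only about $\eps\frac{\alpha}{(\alpha-1)\ln 2}$) by a factor of order $d^{1-1/\alpha}$. Consequently the reduction to the scalar inequality $-\ln S_\psi^*\le\alpha d(1-S_\psi^*)$ starts from a false premise; and note that if you instead bound $\ln(S/S^*)\le (S-S^*)/S^*$ and then use $S-S^*\le 1-S_\psi^*$, you merely recover the exponential constant $d^{\alpha-1}/\ln 2$ of the first part.

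The paper's proof decouples the two maxima rather than comparing them at a common point: the R\'enyi side is bounded above by the Lipschitz estimate \eqref{eq:Lip-estimate-Renyi}, giving $\max_r\max_{q\in B_\eps(r)}(H_\alpha(q)-H_\alpha(r))\le\eps\frac{\alpha d}{(\alpha-1)\ln 2}$ --- a bound valid over the whole simplex, making no attempt to localize the maximizer --- while the Tsallis side, which genuinely is maximized at $\psi$ by \Cref{thm:Delta-eps-Schur-convex}, is bounded below by $\frac{\eps}{\alpha-1}$ via $1-(1-\eps)^\alpha\ge\eps$ applied to \eqref{eq:hphi-uniform-bound}; dividing the two estimates gives $\tilde c\le\frac{\alpha d}{\ln 2}$. (If you rewrite along these lines, be aware that this last step drops the positive term $(d-1)^{1-\alpha}\eps^\alpha$ from the exact Tsallis expression, so it requires separate justification.) The idea missing from your attempt is exactly this: the linear-in-$d$ constant comes from the absolute bound $\Gamma_{H_\alpha}(p)\le\frac{\alpha d}{(\alpha-1)\ln 2}$ underlying \Cref{cor:Renyi-lipschitz}, not from any majorization comparison between $\mmm_\eps(r)$ and $\mmm_\eps(\psi)$.
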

\begin{proof}	
Since
\[
\max_{q \in B_\eps(r)}(H_\alpha(q) - H_\alpha(r)) = \int_0^\eps \Gamma_{H_\alpha}(\mmm_\eps(r)) \d s
\]
and
\[
\max_{q \in B_\eps(r)}(T_\alpha(q) - T_\alpha(r))  = \int_0^\eps \Gamma_{T_\alpha}(\mmm_\eps(r)) \d s
\]
it suffices to bound the ratio $\frac{\Gamma_{H_\alpha}(p)}{\Gamma_{T_\alpha}(p)}$ uniformly in $p\in \cP$ by $c$. On the other hand, for \eqref{eq:bound-Renyi-by-Tsallis} to hold for all $\eps \in (0,1]$ and $r\in \cP$, the same ratio must in fact be bounded by $c$.  We have
\[
\Gamma_{H_\alpha}(p) =\frac{1}{\ln(2)} \frac{\alpha}{\alpha-1} \frac{r_+^{\alpha-1} - r_-^{\alpha-1}}{\sum_{i=1}^d p_i^\alpha}, \qquad \Gamma_{T_\alpha}(p) = \frac{\alpha}{\alpha-1} (r_+^{\alpha-1} - r_-^{\alpha-1})
\]
and hence
\[
\frac{\Gamma_{H_\alpha}(p)}{\Gamma_{T_\alpha}(p)} =\frac{1}{\ln(2)} \left( \sum_{i=1}^d p_i^\alpha \right)^{-1}.
\]
Since $p\mapsto \sum_{i=1}^d p_i^\alpha$ is Schur convex, the above ratio is Schur concave, and hence maximized at the uniform distribution. Thus,
\[
\max_{p\in \cP}\frac{\Gamma_{H_\alpha}(p)}{\Gamma_{T_\alpha}(p)} =\frac{1}{\ln(2)} \left( d \left(\frac{1}{d}\right)^\alpha\right)^{-1} =\frac{1}{\ln(2)} d^{\alpha-1}.
\]
To estimate $\tilde c$, we simply rewrite the uniform continuity bound for $T_\alpha$ given in \Cref{eq:hphi-uniform-bound} as:
\[
 \max_{r' \in \cP} \max_{q \in B_\eps(r')}(T_\alpha(q) - T_\alpha(r')) = \frac{1}{\alpha-1} ( 1 - (1-\eps)^\alpha - (d-1)^{1-\alpha}\eps^\alpha ),
\]
noting that the maximum is achieved at $r' = (1,0,\dotsc,0)$. We have $(1-\eps)^\alpha  \leq 1-\eps$ and hence $1-(1-\eps)^\alpha \geq \eps$. Then
\[
 \max_{r' \in \cP} \max_{q \in B_\eps(r')}(T_\alpha(q) - T_\alpha(r')) \geq  \frac{ \eps}{\alpha-1}.
\]
On the other hand, 
\[
\max_{r \in \cP}\max_{q \in B_\eps(r)}(H_\alpha(q) - H_\alpha(r))  \leq \eps\frac{\alpha}{\alpha-1}\frac{d}{\ln(2)}
\]
by \eqref{eq:Lip-estimate-Renyi}.
\end{proof}

As mentioned in the Introduction, our technique for proving entropic continuity bounds by using the notion of majorization flow has the additional advantage of providing an insight into the entropy landscape with respect to the TV distance. The above proposition shows how majorization flow can be used to better understand previous methods for establishing continuity bounds, in addition to establishing new ones.

\subsubsection{Connection to thermodynamics} \label{sec:connect-thermo}
\cite{Baez11} introduced an interesting connection between the $\alpha$-R\'enyi entropy and free energies in thermodynamics. In this section, we recall this relationship, and remark on the resulting consequences of our continuity bounds for R\'enyi entropies of Gibbs states and their connection to changes in free energy.

The following holds in either a quantum or classical picture. We will work in quantum notation for consistency with \cite{Baez11}. Consider a Gibbs state
\[
\rho(T) = Z(T)\inv\e^{-H/T}
\]
where $H$ is the Hamiltonian, $T \geq 0$ the temperature, $Z(T) = \tr(\e^{-H/T})$ is the partition function, and we have set Boltzmann's constant $k_\text{B}\equiv 1$. We can define the free energy as
\[
F(T) = - T \ln Z(T).
\]
By direct calculation, we find that the $\alpha$-R\'enyi entropy $H_\alpha$ satisfies
\begin{equation} \label{eq:Renyi-entropy-as-free-energies}
    H_{\frac{T_0}{T}}(\rho(T_0)) = -\frac{F(T)- F(T_0)}{T-T_0}.
\end{equation}
for any $T > 0$ \cite[Equation (9)]{Baez11}. In the limit $T\to T_0$, we recover the thermodynamic relation
\[
 H(\rho(T_0)) = - \left.\frac{\d F}{\d T}\right|_{T_0}
 \]
 that the entropy is the derivative of the free energy with respect to temperature. Note that any full-rank state $\sigma$ can be seen as a Gibbs state at temperature $T$ associated to the Hamiltonian $H = -\frac{1}{T}\log \sigma$. This gives a physical interpretation to $H_\alpha(\sigma)$ for any full-rank state $\sigma$: consider $\sigma$ to be the Gibbs state at initial temperature $T_0=1$. Then $H_\alpha(\sigma)$ equals the negative of the ratio of the change in free energy to the change in temperature when the temperature is changed from $T_0$ to $\alpha\inv T_0$. This can be seen as the maximum amount of work the system, initially in thermal equilibrium at temperature $T_0$, can do when its temperature is suddenly changed from $T_0$ to $\alpha\inv T_0$ as it moves to the new thermal equilibrium, divided by the change in temperature \cite{Baez11}.

The Schur concavity of $H_\alpha$ for all $\alpha > 0$ can be interpreted through this physical picture as well. The relation $\rho \prec \sigma$ means that the distribution of the eigenvalues of $\rho$ is ``flatter'' and ``more disordered'' than those of $\sigma$; correspondingly, $H_\rho := - \frac{1}{T}\log \rho$ log-majorizes $H_\sigma := - \frac{1}{T}\log \sigma$, where log majorization is defined by $A \prec_{\log} B$ if $\log A \prec \log B$. By Schur concavity, if $\rho \prec \sigma$, then $H_\alpha(\rho)\geq H_\alpha(\sigma)$ for any $\alpha > 0$. Hence, the Schur concavity of the $\alpha$-R\'enyi entropy can be interpreted as a statement about how the distribution of energy levels of a Hamiltonian relates to the free-energy increase or decrease of the system (per unit change in temperature) under a sudden change in temperature.

Now, recall that the trace distance is endowed with an operational interpretation in terms of distinguishability under measurement. We say that $\rho$ and $\sigma$ are $\eps$-indistinguishable if $\frac{1}{2}\|\rho-\sigma\|_1 \leq \eps$. Consider an experiment in which the system is in a state $\sigma$ which is not known precisely, but is $\eps$-indistinguishable from a known state $\rho$, which is a Gibbs state, $\rho = \rho(T_0)$. We pose the following question:

\smallskip
If the temperature is abruptly changed from $T_0\to T$, can one bound the ratio of the change in free energy and the corresponding change in temperature?
\smallskip

This quantity is exactly the $\alpha$-R\'enyi entropy of order $\alpha = T_0/T$ by \eqref{eq:Renyi-entropy-as-free-energies}, and hence \eqref{eq:uniform_Renyi_bound} and \Cref{cor:Renyi-lipschitz} provide an answer in the affirmative. Moreover, \Cref{cor:Renyi-lipschitz} shows that the $\alpha$-R\'enyi entropy is Lipschitz continuous if and only if $\alpha > 1$. In other words, if $T_0 > T$, then there exists $k_{T_0/T} < \infty$ such that
\[
|H_{T_0/T}(\sigma) - H_{T_0/T}(\rho(T_0))| \leq \eps k_{T_0/T}.
\]
If $T_0 \leq T$, then no such linear bound can hold uniformly in $\sigma$ and $\rho(T_0)$, but \eqref{eq:hphi-uniform-bound} gives a tight uniform (nonlinear) bound. 

\subsection{Continuity bounds for other entropies} \label{sec:more-cty-bounds}

\begin{itemize}
	
	\item The von Neumann entropy
	\[
	 H(\rho) = -\tr (\rho \log \rho).
	 \] 
	 $H = H_{(h,\phi)}$ for $h(x) = x$ and $\phi(x) = -x \log x$, which is a \typeone{} $(h,\phi)$-entropy. The von Neumann entropy satisfies the following tight continuity bound: Given $\eps\in(0,1]$ and $\rho,\sigma\in \cD(\cH)$ with $T(\rho,\sigma)\leq \eps$, 
	 \begin{equation}
 |H(\rho) - H(\sigma) | \leq \begin{cases}
 \epsilon \log (d-1) + h(\epsilon) & \text{if } \epsilon < 1 - \tfrac{1}{d} \\
	 \log d & \text{if } \epsilon \geq 1 - \tfrac{1}{d}
 \end{cases} \label{eq:Audenaert-Fannes_bound}
 \end{equation}
where $h(\eps) := - \eps \log \eps - (1-\eps) \log (1-\eps)$ denotes the binary entropy. This inequality is known in the quantum information theory literature as the Audenaert-Fannes bound, which is a strengthened version of the Fannes bound \cite{Fannes1973} and was established in \cite{Audenaert07} by a direct optimization argument. The bound was also proven via a coupling argument in \cite[Theorem 3.8]{PetzQITbook}, with credit to Csiszar. In the classical case, it was also proven by \cite[Eq. (4)]{Zha07} by a coupling argument, and by Ho and Yeung \cite[Theorem 6]{HY10} via an optimization over local continuity bounds, similar in spirit to the techniques used in this paper (although with an analysis specific to the Shannon entropy, as opposed to \typeone{} $(h,\phi)$-entropies more generally). The coupling argument was revisited in \cite[Theorem 3]{Sas13a} and \cite[Lemma 1]{Winter16}. In each case (except for \cite[Theorem 3.8]{PetzQITbook}), it was shown that equality occurs if one state is pure, and the other state has the spectrum $\{1-\eps,\frac{\eps}{d-1},\dotsc,\frac{\eps}{d-1}\}$. These conditions were shown to be necessary in \cite{HD17} by an analysis of the coupling argument. 

\textbf{Our contribution} \Cref{thm:hphi-GCB} provides an alternate proof for \eqref{eq:Audenaert-Fannes_bound} and the necessary and sufficient conditions for equality as a consequence of the fact that the von Neumann entropy is a \typeone{} $(h,\phi)$-entropy, and \Cref{prop:typ1-Lip} shows that $H$ is not Lipschitz continuous with respect to the trace distance, but its smoothed variant $H^\delta$ is Lipschitz continuous for any $\delta \in (0,1)$, with an optimal Lipschitz constant of
\[
\log ( \delta\inv - 1) + \log(d-1).
\]

\item The $(s,\alpha)$-unified entropies,
	\[
	E_\alpha^s(\rho) = \frac{1}{s(1-\alpha)} (\tr[\rho^\alpha]^s - 1)
	\]
	for $\alpha\in (0,1)\cup (1,\infty)$, $s\in \mathbb{R}\setminus\{0\}$, were introduced in the quantum case by \cite{HY06} and in the classical case by \cite{RT91}. This family of entropies includes the Tsallis entropies in the case $s=1$, and the $\alpha$-R\'enyi entropies (up to a factor of $\ln(2)$) in the limit $s\to 0$. We have $E_\alpha^s = H_{(h,\phi)}$ for $\phi(x) = x^\alpha$, and $h(x) = \frac{1}{s(1-\alpha)}(x^s - 1)$, which satisfy $\phi(0)=0$, and $h(\phi(1))=0$. If $\alpha\in (0,1)$, $h$ is strictly increasing and $\phi$ is strictly concave, while if $\alpha >1$, $h$ is strictly decreasing, and $\phi$ is strictly convex. Additionally, $h$ is convex if ($s>1$ and $\alpha <1$) or if ($s<1$ and $\alpha >1$), and is concave otherwise. Thus, if $0<\alpha < 1$ and $s\leq 1$, $E_\alpha^s$ is a \typeone{} $(h,\phi)$-entropy, and if $\alpha > 1$ with $s\leq 1$, then $E_\alpha^s$ is a \typetwo{} $(h,\phi)$-entropy. 
	If $s>1$, then for any $\alpha\in (0,1)\cup(1,\infty)$,  $E_\alpha^s$ is an $(h,\phi)$-entropy in the sense defined by \cite{SMMP93a}, but not of \typeone{} or \typetwo{}, and hence the results of \Cref{sec:cty-from-majflow} do not apply in that case.
	\begin{remark}
	 \cite[Proposition 5]{HY06} incorrectly claims that the unified entropies are not Schur concave. However, they are indeed strictly Schur concave for all $\alpha\in (0,1)\cup (1,\infty)$ and $s\in \mathbb{R}\setminus\{0\}$.
	\end{remark}
	Rastegin \cite{Ras2011} showed that for $0<\alpha<1$, $s\in (-\infty,-1]\cup[0,1]$, and $T(\rho,\sigma) \leq \frac{1}{2}\alpha^{\frac{1}{1-\alpha}}$, the bound
	\begin{equation}
	|E_\alpha^s(\rho) - E_\alpha^s(\sigma)| \leq (2 \eps)^\alpha \ln_\alpha d + n_\alpha(2\eps)
	\end{equation}
	holds, where $\ln_\alpha = \frac{x^{1-\alpha} - 1}{1-\alpha}$ and $n_\alpha(x) = \frac{x^\alpha - x }{1-\alpha}$.
    If $\alpha > 1$ and $s\in[-1,0]\cup [1,+\infty]$,
	\begin{equation}
	|E_\alpha^s(\rho) - E_\alpha^s(\sigma)| \leq \chi_s [ \eps^\alpha \ln_\alpha(d-1) + t_\alpha(\eps) ]
	\end{equation}
	where $t_\alpha(\eps) := T_\alpha(\{\eps, 1-\eps\})$ is the binary Tsallis entropy. In \cite[Proposition 6]{HY06}, the Lipschitz continuity bound
\begin{equation}
|E_\alpha^s(\rho) - E_\alpha^s(\sigma)| \leq \frac{\alpha}{\alpha-1} \eps
\end{equation}
for any $\rho,\sigma\in \mathcal{D}(\cH)$ with $T(\rho,\sigma)\leq \eps$ for $\alpha > 1$, and $s\geq 1$ was proven.

\textbf{Our contribution:}

For $\alpha\in (0,1)$ and $s\leq 1$, then $E_\alpha^s$ is not Lipschitz continuous on $\cD(\cH)$ by \Cref{prop:typ1-Lip}, but satisfies the following tight uniform continuity bound by \Cref{thm:hphi-GCB}: If $\eps\in [0,1]$ and $\rho,\sigma\in \mathcal{D}$ with $T(\rho,\sigma)\leq \eps$,
\begin{equation}
|E_\alpha^s(\rho) - E_\alpha^s(\sigma)| \leq \begin{cases}\frac{1}{s(1-\alpha)}
\left[\big( (1-\eps)^\alpha + (d-1)^{1-\alpha}\eps^\alpha \big)^s - 1\right] & \eps < 1-\frac{1}{d}\\
\frac{1}{s(1-\alpha)}\left[d^{s(1-\alpha)}-1\right] & \eps \geq 1 - \frac{1}{d}
\end{cases}
\end{equation}
 with equality if and only if one state is pure, and the other state has spectrum $\{1-\eps, \frac{\eps}{d-1}, \dotsc, \frac{\eps}{d-1}\}$ if $\eps < 1-\frac{1}{d}$, or $\{\frac{1}{d},\dotsc,\frac{1}{d}\}$ if $\eps \geq 1 - \frac{1}{d}$.

If $\alpha>1$ and $s\leq 1$, then $E_\alpha^s$ is Lipschitz continuous on $\cD(\cH)$ by \Cref{thm:Convex-type-Lip}, and the associated optimal Lipschitz constant $k_{\alpha}^s$ satisfies
\[
k_\alpha^s \leq \begin{cases}
\frac{\alpha}{\alpha-1}d^{1-\alpha s} &s \alpha < 1\\
\frac{\alpha}{\alpha-1} & s\alpha \geq 1.
\end{cases}
\]

\item Entropies induced by divergences. Denoting left multiplication by an operator $A$ as $L_A$, and right multiplication by $A$ as $R_A$, one defines the $f$-divergence
\begin{equation}
S_f(\rho\|\sigma) := \tr[ \sigma^{1/2} f(k_\rho R_{\sigma\inv}) (\sigma^{1/2})]
\end{equation}
which was first introduced by Petz \cite{Pet85} (see \cite{HM17a} for a useful overview). The \emph{maximal $f$-divergence} \cite{PR98} is given by
\begin{equation}
\hat S_f(\rho\|\sigma) := \tr[\sigma^{1/2} f(\sigma^{-1/2}\rho \sigma^{-1/2}) \sigma^{1/2}].
\end{equation}

From either divergence, one can define an associated entropy by evaluating at $\sigma=\one$ (and reversing the sign). The two entropies coincide, yielding
\begin{equation}
S_f(\rho) := - S_f(\rho\|\one) = - \hat S_f(\rho\||\one) = - \tr [ f(\rho)].
\end{equation}
For strictly convex $f$ with $f(0)=f(1)=0$ we can define $\phi = -f$ and $h(x) = x$, yielding a \typeone{} $(h,\phi)$-entropy.

\textbf{Our contribution:}
For strictly convex $f$ with $f(0)=f(1)=0$, \Cref{thm:hphi-GCB} gives that for $\eps\in [0,1]$ and $\rho,\sigma\in \cD(\cH)$ with $T(\rho,\sigma)\leq \eps$,
\begin{equation} \label{eq:unif-bound-f-div}
 | S_f(\rho) - S_f(\sigma) | \leq \begin{cases}
 -f(1-\eps) - (d-1) f( \frac{\eps}{d-1}) & \eps < 1-\frac{1}{d}\\
-d f(\frac{1}{d}) & \eps \geq 1 - \frac{1}{d}.
\end{cases} 
\end{equation}
\Cref{prop:typ1-Lip} shows that $S_f$ is Lipschitz continuous on $\cD(\cH)$ if and only if
\[
k := \lim_{\eps\to 0}-\frac{1}{\eps}f(1-\eps) - (d-1) \frac{1}{\eps}f( \frac{\eps}{d-1})
\]
is finite. In the latter case, $k$ is the optimal Lipschitz constant for $S_f$. Note that if $f$ is differentiable at $0$ and $1$, then $k = f'(1) - f'(0)$.
\item The \emph{concurrence} of a bipartite pure state $\psi_{AB}$ is an entanglement monotone defined as
\[
C(\psi_{AB}) = \sqrt{2(1 - \tr[\rho_{A}^2])}
\]
where $\rho^\psi = \tr_B[\psi_{AB}]$ is the reduced state on system $A$ \cite{Woo01,RBC+01}. Regarded as a function of the reduced state, the concurrence can be seen as \typeone{} $(h,\phi)$-entropy with $\phi(x) = -x^2$ and $h(x) = \sqrt{2(1+x)}$, and hence \Cref{thm:hphi-GCB} gives a tight uniform continuity bound in terms of the trace distance between the reduced states. If for some $\eps\in [0,1]$, two bipartite pure states $\psi_{AB}$ and $\phi_{AB}$ satisfy $T(\psi_{AB},\phi_{AB})\leq \eps$, then by monotonicity of the trace distance under partial trace, $T(\rho^\psi, \rho^\phi) \leq \eps$ as well. Hence, \Cref{thm:hphi-GCB} yields
\begin{equation}
 |C(\psi_{AB}) - C(\phi_{AB})| \leq \begin{cases}
 \sqrt{2( 1 - (1-\eps)^2 - (d-1)^{-1}\eps^2 )} & \eps < 1 - \frac{1}{d}\\
\sqrt{2(1-d\inv)} & \eps \geq 1 - \frac{1}{d}
 \end{cases}
\end{equation}
for all bipartite pure states $\psi_{AB}$ and $\phi_{AB}$ such that $T(\psi_{AB},\phi_{AB}) \leq \eps$. The concurrence is not Lipschitz continuous, by \Cref{prop:typ1-Lip}.
\end{itemize}

\section{A continuity bound for the number of distinct observations from $N$ independent trials} \label{sec:butterflies}
In this section, we consider another application of the majorization flow to continuity bounds.

Consider an experiment in which outcome $i \in \{1,\dotsc,M\}$ is observed with probability $p_i$, for some probability distribution $p\in \cP$. Repeat this experiment $N$ times, independently, and consider the random variable $K$ which denotes the number of distinct outcomes observed. In \cite{WY73} it was shown that each entry of the cumulative distribution, namely
\[
F_j \equiv F_j(p) := \Pr[ K \leq j],
\]
is a Schur convex function of $p$ (Theorem 4.1 of the above article), and has the expression
\[
F_j = \sum_{i=1}^j (-1)^{j-i}{M - i - 1 \choose j - i} \sum_{1 \leq l_1 < l_2 < \dotsm < l_i \leq M} (p_{l_1} + \dotsm + p_{l_i})^N
\]
which is given in \cite[Corollary 3.2]{WY73}. Taking the derivative, we obtain
\[
\frac{\partial F_j}{\partial p_1} - \frac{\partial F_j}{\partial p_2} = f_j(p_1) - f_j(p_2)
\]
for
\[
f_j(s)  =  N \sum_{i=1}^j (-1)^{j-i}{M - i - 1 \choose j - i} \sum_{\substack{1 < l_2 < \dotsm < l_i \leq M\\ l_2,\dotsc,l_i \neq 1,2}} (s + p_{l_2} + \dotsm + p_{l_i})^{N-1}
\]
as was calculated in \cite[Equation 11]{WY73}. To compute the optimal Lipschitz constant for $F_j$, it remains to maximize this difference over $p\in \cP$. We leave that for future work, and simply show that $F_j$ is a Lipschitz continuous function of $p\in \cP$ by showing that $f_j(p_1) - f_j(p_2) < \infty$ for any $p\in \cP$. We can use that the summand $(s + p_{l_2} + \dotsm + p_{l_i})$ is less than $1$ for $s\in \{p_1, p_2\}$, and that the number of elements in the second summation is ${M-2 \choose i}$ to find the simple bound
\[
|f_j(s)| \leq N \sum_{i=1}^j {M - i - 1 \choose j - i} {M-2 \choose i} < \infty
\]
for $s\in \{p_1, p_2\}$, which completes the proof.

Next, \cite[Corollary 3.3]{WY73} shows that the expected number of distinct elements, $\bE[K]$ satisfies
\[
\bE[K] \equiv \bE_p[K] = M- \sum_{i=1}^M (1 - p_i)^N.
\]
In fact, we can identify $\bE[K]-1$ as a \typeone{} $(h,\phi)$-entropy, with $h(x) = x-1$ and $\phi(x) = 1 - (1-x)^N$, as defined in \Cref{sec:cty-from-majflow}. Hence, \Cref{cor:tight-uniform-bounds-concave-type} shows that for $\eps >0$ if $p,q \in \cP$ satisfy $\TV(p,q) \leq \eps$, then
\[
|\bE_p[K] - \bE_q[K]| \leq \begin{cases}
\frac{(M-1)^N - (M-1-\varepsilon)^N}{(M-1)^{N-1}} - \varepsilon^N & \varepsilon \leq 1 - \frac{1}{M} \\ 
\frac{M^N - (M-1)^N}{M^{N-1}}-1 & \eps > 1 - \frac{1}{M}.
\end{cases}
\]
In particular, using \Cref{prop:typ1-Lip},
\begin{equation}\label{eq:trials-Lip}
|\bE_p[K] - \bE_q[K]| \leq \varepsilon N,
\end{equation}
and $N$ is the optimal Lipschitz constant. Note that \eqref{eq:trials-Lip} does not depend on $M$, but its derivation assumes $M < \infty$.

\paragraph{Acknowledgements}
The authors would like to thank Koenraad Audenaert for interesting discussions regarding bounding the Lipschitz constant for the $\alpha$-R\'enyi entropy in the case $\alpha > 1$. E.H. would like to thank George Hanson for introducing him to the notion of concurrence.
E.H. is supported by the Cantab Capital Institute for the Mathematics of Information (CCIMI).

\appendix

\section{Optimal Lipschitz constant for the collision entropy} \label{sec:H2}

It remains to calculate the right-hand side of \eqref{eq:Convex-type-Lip-d-gt-2} in the case that $h(x) = -\log x$ and $\phi(x) = x^2$.

\begin{itemize}

\item In the case $d=2$, $\alpha\in (1,2]$, we aim to maximize the function 
\[
h(x,y) = \frac{y^{\alpha-1} - x^{\alpha-1}}{x^\alpha + y^\alpha}
\] 
where $y=1-x$. For $\alpha\in (1,2]$, we have the bound
\begin{align*}	
h(x,y) \leq 1 &\iff y^{\alpha-1} - x^{\alpha-1} \leq y^\alpha + x^\alpha\\
&\iff y^{\alpha-1} - y^{\alpha} \leq x^{\alpha-1} + x^\alpha\\
&\iff y^{\alpha-1}(1-y) \leq x^{\alpha-1}(1-x)\\
&\iff y^{\alpha-1}x \leq x^{\alpha-1}y\\
&\iff y^{\alpha-2}\leq x^{\alpha-2}
\end{align*}
using $y=1-x$ and $x=1-y$. We find $h(x,y) \leq 1$ since $x\leq y$ and $\alpha\in (1,2]$. On the other hand, $h(0,1) = 1$ for all $\alpha$. 

\item For $d>2$, and $\alpha = 2$, we consider
\[
h(x,y) = \frac{y - x}{x^2 + y^2 + (d-2)z^2}
\]
where $z \equiv z(x,y) = \frac{1-x-y}{d-2}$.
Then for $D=x^2 + y^2 + (d-2)z^2$,
\begin{align*}	
D^2 \frac{\d}{\d x} h(x,y) &= -(x^2 + y^2 + (d-2)z^2) - (y-x)(2x - 2z)\\ 
&=-(x^2 + y^2 + (d-2)z^2) +2 (y-x)(z-x)
\end{align*}
so $\frac{\d}{\d x} h(x,y)\leq 0$ iff
\begin{gather*}
2 (y-x)(z-x) \leq x^2 + y^2 + (d-2)z^2\\
2(yz - xz + x^2 - xy)\leq x^2 + y^2 + (d-2)z^2\\
2(yz - xz - xy)\leq -x^2 + y^2 + (d-2)z^2\\
\end{gather*}
We have $2yz \leq y^2 + d^2 \leq y^2 + (d-2)z^2$, so it remains to show $-2(xz + xy) \leq - x^2$. But that follows from
\[
x^2 \leq 4x^2 \leq 2(xz + xy)
\]
using $x\leq y$ and $x\leq z$. Thus, for any $y \geq x$, such that $x\leq z \leq y$, $h(x,y)$ is decreasing in $x$.

Hence, we consider $x=0$. Then $z = \frac{1-y}{d-2}$, and $y$ is constrained only by $y\in[\frac{1}{d-1}, 1]$. Then
\[
h(0,y) = \frac{y}{y^2 + (d-2)\inv (1-y)^2} = \frac{y (d-2)}{ (d-2)y^2 + 1 - 2y + y^2} = \frac{(d-2)y}{(d-1)y^2 + 1 - 2y}.
\]
We have
\[
((d-1)y^2 + 1 - 2y)^2\partial_y h(0,y) = ((d-1)y^2 + 1 - 2y)(d-2) - (d-2)y ( 2(d-1)y - 2)
\]
so $\partial_y h(0,y) = 0$ if and only if
\begin{gather*}
(d-1)y^2 + 1 - 2y = y ( 2(d-1)y - 2)\\
(d-1)y^2 + 1 - 2y =  2(d-1)y^2 - 2y\\
 1  =  (d-1)y^2 \\
y = \frac{1}{\sqrt{d-1}}.
\end{gather*}
Note that with $x=0$, $y = \frac{1}{\sqrt{d-1}}$, we have $0\leq x\leq z\leq y\leq 1$ so the constraints are satisfied. It could be that this choice of $y$ yields only a local maximum. To rule this case out, since $y \in [ \frac{1}{d-1}, 1]$, we check $h(0, \frac{1}{d-1})  = h(0, 1) = 1$, while
\[
h( 0, (d-1)^{-1/2}) = \frac{d-2}{2\sqrt{d-1}-2} > 1
\]
for $d \geq 3$.
\end{itemize}

\section{An elementary property of concave functions} \label{sec:elem-prop-concave-fun}

Given a function $\phi: I\to \R$ defined on an interval $I\subset \R$, we define the ``slope function,''
\[
\cs(x_1,x_2) =  \frac{\phi(x_2)- \phi(x_1)}{x_2-x_1}
\]
for $x_1,x_2\in I$ with $x_1\neq x_2$.
Note that $\cs$ is symmetric in its arguments. It can be shown that $\phi$ is concave (resp.~strictly concave) if and only if $\cs$ is monotone decreasing (resp.~strictly decreasing) in each argument.
\begin{proposition} \label{cor:main_concave_ieq}
Let $I\subset \R$ be an interval and $\phi: I \to \R$ be concave. For any $x_1,x_2,y_1,y_2\in I$ such that $x_1\neq x_2$, $y_1\neq y_2$, $x_1 \leq y_1$ and $x_2 \leq y_2$
we have
\[
\frac{\phi(x_2)- \phi(x_1)}{x_2-x_1}\geq \frac{\phi(y_2)- \phi(y_1)}{y_2-y_1}.
\]
If $\phi$ is strictly concave, then equality is achieved if and only if $x_1 = y_1$ and $x_2 = y_2$.
\end{proposition}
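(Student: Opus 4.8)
The plan is to exploit the slope function $\cs$ introduced just before the statement, together with the characterization recalled there that concavity of $\phi$ is equivalent to $\cs$ being monotone decreasing in each of its two arguments (strictly decreasing when $\phi$ is strictly concave). In this language the asserted inequality is simply $\cs(x_1,x_2) \ge \cs(y_1,y_2)$ whenever $x_1 \le y_1$ and $x_2 \le y_2$, which I would deduce by increasing the two arguments one at a time rather than both at once.

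Concretely, I would pass from $(x_1,x_2)$ to the intermediate pair $(y_1,x_2)$ using monotonicity in the first argument (since $x_1 \le y_1$), and then from $(y_1,x_2)$ to $(y_1,y_2)$ using monotonicity in the second argument (since $x_2 \le y_2$). Chaining the two inequalities yields $\cs(x_1,x_2) \ge \cs(y_1,x_2) \ge \cs(y_1,y_2)$, which is exactly the claim.

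The one subtlety, and the step I expect to be the main obstacle, is that $\cs$ is defined only on pairs of distinct points, so the intermediate pair $(y_1,x_2)$ is admissible only when $y_1 \neq x_2$, and this can genuinely fail. To handle it I would observe that the two candidate intermediate pairs $(y_1,x_2)$ and $(x_1,y_2)$ cannot both be degenerate: if $y_1 = x_2$ and $x_1 = y_2$, then the hypotheses give $x_1 \le y_1 = x_2 \le y_2 = x_1$, forcing $x_1 = x_2$ and contradicting $x_1 \neq x_2$. Hence at least one of the two ``one argument at a time'' paths has both intermediate coordinates distinct, and I would run the monotonicity argument along whichever path avoids the collision; the two paths are symmetric, so the resulting inequality is the same.

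Finally, for the strict statement I would invoke strict monotonicity of $\cs$ in each argument. Along the chosen admissible path, equality in the overall inequality forces equality at each of the two one-step inequalities; by strict monotonicity the first step is an equality only if $x_1 = y_1$ and the second only if $x_2 = y_2$, which establishes the stated equality condition. The converse is immediate, since $x_1 = y_1$ and $x_2 = y_2$ make both sides literally equal.
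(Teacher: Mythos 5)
Your proof is correct and takes essentially the same route as the paper's: chaining monotonicity of the slope function $\cs$ one argument at a time to get $\cs(x_1,x_2)\geq\cs(y_1,x_2)\geq\cs(y_1,y_2)$, with strict monotonicity giving the equality conditions. In fact you are slightly more careful than the paper, whose proof silently assumes the intermediate pair $(y_1,x_2)$ has distinct coordinates; your observation that at most one of the two intermediate pairs $(y_1,x_2)$, $(x_1,y_2)$ can be degenerate fills that small gap (and, as a typo-level remark, the paper attaches the equality conditions to the wrong steps, whereas your version assigns them correctly).
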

\begin{proof}	
For $\phi$ concave, we have $\cs(x_1,x_2) \geq \cs(y_1,x_2) \geq \cs(y_1,y_2)$. Next, assume $\phi$ is strictly concave. Then equality holds in the first inequality if and only if $x_2 = y_2$, and in the second if and only if $x_1=y_1$, completing the proof.
\end{proof}

\printbibliography
\end{document}